\newcommand\Algphase[1]{%
\vspace*{-.7\baselineskip}\Statex\hspace*{\dimexpr-\algorithmicindent-2pt\relax}\rule{\textwidth}{0.4pt}%
\Statex\hspace*{-\algorithmicindent}\textbf{#1}%
\vspace*{-.7\baselineskip}\Statex\hspace*{\dimexpr-\algorithmicindent-2pt\relax}\rule{\textwidth}{0.4pt}%
}
\newcommand\Algphaze[1]{%
	\Statex\hspace*{-\algorithmicindent}\textbf{#1}%
	\vspace*{-.7\baselineskip}\Statex\hspace*{\dimexpr-\algorithmicindent-2pt\relax}\rule{\textwidth}{0.4pt}%
}
\definecolor{darkgreen}{rgb}{0,0.7,0}
\newcommand{\kibitz}[2]{\ifnum\Comments=1\textcolor{#1}{#2}\fi}
\newcommand{\floor}[1]{\lfloor #1 \rfloor}
\renewcommand{\vec}[1]{\mathbf{#1}}
\newtheorem{definition}{Definition}
\newtheorem{problem}{Problem}
\newtheorem{example}{Example}
\newtheorem{theorem}{Theorem}
\newtheorem{lemma}{Lemma}
\newtheorem{corollary}{Corollary}
\newenvironment{proof}{\paragraph{Proof:}}{\hfill$\square$}
\DeclareMathOperator*{\argmax}{\arg\!\max}
\begin{document}


\title{Walrasian Pricing in Multi-unit Auctions\footnote{Simina Br\^{a}nzei 
		was supported by the ISF grant 1435/14 administered by the Israeli Academy of Sciences and Israel-USA Bi-national Science Foundation (BSF) grant  2014389 and the I-CORE Program of the Planning and Budgeting Committee and The Israel Science Foundation.
		This project has received funding from the European Research Council (ERC) under the European Union’s
		Horizon 2020 research and innovation programme (grant agreement No 740282). 
		Aris Filos-Ratsikas was supported by the ERC Advanced Grant 321171 (ALGAME). The authors also aknowledge support from
		the Danish National
		Research Foundation and The National Science Foundation of China
		(under the grant
		61361136003) for the Sino-Danish Center for the Theory of Interactive Computation and
		from the Center for Research in Foundations of Electronic Markets (CFEM), supported by
		the Danish Strategic Research Council. A part of this work was done when Simina was visiting the Simons Institute for the Theory of Computing.}}

\author{Simina Br\^anzei \footnote{
Hebrew University of Jerusalem, Israel. E-mail: \textcolor{blue}{\href{mailto:simina.branzei@gmail.com}{simina.branzei@gmail.com}.}}
\and
Aris Filos-Ratsikas \footnote{
University of Oxford, United Kingdom. E-mail: { \textcolor{blue}{\href{mailto:aris.filos-ratsikas@cs.ox.ac.uk}{aris.filos-ratsikas@cs.ox.ac.uk}}}}
\\
\and
Peter Bro Miltersen \footnote{
Aarhus University, Denmark. E-mail: { \textcolor{blue}{\href{mailto:bromille@cs.au.dk}{bromille@cs.au.dk@cs.au.dk}}}}
\and
Yulong Zeng \footnote{
Tsinghua University, China. E-mail: { \textcolor{blue}{\href{mailto:cengyl13@mails.tsinghua.edu.cn}{cengyl13@mails.tsinghua.edu.cn}}}}
}
\date{}

\maketitle

\begin{abstract}
Multi-unit auctions are a paradigmatic model, where a seller brings multiple units of a good,
while several buyers bring monetary endowments. It is well known that Walrasian equilibria do not always exist in this model, however compelling relaxations such as Walrasian envy-free pricing do. In this paper we design an optimal envy-free mechanism for multi-unit auctions with budgets. When the market
is even mildly competitive, the approximation ratios of this mechanism
are small constants for both the revenue and welfare objectives, and in
fact for welfare the approximation converges to 1 as the market becomes
fully competitive. We also give an impossibility theorem, showing that
truthfulness requires discarding resources and, in particular, is incompatible with (Pareto) efficiency.
\end{abstract}


\section{Introduction} \label{sec:intro}

Auctions are procedures for allocating goods that have been studied in economics in the 20th century, and which are even more relevant now due to the 
emergence of online platforms. Major companies such as Google and Facebook make most of their revenue through auctions, while an increasing number of governments around the world use spectrum auctions to allocate licenses for electromagnetic spectrum to companies. These transactions involve hundreds or thousands of participants with complex preferences, reason for which auctions require more careful design and their study has resurfaced in the computer science literature. 

In this paper we study a paradigmatic model known as multi-unit auctions with budgets, 
in which a seller brings multiple units of a good (e.g. apples), while the buyers bring money and have interests in consuming the goods. Multi-unit auctions have been studied in a large body of literature due to the importance of the model, which already illustrates complex phenomena \cite{dobzinski2012multi,borgs2005multi,dobzinski2007mechanisms,dobzinski2014efficiency,dobzinski2015multi}.

The main requirements from a good auction mechanism are usually computational efficiency, revenue maximization for the seller, and simplicity of use for the participants, the latter of which is captured through the notion of truthfulness. An important property that is often missing from auction design is fairness, and in fact for the purpose of maximizing revenue it is useful to impose higher payments to the buyers that are more interested in the goods. However, there are studies showing that customers are unhappy with such discriminatory prices (see, e.g., \cite{anderson2008}), which has lead to a body of literature focused on achieving fair pricing \cite{guruswami2005profit,feldman2012revenue,cohen2016invisible,feldman2015combinatorial,sekar2016posted}.


A remarkable solution concept that has been used for achieving fairness in auctions comes from free markets, which are economic systems where the prices and allocations are not designed by a central authority. Instead, the prices emerge through a process of adjusting demand and supply such that everyone faces the same prices and the buyers freely purchase the bundles they are most interested in. When the goods are divisible, an outcome where supply and demand are perfectly balanced---known as competitive (or Walrasian) equilibrium \cite{Walras74} ---always exists under mild assumptions on the utilities and has the property that the participants face the same prices and can freely acquire their favorite bundle at those prices. The competitive equilibrium models outcomes of large economies, where the goods are divisible and the participants  so small (infinitesimal) that they have no influence on the market beyond purchasing their most preferred bundle at the current prices. Unfortunately, when the goods are indivisible, the competitive equilibrium does not necessarily exist (except for small classes of valuations see, e.g., \cite{KC82,GS99}) and the induced mechanism -- the Walrasian mechanism \cite{babaioff2014efficiency,cheung2008approximation} -- is generally manipulable.


A solution for recovering the attractive properties of the Walrasian equilibrium in the multi-unit model is to relax the clearing requirement of the market equilibrium, by allowing the seller to not sell all of the units. This solution is known as (Walrasian) envy-free pricing \cite{guruswami2005profit}, and it ensures that all the participants of the market face the same prices\footnote{The term envy-free pricing has also been used when the pricing is per-bundle, not per-item. We adopt the original definition of \cite{guruswami2005profit} which applies to unit-pricing, due to its attractive fairness properties \cite{feldman2012revenue}.}, and each one purchases their favorite bundle of goods. An envy-free pricing trivially exists by pricing the goods infinitely high, so the challenge is finding one with good guarantees, such as high revenue for the seller or high welfare for the participants. 

We would like to obtain envy-free pricing mechanisms that work well with strategic participants, who may alter their inputs to the mechanism to get better outcomes. To this end, we design an optimal truthful and envy-free mechanism for multi-unit auctions with budgets, with high revenue and welfare in competitive environments. Our work can be viewed as part of a general research agenda of \emph{simplicity in mechanism design} \cite{hartline2009simple}, which recently proposed item pricing \cite{balcan2008item,feldman2015combinatorial} as a way of designing simpler auctions while at the same time avoiding the ill effects of discriminatory pricing \cite{feldman2012revenue,anderson2008}. Item pricing is used in practice all over the world to sell goods in supermarkets or online platforms such as Amazon, which provides a strong motivation for understanding it theoretically.

\subsection{Model and Results}

Our model is a multi-unit auction with budgets, in which a seller owns $m$ identical units of an item. Each buyer $i$ has a budget $B_i$ and a value $v_i$ per unit. The utilities of the buyers are quasi-linear up to the budget cap, while any allocation that exceeds that cap is unfeasible. 

We deal with the problem of designing envy-free pricing schemes for the strongest concept of incentive compatibility, namely dominant strategy truthfulness. The truthful mechanisms are in the \emph{prior-free} setting, i.e. they do not require any prior distribution assumptions. We evaluate the efficiency of mechanisms using the notion of \emph{market share}, $s^*$, which captures the maximum buying power of any individual buyer in the market. Our main contributions can be summarized as follows.
\vspace{2mm}

\noindent \textbf{Main Theorem} (informal) \vspace{0.1mm}\emph{For linear multi-unit auctions with known monetary endowments: 
	\hspace{4mm}	
	\begin{itemize}
		\item There exists no (Walrasian) envy-free mechanism that is both truthful and non-wasteful.
		\item There exists a truthful (Walrasian) envy-free auction, which attains a fraction of at least $\max\left\{2, \frac{1}{1-s^*}\right\}$ of the optimal revenue and at least $1-s^*$ of the optimal welfare on any market, where $0 < s^* < 1$ is the market share. This mechanism is optimal for both the revenue and welfare objectives when the market is even mildly competitive (i.e. with market share $s^* \leq 50\%$), and its approximation for welfare converges to $1$ as the market becomes fully competitive. 
	\end{itemize} 
}
\medskip

In the statement above, optimal means that there is no other truthful envy-free auction mechanism with a better approximation ratio. A mechanism is non-wasteful if it allocates as many units as possible at a given price.
The impossibility theorem implies in particular that truthfulness is incompatible with Pareto efficiency.
Our positive results are for \emph{known} budgets, similarly to \cite{dobzinski2012multi}. In the economics literature budgets are viewed as hard information (quantitative), as opposed to the valuations, which represent soft information and are more difficult to verify (see, e.g., \cite{Petersen}).

\bigskip

We also provide several computational results: a polynomial time algorithm for computing a welfare maximizing envy-free pricing, and an FPTAS and an exact algorithm (which runs in polynomial-time for a constant number of types of buyers) for computing a revenue-maximizing envy-free pricing (Theorem \ref{thm:max_revenue} and \ref{thm:max_revenue_constant} in the appendix). Our FPTAS for revenue improves upon the results in~\cite{feldman2012revenue}, which had previously provided a 2-approximation algorithm. 

Finally, in the general multi-unit model, we show hardness of maximizing welfare and revenue and provide an FPTAS 
for both objectives (Theorem \ref{thm:general_hard} and \ref{thm:general_fptas} in the appendix).

\subsection{Related Work}

The multi-unit setting has been studied in a large body of literature on auctions (\cite{dobzinski2012multi,borgs2005multi,dobzinski2007mechanisms,dobzinski2014efficiency,dobzinski2015multi}), where the focus has been on designing truthful auctions with good approximations to some desired objective, such as the social welfare or the revenue. Quite relevant to ours is the paper by \cite{dobzinski2012multi}, in which the authors study multi-unit auctions with budgets, however with no restriction to envy-free pricing or even item-pricing. They design a truthful auction  (that uses discriminatory pricing) for \emph{known budgets}, that achieves near-optimal revenue guarantees when the influence of each buyer in the auction is bounded, using a notion of buyer \emph{dominance}, which is conceptually close to the market share notion that we employ. Their mechanism is based on the concept of clinching auctions \cite{ausubel2004efficient}.

Attempts at good prior-free truthful mechanisms for multi-unit auctions are seemingly impaired by their general impossibility result which states that truthfulness and efficiency are essentially incompatible when the budgets are \emph{private}. Our general impossibility result is very similar in nature, but it is not implied by the results in \cite{dobzinski2012multi} for the following two reasons: (a) our impossibility holds for \emph{known} budgets and (b) our notion of efficiency is weaker, as it is naturally defined with respect to envy-free allocations only. This also means that our impossibility theorem is not implied by their uniqueness result, even for two buyers.
Multi-unit auctions with budgets have also been considered in \cite{dobzinski2014efficiency} and \cite{borgs2005multi}, and without budgets (\cite{dobzinski2015multi,bartal2003incentive,dobzinski2007mechanisms}); all of the aforementioned papers do not consider the envy-freeness constraint.

The effects of strategizing in markets have been studied extensively over the past few years (\cite{borodin2015budgetary,branzei2014fisher,chen2011profitable,mehta2013exchange,mehta2014save}). For more general envy-free auctions, besides the multi-unit case, there has been some work on truthful mechanisms in the literature of envy-free auctions (\cite{guruswami2005profit}) and (\cite{HY11}) for \emph{pair envy-freeness}, a different notion which dictates that no buyer would want to swap its allocation with that of any other buyer \cite{mt16}. It is worth noticing that there is a body of literature that considers envy-free pricing as a purely optimization problem (with no regard to incentives) and provides approximation algorithms and hardness results for maximizing revenue and welfare in different auction settings \cite{feldman2012revenue,colini2014revenue}.

It is worth mentioning that the good approximations achieved by our truthful mechanism are a \emph{prior-free setting} (\cite{hartline2013mechanism}),  i.e. we don't require any assumptions on prior distributions from which the input valuations are drawn. Good prior-free approximations are usually difficult to achieve and a large part of the literature is concerned with auctions under distributional assumptions, under the umbrella of \emph{Bayesian mechanism design} (\cite{DM17,Correa17,DHP17,cai2012optimal,cai2013reducing,hartline2013mechanism,myerson1981optimal}).

\section{Preliminaries}

In a linear multi-unit auction with budgets there is a set of buyers, denoted by $N = \{1, \ldots, n\}$, and a single seller with $m$ indivisible units of a good for sale.
Each buyer $i$ has a valuation $v_i > 0$ and a budget $B_i > 0$, both drawn from a discrete domain $\mathbb{V}$ of rational numbers: $v_i, B_i \in \mathbb{V}$. 
The valuation $v_i$ indicates the value of the buyer for one unit of the good.

An \emph{allocation} is an assignment of units to the buyers denoted by a vector $\vec{x} = (x_1, \ldots, x_n)$ $\in \mathbb{Z}^{n}_{+}$, where $x_i$ is the number of units received by buyer $i$. We are interested in feasible allocations, for which: $\sum_{i=1}^{n} x_i \leq m$.

The seller will set a price $p$ per unit, such that the price of purchasing $\ell$ units is $p \cdot \ell$ for any buyer. The interests of the buyers at a given price are captured by the demand function.

\begin{definition}[Demand] 
	The \emph{demand} of buyer $i$ at a price $p$ is a set consisting of all the possible bundle sizes (number of units) that the buyer would like to purchase at this price:
	
	$$D_i(p)=\begin{cases}
	\min\{\lfloor \frac{B_i}{p} \rfloor,m\}, & \text{if $p < v_i$}\\[4pt]
	0,\ldots,\min\{\lfloor\frac{B_i}{p}\rfloor ,m\},  & \text{if $p = v_i$}\\[4pt]
	\ \ 0, & \text{otherwise}.
	\end{cases}$$
	
	If a buyer is indifferent between buying and not buying at a price, then its demand is a set of all the possible bundles that it can afford, based on its budget constraint.
\end{definition}

\begin{definition}[Utility]
	The \emph{utility} of buyer $i$ given a price $p$ and an allocation $\vec{x}$ is 
	$$u_i(p, x_i)=\begin{cases}
	v_i \cdot x_i - p \cdot x_i, & \text{if $p \cdot x_i \leq B_i$}\\
	- \infty, & \text{otherwise}
	\end{cases}$$
\end{definition}

\noindent \textbf{(Walrasian) Envy-free Pricing}. An allocation and price $(\vec{x},p)$ represent a (Walrasian) \emph{envy-free pricing} if each buyer is allocated a number of units in its demand set at price $p$, i.e. $x_i \in D_i(p)$ for all $i \in N$. A price $p$ is an \emph{envy-free price} if there exists an allocation $\mathbf{x}$ such that $(\vec{x},p)$ is an envy-free pricing. 

While an envy-free pricing always exists (just set $p = \infty$), it is not always possible to sell \emph{all} the units in an envy-free way. We illustrate this through an example.

\begin{example}[Non-existence of envy-free clearing prices]
	Let $N = \{1, 2\}$, $m= 3$, valuations $v_1 = v_2 = 1.1$, and $B_1 = B_2 = 1$. At any price $p > 0.5$, no more than $2$ units can be sold in total because of budget constraints. At $p \leq 0.5$, both buyers are interested and demand at least $2$ units each, but there are only $3$ units in total.
\end{example}

\noindent \textbf{Objectives}. We are interested in maximizing the \emph{social welfare} and \emph{revenue} objectives attained at envy-free pricing. The \emph{social welfare} at an envy-free pricing $(\vec{x}, p)$ is the total value of the buyers for the goods allocated, while the \emph{revenue} is the total payment received by the seller, i.e.
$
\mathcal{SW}(\vec{x}, p) = \sum_{i=1}^n v_{i}\cdot x_i$ and $\mathcal{REV}(\vec{x}, p) = \sum_{i=1}^n x_i \cdot p.
$ \\

\noindent \textbf{Mechanisms}. The goal of the seller will be to obtain money in exchange for the goods, however, it can only do that if the buyers are interested in purchasing them. The problem of the seller will be to obtain accurate information about the preferences of the buyers that would allow optimizing the pricing.
Since the inputs (valuations) of the buyers are private, we will aim to design auction mechanisms that incentivize the buyers to reveal their true preferences~\cite{AGT_book}.

An auction \emph{mechanism} is a function $M: \mathbb{V}^n \rightarrow \mathbb{O} \times \mathbb{Z}_{+}^n$ that maps the valuations reported by the buyers to a price $p \in \mathbb{O}$, where $\mathbb{O}$ is the space from which the prices are drawn\footnote{In principle the spaces $\mathbb{V}$ and $\mathbb{O}$ can be the same but for the purpose of getting good revenue and welfare, it is useful to have the price to be drawn from a slightly larger domain; see Section \ref{sec:AON}.}, and an allocation vector $\vec{x} \in \mathbb{Z}_{+}^n$. 

\begin{definition}[Truthful Mechanism]
	A mechanism $M$ is \emph{truthful} if it incentivizes the buyers to reveal their true inputs, i.e. $u_i(M(\vec{v})) \geq u_i(M(v_i',v_{-i}))$, for all $i \in N$, any alternative report $v_i' \in \mathbb{V}$ of buyer $i$ and any vector of reports $v_{-i}$ of all the other buyers.
\end{definition}

Requiring incentive compatibility from a mechanism can lead to worse revenue, so our goal will be to design mechanisms that achieve revenue close to that attained in the pure optimization problem (of finding a revenue optimal envy-free pricing without incentive constraints).

\medskip 

\noindent \textbf{Types of Buyers}. The next definitions will be used extensively in the paper. Buyer $i$ is said to be \emph{hungry} at price $p$ if $v_i > p$ and \emph{semi-hungry} 
if $v_i = p$. Given an allocation $\vec{x}$ and a price $p$ buyer $i$ is \emph{essentially hungry} if it is either semi-hungry with $x_i = \min\{\lfloor B_i/p \rfloor,m\}$ or hungry. In other words, a buyer is essentially hungry if its value per unit is at least as high as the price per unit and, moreover, the buyer receives the  largest non-zero element in its demand set. 

\medskip

The following lemmas will be useful.

\begin{lemma} \label{lem:small_to_large}
	Let $p$ be an envy-free price. Then any price $p' > p$ is also envy-free. Similarly, if $p$ is not an envy-free price, then any price $p' < p$ is not envy-free either.
\end{lemma}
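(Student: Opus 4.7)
The plan is to prove only the first statement; the second is just its contrapositive (if $p'<p$ were envy-free then $p$ itself would be envy-free by applying the first statement with $p'$ in place of $p$ and $p$ in place of $p'$). So let $(\vec{x},p)$ be an envy-free pricing with $x_i\in D_i(p)$ and $\sum_i x_i\le m$, and fix $p'>p$. I want to build an allocation $\vec{x}'$ witnessing that $p'$ is envy-free.

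The construction is driven by how $p'$ compares to each $v_i$, using the fact that demand is monotonically non-increasing in price. Define
$$x_i' \;=\; \begin{cases} 0, & \text{if } p'\ge v_i,\\ \min\{\lfloor B_i/p'\rfloor, m\}, & \text{if } p'<v_i.\end{cases}$$
Directly from the definition of $D_i(\cdot)$, one checks $x_i'\in D_i(p')$ in each case: if $p'>v_i$ then $D_i(p')=\{0\}$; if $p'=v_i$ then $0\in D_i(p')=\{0,1,\dots,\min\{\lfloor B_i/p'\rfloor,m\}\}$; and if $p'<v_i$ then $D_i(p')$ is exactly the singleton $\{\min\{\lfloor B_i/p'\rfloor,m\}\}$.

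The remaining step is feasibility: $\sum_i x_i'\le m$. I will establish the stronger pointwise inequality $x_i'\le x_i$. In the two cases where $p'\ge v_i$ this is immediate since $x_i'=0$. In the case $p'<v_i$, we also have $p<p'<v_i$, so buyer $i$ was hungry at price $p$, which forces $x_i=\min\{\lfloor B_i/p\rfloor, m\}$ by the envy-free condition at $p$. Since $p<p'$, we have $\lfloor B_i/p'\rfloor\le \lfloor B_i/p\rfloor$, and hence $x_i'\le x_i$. Summing gives $\sum_i x_i'\le \sum_i x_i\le m$, so $(\vec{x}',p')$ is an envy-free pricing, as desired.

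This argument is essentially a one-step monotonicity check, so there is no real technical obstacle; the only thing one has to be careful about is the semi-hungry boundary case $p'=v_i$, where the old allocation $x_i$ (which was at $v_i>p$, so fully demanded) could exceed $\min\{\lfloor B_i/p'\rfloor, m\}$, and hence one must explicitly drop $x_i'$ down to $0$ rather than simply inheriting $x_i$.
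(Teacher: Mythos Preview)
Your proof is correct and follows the same idea as the paper's: demand is non-increasing in price, so if the units suffice at $p$ they also suffice at any $p'>p$. The paper gives this as a two-line sketch, whereas you unpack it into an explicit construction of the witnessing allocation $\vec{x}'$ with a pointwise bound $x_i'\le x_i$; this is a faithful expansion of the same argument rather than a different approach.
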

\begin{proof}
	This follows from the fact that for every buyer $i$, the number of demanded units is non-increasing in the price. If at price $p$ there are enough units to satisfy all demands, then the same holds at any price $p'>p$. Similarly, if at some price $p$ there are not enough units to satisfy all demands, this is also the case for any $p'<p$. 
\end{proof}

For both revenue and welfare, the optimal solution can be found in a set of \emph{candidate prices}: 
$$\mathcal{P} = \left\{v_{i}, \frac{B_i}{k} \; | \; \forall i \in N, \forall k \in [m]\right\}.$$ These prices are either equal to some valuation or have the property that some buyer could exhaust its budget by purchasing all the units it can afford. 

\begin{lemma}\label{lem:inthesetP}
	For both the revenue and social welfare objectives, there is an optimal envy-free price $p \in \mathcal{P}$.
\end{lemma}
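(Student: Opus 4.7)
The plan is to start from any optimal envy-free pricing $(\vec{x}^*, p^*)$ and, assuming $p^* \notin \mathcal{P}$, to shift the price up to the next candidate without harming either objective. Concretely, let $q = \min\{r \in \mathcal{P} : r > p^*\}$ and let $q_1$ be the largest element of $\mathcal{P}$ weakly below $p^*$ (or $0$ if none exists). If no such $q$ exists, then $p^* > \max_i v_i$, the only envy-free allocation is $\vec{x}^* = \vec{0}$, and any $p \in \mathcal{P}$ trivially matches the (zero) objective. Otherwise I will show that $(\vec{x}^*, q)$ is itself envy-free and at least as good on both objectives.

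The key technical observation I would prove first is that for every buyer $i$ the integer $\lfloor B_i / p \rfloor$ is \emph{constant} on the half-open interval $(q_1, q]$: it only drops at points of the form $B_i/k$, all of which belong to $\mathcal{P}$ by definition, and it is left-continuous at those points. Since $p^* \notin \mathcal{P}$, no buyer is semi-hungry at $p^*$, so each $D_i(p^*)$ is a singleton and $x_i^* = \min\{\lfloor B_i/p^*\rfloor, m\}$ for every hungry buyer and $x_i^* = 0$ otherwise.

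Next, I verify $x_i^* \in D_i(q)$ by splitting on $v_i$ versus $q$. If $v_i < q$, then $v_i \leq q_1 \leq p^*$ forces $x_i^* = 0 \in D_i(q)$. If $v_i > q$, buyer $i$ remains hungry at $q$ and by the constancy observation $\min\{\lfloor B_i/q\rfloor, m\} = \min\{\lfloor B_i/p^*\rfloor, m\} = x_i^*$. The delicate case is $v_i = q$: buyer $i$ becomes semi-hungry at $q$, so $D_i(q) = \{0, 1, \dots, \min\{\lfloor B_i/q\rfloor, m\}\}$, and constancy again gives $x_i^* = \min\{\lfloor B_i/q\rfloor, m\}$, which lies in this set.

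With envy-freeness of $(\vec{x}^*, q)$ in hand, the objective comparison is immediate: the welfare $\sum_i v_i x_i^*$ is unchanged since the allocation is, and the revenue $p \cdot \sum_i x_i^*$ is non-decreasing in $p$ for a fixed allocation, so it weakly improves when moving from $p^*$ to $q$. The main obstacle will be the semi-hungry case $v_i = q$: one must carefully argue that the candidate set $\mathcal{P}$ is rich enough to force $\lfloor B_i/p^*\rfloor = \lfloor B_i/q\rfloor$, so that the old singleton demand value still fits under the new capacity. This is exactly where including every ratio $B_i/k$ for $k \in [m]$ in $\mathcal{P}$ pays off, since it guarantees that no breakpoint of $\lfloor B_i/\cdot\rfloor$ hides strictly between $p^*$ and $q$.
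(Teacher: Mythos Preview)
Your proposal is correct and follows essentially the same approach as the paper: raise an optimal price $p^*\notin\mathcal{P}$ to the next candidate $q\in\mathcal{P}$, argue that every buyer's demand set is unchanged on $(q_1,q]$, and conclude that the same allocation remains envy-free with equal welfare and weakly higher revenue. Your write-up is in fact more careful than the paper's terse version; the only minor imprecision is the claim that ``all'' breakpoints $B_i/k$ lie in $\mathcal{P}$ (only those with $k\le m$ do), but this is harmless since for smaller prices the demand is capped at $m$ and you correctly work with $\min\{\lfloor B_i/p\rfloor,m\}$ throughout.
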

\begin{proof}
	Let $p'$ be a welfare maximizing envy-free price and let $\vec{x}$ be the corresponding allocation. If $p' \in \mathcal{P}$ then we are done. Else, assume $p' \neq \mathcal{P}$. Then, we can increase the price $p'$ until some budget $B_i$ is exhausted or the price becomes equal to some valuation $v_i$. Until that happens, the demand sets of all buyers remain constant and hence the exact same allocation $\vec{x}$ can be supported at some price $p \in \mathcal{P}$. Since the social welfare only depends on the allocation and not the price, the conclusion follows.
	The proof for revenue follows from the observation that increasing the price is beneficial for the seller as long as it continues to sell the same number of items. The discontinuities only happen at points where the price matches the valuation at some buyer (and so increasing the price above that value can result in losing the buyer) or when the number of items decreases because a buyer can no longer afford to purchase as many units.
\end{proof}

\begin{lemma}\label{lem:frompricetoalloc}
	For a linear multi-unit market, given an envy-free price $p$, a revenue or welfare maximizing allocation at $p$ can be found in polynomial time in $n$ and $\log(m)$.
\end{lemma}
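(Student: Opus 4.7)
The plan is to exploit the fact that, once an envy-free price $p$ is fixed, the structure of any envy-free allocation at $p$ is essentially pinned down, and the only freedom left is how to split the remaining supply among the semi-hungry buyers. Concretely, partition the buyers into three groups according to their reported valuations: the hungry buyers with $v_i > p$, the semi-hungry buyers with $v_i = p$, and the rest with $v_i < p$. For buyers with $v_i < p$ we have $D_i(p) = \{0\}$, so envy-freeness forces $x_i = 0$. For hungry buyers, $D_i(p)$ is the singleton $\{\min\{\lfloor B_i/p\rfloor, m\}\}$, so envy-freeness forces $x_i = \min\{\lfloor B_i/p\rfloor, m\}$ as well. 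All of these values can be computed from the input in time polynomial in $n$ and $\log m$.

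The remaining question is how to distribute the residual supply $r := m - \sum_{i\text{ hungry}} \min\{\lfloor B_i/p\rfloor, m\}$ among the semi-hungry buyers. Each semi-hungry buyer $i$ has $D_i(p) = \{0,1,\ldots,c_i\}$ with $c_i = \min\{\lfloor B_i/p\rfloor, m\}$, so any $x_i \in \{0,\ldots,c_i\}$ is consistent with envy-freeness. Since $p$ is an envy-free price by assumption, there exists at least one feasible envy-free allocation, which guarantees $r \ge 0$ and that $r$ can be absorbed within the caps $c_i$ of the semi-hungry buyers (up to the cap being reached collectively). The key observation is that for a semi-hungry buyer $i$ we have $v_i = p$, so each unit allocated to such a buyer contributes exactly $p$ to the social welfare and exactly $p$ to the revenue; in particular, both objectives depend only on the \emph{total} number of units handed to the semi-hungry group, not on how they are split. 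Consequently, both the welfare-maximizing and the revenue-maximizing allocations at $p$ are obtained by allocating to the semi-hungry buyers a total of $\min\{r,\sum_i c_i\}$ units, which, by the envy-freeness of $p$, equals $r$.

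To realize this total, we greedily assign units to semi-hungry buyers: iterate through them in any order, setting $x_i \gets \min\{c_i, r_{\text{rem}}\}$ and updating the remaining budget of units $r_{\text{rem}}$ accordingly, until $r_{\text{rem}} = 0$ or all semi-hungry buyers have been processed. This loop runs $n$ iterations, and each iteration performs a constant number of arithmetic operations on numbers of magnitude at most $m$, each of which takes time polynomial in $\log m$. Adding the $O(n)$ work needed to classify the buyers and compute the $c_i$'s, the total runtime is polynomial in $n$ and $\log m$, as required. No step is a real obstacle: the only thing one has to notice is that both objectives are linear in the per-unit contribution $p$ across the semi-hungry group, which collapses the optimization into a simple feasibility computation.
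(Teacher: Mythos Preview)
Your proof is correct and follows essentially the same approach as the paper: fix the hungry buyers' allocations (forced by envy-freeness), then greedily distribute the residual supply among the semi-hungry buyers. Your argument is in fact more explicit than the paper's about \emph{why} any greedy split is optimal (each unit given to a semi-hungry buyer contributes exactly $p$ to both objectives, so only the total matters); the only minor slip is the parenthetical claim that envy-freeness of $p$ forces $\min\{r,\sum_i c_i\}=r$, which need not hold in general but is immaterial to the argument or the running time.
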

\begin{proof}
	First, given the valuation functions of the hungry buyers, we can compute their demands at price $p$. Note these demands are singletons and so the allocation for these buyers is uniquely determined. For the non-hungry buyers (if any), we assign the remaining units (if any) in a greedy fashion: Fix an arbitrary order of buyers and assign them units according to that order, until all of them exhaust their budgets or we run out of units. All these operations can be done in polynomial time.	
\end{proof}

\section{An optimal envy-free and truthful mechanism}\label{sec:AON}

In this section, we present our main contribution, an envy-free and truthful mechanism, which is optimal among all truthful mechanisms and achieves small constant approximations to the optimal welfare and revenue. The approximation guarantees are with respect to the \emph{market-share} $s^*$, which intuitively captures the maximum purchasing power of any individual buyer in the auction. The formal definition is postponed to the corresponding subsection.

\begin{theorem}\label{thm:AON-ar-sw}
	There exists a truthful (Walrasian) envy-free auction, which attains a fraction of at least
	\begin{itemize}
		\setlength{\itemindent}{.2in}
		\item $\max\left\{2, \frac{1}{1-s^*}\right\}$ of the optimal revenue, and
		\item $1-s^*$ of the optimal welfare
	\end{itemize}
	on any market. 	
	This mechanism is optimal for both the revenue and welfare objectives when the market is even mildly competitive (i.e. with market share $s^* \leq 50\%$), and its approximation for welfare converges to $1$ as the auction becomes fully competitive.
\end{theorem}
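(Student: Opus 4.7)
The plan is to define what the section title calls the \emph{All-or-Nothing} (AON) mechanism: given the reported profile, first compute the revenue-optimal envy-free price $p^* \in \mathcal{P}$ using Lemma~\ref{lem:inthesetP} and the monotonicity from Lemma~\ref{lem:small_to_large} (a simple scan over the polynomially many candidate prices suffices); then serve every hungry buyer its unique demand $\min\{\lfloor B_i/p^*\rfloor, m\}$, and distribute the remaining units to the semi-hungry buyers under a rigid rule that gives each such buyer either its \emph{full} maximal affordable bundle $\min\{\lfloor B_i/p^*\rfloor, m\}$ or $0$, choosing the winning set by a canonical, report-insensitive tie-break (e.g.\ lexicographic on identity). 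Envy-freeness of the output is immediate from the construction since each buyer receives an element of its demand set $D_i(p^*)$.

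For truthfulness, I would argue buyer by buyer. The price $p^*$ is a monotone function of the reports: a unilateral increase of $v_i$ can only weakly raise $p^*$ (it can add demand at intermediate prices), and a decrease can only weakly lower it. I then case-split on whether $i$ is hungry, semi-hungry, or not interested under $(\vec v)$. A hungry buyer always obtains its full demand; reporting higher is pointless, and reporting lower either changes nothing or pushes $p^*$ down (good for $i$'s utility only if $i$'s allocation weakly increases, but one checks the mechanism's allocation rule prevents any strict gain). A semi-hungry buyer is indifferent between its bundle and $0$ at the true price, so AON's binary choice makes manipulation futile — crucially, the all-or-nothing rule is what blocks the otherwise profitable deviation of misreporting $v_i' > v_i$ to become artificially hungry and receive a \emph{partial} bundle strictly below one's true value. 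A non-interested buyer only risks becoming forced to buy at a price $p' \geq v_i$, which the demand/utility definitions forbid.

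For the approximation ratios, I would compare the AON outcome at $p^*$ to any optimal envy-free pricing $(\vec x^{OPT}, p^{OPT})$. The only gap between the AON revenue/welfare and the unconstrained optimum at $p^*$ is the contribution of semi-hungry buyers that the all-or-nothing rule excludes. Formalizing the \emph{market share} $s^*$ as the maximum fraction of the optimal objective attributable to any single buyer, the excluded mass is bounded by $s^*$, giving the $\tfrac{1}{1-s^*}$ and $1-s^*$ ratios for revenue and welfare respectively. The universal factor of $2$ for revenue is obtained separately: among the two groups formed by partitioning the semi-hungry buyers into AON-served and AON-excluded at the optimal price, one group contributes at least half of their joint payment, and choosing the better tie-break recovers at least half of $\mathcal{REV}^*$. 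The welfare bound $1-s^*$ tends to $1$ as $s^*\to 0$, which is the claimed behavior as the market becomes fully competitive.

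The main obstacle will be the \textbf{optimality} claim, namely that no truthful envy-free mechanism beats $\max\{2,\tfrac{1}{1-s^*}\}$ for revenue or $1-s^*$ for welfare when $s^*\le\tfrac12$. Here I would construct parametric families of tight instances — small two- or three-buyer markets where one large buyer of budget corresponding to share $s^*$ is matched against unit-demand competitors — and invoke a taxation-principle style characterization: truthfulness plus envy-free item pricing forces each buyer to face a price that is independent of its own reported valuation, so its allocation is determined solely by whether $v_i$ exceeds that price. On the constructed families, any such mechanism must either exclude the large buyer (losing a $1-s^*$ fraction) or set a price so low that a second-best instance in the same family yields the factor-$2$ lower bound. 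Assembling these two lower bounds and matching them against the upper bounds above closes the theorem; the delicate part is ensuring that the lower-bound instances are indistinguishable to the mechanism after the relevant manipulation, which is exactly where envy-freeness of item pricing is used critically.
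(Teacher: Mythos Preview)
Your proposal has a fundamental error in the definition of the mechanism: you set the price to the \emph{revenue-optimal} envy-free price, whereas the paper's \textsc{All-or-Nothing} mechanism uses the \emph{minimum} envy-free price $p_{\min}$. This is not a cosmetic difference; your version is not truthful. Take two buyers, two units, $v_1=10$, $B_1=10$, $v_2=2$, $B_2=2$. The revenue-optimal envy-free price is $5$ (sell both units to buyer~1, revenue $10$), so buyer~1's utility is $20-10=10$. If buyer~1 misreports $v_1'=3$, the revenue-optimal price drops to $3$ (revenue $6$), buyer~1 is semi-hungry and, under your lexicographic all-or-nothing rule, receives both units at price $3$, for true utility $20-6=14>10$. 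The paper's truthfulness proof hinges precisely on minimality: any deviation that lowers the output price to some $p'<p_{\min}$ would exhibit an envy-free price below $p_{\min}$ on the original profile, a contradiction. Your monotonicity claim (``a unilateral increase of $v_i$ can only weakly raise $p^*$'') is simply false for the revenue-optimal price, and your case analysis for the semi-hungry buyer (``indifferent, so manipulation is futile'') misses that a semi-hungry buyer at the \emph{true} price may be strictly hungry at a lower manipulated price.

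The approximation arguments also diverge from the paper and are not right as sketched. The factor $2$ for revenue in the paper does not come from partitioning semi-hungry buyers at the optimal price; it comes from the case $p^*>p_{\min}$, where one compares $\sum_{i\in L}\lfloor B_i/p_{\min}\rfloor\cdot p_{\min}$ against $\sum_{i\in L}\lfloor B_i/p^*\rfloor\cdot p^*\le \sum_{i\in L}B_i$ and uses the elementary bound $\alpha\le 2\lfloor\alpha\rfloor$ for $\alpha\ge 1$. The $1/(1-s^*)$ bound arises only in the complementary case $p^*=p_{\min}$, where one shows the optimal allocation differs from AON's in at most one semi-hungry buyer whose share is at most $s^*$. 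Your definition of $s^*$ as ``the maximum fraction of the optimal objective attributable to any single buyer'' is also not what the paper uses; $s^*$ is defined as the largest allocation share any single buyer can receive among maximal envy-free allocations at $p_{\min}$. Finally, the optimality lower bound in the paper is not obtained via a taxation-principle characterization but by a chain of profiles $\vec v,\vec v^{(1)},\ldots$ in which buyer~2's report creeps upward inside a narrow interval, forcing any truthful mechanism to allocate ever more units to buyer~2 until the supply constraint is violated.
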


\noindent Consider the following mechanism.

\begin{box1}{\emph{\textbf{\textsc{All-or-Nothing:}}}}
	Given as input the valuations of the buyers, let $p$ be the minimum envy-free price and
	$\vec{x}$ the allocation obtained as follows: 
	\begin{itemize}
		\item For every hungry buyer $i$, set $x_i$ to its demand. \\[-5pt]
		\item For every buyer $i$ with $v_i<p$, set $x_i = 0$.  \\ [-5pt]
		\item For every semi-hungry buyer $i$, set $x_i = \lfloor B_i / p \rfloor$ if possible, otherwise set $x_i=0$ taking the semi-hungry buyers in lexicographic order.
	\end{itemize}
\end{box1} 

\bigskip

\noindent In other words, the mechanism always outputs the minimum envy-free price but if there are semi-hungry buyers at that price, they get either all the units they can afford at this price or $0$, even if there are still available units, after satisfying the demands of the hungry buyers. \\

\begin{lemma}
	The minimum envy-free price does not exist when the price domain is $\mathbb{R}$.
\end{lemma}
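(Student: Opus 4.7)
The plan is to prove non-existence by exhibiting a single concrete instance in which the infimum of envy-free prices is not itself envy-free. By Lemma \ref{lem:small_to_large} the set of envy-free prices in $\mathbb{R}$ is upward-closed, so to show that no minimum exists it suffices to produce one instance where the infimum of this set is strictly below every envy-free price.

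I would reuse the two-buyer instance from the earlier non-clearing example: $n = 2$, $m = 3$, $v_1 = v_2 = 1.1$, $B_1 = B_2 = 1$. The key computation is the demand set in each relevant price range. For $p \in (1/2, 1]$ both buyers are hungry since $p < v_i$, and $\lfloor B_i/p \rfloor = 1$, so each demand set is the singleton $\{1\}$ and the total demand of $2$ can be served from the $m = 3$ available units. For $p \in (1, 1.1)$ the budget forces $\lfloor B_i/p \rfloor = 0$ so the hungry demand sets are $\{0\}$. For $p \geq 1.1$ no buyer is interested at all. Hence every $p > 1/2$ is an envy-free price.

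Next I would verify that the candidate infimum $p = 1/2$ is not envy-free: at this price each buyer is hungry with $\lfloor B_i/p \rfloor = 2$, so each demand set is the singleton $\{2\}$, and the total demand of $4$ strictly exceeds $m = 3$. Therefore no feasible allocation matches both demands at $p = 1/2$, the envy-free set equals the open interval $(1/2, \infty)$, and its infimum is not attained.

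The main technical care is purely bookkeeping with the floor function $\lfloor B_i/p \rfloor$ around the threshold $p = 1/2$: just above $1/2$ the floor drops from $2$ to $1$, removing one unit from each demand and making the market feasible, while at exactly $1/2$ the floor jumps back up and total demand outstrips supply. I do not anticipate a deeper obstacle; the non-existence is really a manifestation of the discontinuity of demand in the price, which produces an open envy-free set with no minimum element in $\mathbb{R}$.
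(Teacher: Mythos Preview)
Your proof is correct and follows the same strategy as the paper: exhibit a concrete instance in which the set of envy-free prices is an open half-line, hence has no minimum in $\mathbb{R}$. The paper uses the instance $n=2$, $m=2$, $v_1=v_2=3$, $B_1=B_2=2$ (with infimum $1$), whereas you reuse the earlier non-clearing instance $n=2$, $m=3$, $v_1=v_2=1.1$, $B_1=B_2=1$ (with infimum $1/2$); otherwise the arguments are essentially identical, and your explicit invocation of Lemma~\ref{lem:small_to_large} to pin down the envy-free set as $(1/2,\infty)$ is a nice touch.
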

\begin{proof}
	If the price can be any real number, consider an auction with $n=2$ buyers, $m=2$ units, valuations $v_1 = v_2 = 3$ and budgets $B_1 = B_2 = 2$.
	At any price $p \leq 1$, there is overdemand since each buyer is hungry and demands at least $2$ units, while there are only $2$ units in total. At any price $p \in (1, 2]$, each buyer demands at most one unit due to budget constraints, and so all the prices in the range $(1,2]$ are envy-free. This is an open set, and so there is no minimum envy-free price. Note however, that by making the output domain discrete, e.g. with $0.1$ increments starting from zero, then the minimum envy-free price output is $1.01$. At this price each buyer purchases $1$ unit.
\end{proof}

Given the example above, we will consider the discrete domain $\mathbb{V}$ as an infinite grid with entries of the form $k\cdot \epsilon$, for $k \in \mathbb{N}$ and some sufficiently small\footnote{For most of our results, any discrete domain is sufficient for the results to hold; for some results we will need to a number of grid points that polynomial in the size of the input grid.}  $\epsilon$. For the output of the mechanism, we will assume a slightly finer grid, e.g. with entries $k\cdot \delta=k(\epsilon/2)$, for $k \in \mathbb{N}$. The minimum envy-free price can be found in time which is polynomial in the input and $\log(1/\epsilon)$, using binary search\footnote{In the full version, we describe a faster procedure that finds the minimum envy-free without requiring to do binary search over the grid.}  and the mechanism is optimal with respect to discrete domain that we operate on. Operating on a grid is actually without loss of generality in terms of the objectives; even if we compare to the optimal on the continuous domain, if our discretization is fine enough, we don't lose any revenue or welfare.

\begin{lemma}\label{thm:continuous}
	When the valuation and budget of each buyer are drawn from a discrete grid with entries $k\cdot \epsilon$, and the price is is drawn from a finer grid with entries $k\cdot \epsilon/2$, for $k \in \mathbb{N}$, then the welfare and revenue loss of the \textsc{All-or-Nothing} mechanism due to the discretization of the output domain is zero. The mechanism always runs in time polynomial in the input and $\log(1/\epsilon)$.
\end{lemma}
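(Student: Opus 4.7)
The plan is to relate the mechanism's output on the $\epsilon/2$-grid to what it would produce in an idealized continuous setting using the infimum $p^* = \inf\{p \in \mathbb{R}_+ : p \text{ is envy-free}\}$. By the upward closedness of envy-free prices (Lemma \ref{lem:small_to_large}), the minimum envy-free grid price $p_{grid}$ exists and coincides with the smallest point of the $\epsilon/2$-grid that is both $\geq p^*$ and envy-free. The central claim is that the \textsc{All-or-Nothing} allocation at $p_{grid}$ matches the allocation it would produce at $p^* + 0^+$; from this it follows that welfare is preserved, and revenue can only increase, since $p_{grid} \geq p^*$ and the allocation is identical.

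To establish this coincidence, I would verify two things: the sets of hungry, semi-hungry, and non-interested buyers at the two prices agree, and each hungry buyer's demand $\lfloor B_i/p \rfloor$ agrees. For the first, every valuation $v_i$ is a multiple of $\epsilon$, and hence of $\epsilon/2$, so no valuation lies strictly between two consecutive grid points; under the standing assumption that $\epsilon$ is sufficiently small, $p_{grid}$ and $p^*$ fall in the same cell of the valuation-partition and hence induce the same classification. For the second, $\lfloor B_i/p \rfloor$ is piecewise constant with jumps only at the finitely many breakpoints $\{B_i/k : i \in N, k \in [m]\}$; for $\epsilon$ smaller than the minimum positive gap among the elements of the candidate set $\mathcal{P}$ from Lemma \ref{lem:inthesetP}, no such jump occurs in the open interval $(p^*, p_{grid})$, so all hungry demands coincide with their continuous-limit values. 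The deterministic \textsc{All-or-Nothing} tie-breaking rule for semi-hungry buyers then yields identical allocations.

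For the running time, $p_{grid}$ lies in the bounded interval $[0, \max_i B_i]$ (beyond this no buyer is interested), so I would locate it by binary search over the $\epsilon/2$-grid in $O(\log(\max_i B_i / \epsilon))$ iterations; each iteration tests envy-freeness by summing hungry demands in $O(n)$ time, and the final allocation is obtained in polynomial time via Lemma \ref{lem:frompricetoalloc}. The main obstacle is justifying the ``sufficiently small $\epsilon$'' hypothesis, which requires a priori bounds on the minimum gap among the critical prices in $\mathcal{P}$; these gaps depend only on the input, so any small enough $\epsilon$ (equivalently, any sufficiently fine input grid) realizes the hypothesis, at which point the $\epsilon/2$ output grid is fine enough for the equivalence above.
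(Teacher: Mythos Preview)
The paper states this lemma without proof, so there is no argument in the paper to compare against.

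Your overall strategy---locate the continuous infimum $p^*$, show that the grid minimum $p_{grid}$ induces the same \textsc{All-or-Nothing} allocation as $p^*+0^+$, and conclude that welfare is preserved while revenue can only rise---is the natural one. The gap is in the second step. You claim that no demand breakpoint $B_i/k$ lies in $(p^*,p_{grid})$ provided $\epsilon$ is smaller than the minimum positive gap in $\mathcal{P}$, and you then try to discharge this by saying ``any small enough $\epsilon$ realises the hypothesis.'' But $\epsilon$ is the \emph{input} grid parameter: the budgets $B_i$ are themselves multiples of $\epsilon$, so the breakpoints $B_i/k$ scale with $\epsilon$ and their spacing can be as small as $\epsilon/(k(k+1))$, far finer than $\epsilon/2$. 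You cannot shrink $\epsilon$ after the instance is fixed, and rescaling $\epsilon$ rescales the whole instance, leaving the ratio (gap)$/\epsilon$ unchanged.

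Concretely, take $\epsilon=1$, $n=2$, $m=5$, $v_1=v_2=10$, $B_1=7$, $B_2=3$. The continuous infimum of envy-free prices is $p^*=3/2$: at $p=3/2$ the hungry demands are $(4,2)$ with total $6>5$, while at $p=3/2+\delta$ they are $(4,1)$ with total $5$. The limiting \textsc{All-or-Nothing} allocation is $(4,1)$, welfare $50$. On the $\epsilon/2=0.5$ grid, $p=1.5$ is not envy-free and the next grid point is $p=2$, where the demands are $(\lfloor 7/2\rfloor,\lfloor 3/2\rfloor)=(3,1)$; the allocation is $(3,1)$, welfare $40$. The breakpoint $7/4=1.75$ sits strictly inside $(1.5,2)$ and drops buyer~$1$'s demand from $4$ to $3$. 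So the $\epsilon/2$ output grid is \emph{not} fine enough to avoid welfare loss in this instance, and your argument cannot go through as written. Either the lemma needs a stronger hypothesis on the output grid (fine enough to separate all points of $\mathcal{P}$, which would still be polynomial in the input but not simply $\epsilon/2$), or the intended meaning of ``discretization loss'' is different from the one you and I are reading.
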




\subsection*{Truthfulness of the \textsc{All-or-Nothing} Mechanism}

\noindent The following theorem establishes the truthfulness of \textsc{All-or-Nothing}. 

\begin{theorem} \label{thm:sp_aon}
	The \textsc{All-or-Nothing} mechanism is truthful.
\end{theorem}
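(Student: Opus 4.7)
The plan is to fix buyer $i$ and the reports $v_{-i}$ of the other buyers, and show that $i$'s utility is (weakly) maximized by reporting $v_i'=v_i$. Let
\[
H_{-i}(q)\;:=\;\sum_{j\neq i,\;v_j>q}\min\{\lfloor B_j/q\rfloor,m\}
\]
denote the hungry demand at price $q$ coming from the other buyers, and define the two regime thresholds
\[
p^0\;:=\;\min\{q:H_{-i}(q)\le m\},\qquad \bar p\;:=\;\min\bigl\{q:H_{-i}(q)+\min\{\lfloor B_i/q\rfloor,m\}\le m\bigr\},
\]
which depend only on $v_{-i}$ and $B_i$. Since $H_{-i}$ is non-increasing in $q$ and the extra term is non-negative, $p^0\le\bar p$.

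The first step is to read off the minimum envy-free price $p^*(v_i')$ that \textsc{All-or-Nothing} outputs on input $(v_i',v_{-i})$. I expect the characterization: $p^*(v_i')=p^0$ if $v_i'\le p^0$; $p^*(v_i')=v_i'$ if $v_i'\in(p^0,\bar p)$; and $p^*(v_i')=\bar p$ if $v_i'\ge\bar p$. This follows by evaluating hungry demand piecewise in the three regimes and invoking the defining minimality of $p^0$ and $\bar p$.

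The central step is the key lemma that whenever $v_i'<\bar p$, the mechanism allocates $x_i=0$. The non-hungry case ($v_i'<p^*(v_i')$) is immediate. In the semi-hungry case ($v_i'=p^*(v_i')<\bar p$), the strict minimality of $\bar p$ yields $m-H_{-i}(p^*)<\min\{\lfloor B_i/p^*\rfloor,m\}$, so the supply remaining after serving all hungry buyers is strictly smaller than the amount buyer $i$ would need to trigger the ``all''-branch of the all-or-nothing rule. Since processing any semi-hungry buyers appearing earlier in the lex order can only decrease the remaining supply further, by the time $i$'s turn arrives the rule has no option but to set $x_i=0$.

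Combining the two steps, the utility comparison is immediate. Under truthful reporting, buyer $i$ obtains positive utility only when $v_i>\bar p$, in which case its utility is $(v_i-\bar p)\min\{\lfloor B_i/\bar p\rfloor,m\}$; otherwise the truthful utility is $0$. Any deviation with $v_i'\ge\bar p$ leaves the price at $\bar p$ and allocates $i$ at most $\min\{\lfloor B_i/\bar p\rfloor,m\}$ units, so its utility is at most $(v_i-\bar p)\min\{\lfloor B_i/\bar p\rfloor,m\}$, which matches the truthful utility when $v_i>\bar p$ and is non-positive when $v_i\le\bar p$. Any deviation with $v_i'<\bar p$ yields $x_i=0$ by the central step, and hence utility $0$. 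Thus no deviation is profitable. The main obstacle is the central step: the all-or-nothing rule is engineered precisely so that the strict minimality of $\bar p$ can be invoked to kill the ``allocate as many units as possible'' incentive at sub-$\bar p$ prices, and the argument must handle the lex ordering among semi-hungry buyers carefully.
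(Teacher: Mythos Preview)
Your proposal is correct and takes a genuinely different route from the paper. The paper argues by contradiction: it assumes a profitable deviation exists, shows the deviating buyer must be essentially hungry, that the new price $p'$ must be strictly lower than the original minimum envy-free price $p$, and then derives that $p'$ would already have been envy-free on the truthful profile (contradicting minimality of $p$); the all-or-nothing rule enters only to dispose of the case $v_i'=p'$. Your approach is structural: you compute the output price $p^*(v_i')$ explicitly as a piecewise function of the report via the two thresholds $p^0\le\bar p$ (which depend only on $v_{-i}$ and the public $B_i$), and then prove the key lemma that any report $v_i'<\bar p$ forces $x_i=0$. The utility comparison is then immediate.

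What your approach buys is a transparent ``menu'' description of buyer $i$'s options---either pay $\bar p$ per unit and receive $\min\{\lfloor B_i/\bar p\rfloor,m\}$ units, or receive nothing---which makes truthfulness obvious and also clarifies \emph{why} the all-or-nothing rule is exactly the right tie-breaker: the strict inequality $m-H_{-i}(p^*)<\min\{\lfloor B_i/p^*\rfloor,m\}$ at every $p^*<\bar p$ is precisely what kills the ``all'' branch. The paper's proof is shorter and avoids introducing the auxiliary thresholds, but yours exposes the underlying single-parameter structure more cleanly. One small point to make explicit when you write it out: your minima defining $p^0$ and $\bar p$ are over the discrete output grid, and you should note that both exist because $H_{-i}(q)\to 0$ as $q$ grows.
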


\begin{proof}
	First, we will prove the following statement. If $p$ is any envy-free price and $p'$ is an envy-free price such that $p\leq p'$ then the utility of any essentially hungry buyer $i$ at price $p$ is at least as large as its utility at price $p'$. The case when $p'=p$ is trivial, since the price (and the allocation) do not change. Consider the case when $p<p'$. Since $p$ is an envy-free price, buyer $i$ receives the maximum number of items in its demand. For a higher price $p'$, its demand will be at most as large as its demand at price $p$ and hence its utility at $p'$ will be at most as large as its utility at $p$.  
	
	Assume now for contradiction that Mechanism \textsc{All-or-Nothing} is not truthful and let $i$ be a deviating buyer who benefits by misreporting its valuation $v_i$ as $v_i'$ at some valuation profile $\mathbf{v}=(v_1,\ldots, v_n)$, for which the minimum envy-free price is $p$. Let $p'$ be the new minimum envy free price and let $\mathbf{x}$ and $\mathbf{x'}$ be the corresponding allocations at $p$ and $p'$ respectively, according to \textsc{All-or-Nothing}. Let $\mathbf{v'}=(v_i',v_{-i})$ be the valuation profile after the deviation.
	
	We start by arguing that the deviating buyer $i$ is essentially hungry. First, assume for contradiction that $i$ is neither hungry nor semi-hungry, which means that $v_i <p$. Clearly, if $p'\geq p$, then buyer $i$ does not receive any units at $p'$ and there is no incentive for manipulation; thus we must have that $p'<p$. This implies that every buyer $j$ such that $x_j>0$ at price $p$ is hungry at price $p'$ and hence $x_j' \geq x_j$. Since the demand of all players does not decrease at $p'$, this implies that $p'$ is also an envy-free price on instance $\mathbf{v}$, contradicting minimality of $p$. 
	
	Next, assume that buyer $i$ is semi-hungry but not essentially hungry, which means that $v_i=p$ and $x_i =0$, by the allocation of the mechanism. Again, in order for the buyer to benefit, it has to hold that  $p'<p$ and $x_i' >0$ which implies that $x_i' = \lfloor B_i/p'\ \rfloor$, i.e. buyer $i$ receives the largest element in its demand set at price $p'$. But then, since $p' <p$ and $p'$ is an envy-free price, buyer $i$ could receive $\lfloor B_i/p \rfloor$ units at price $p$ without violating the envy-freeness of $p$, in contradiction with each buyer $i$ being essentially hungry at $p$.
	
	From the previous two paragraphs, the deviating buyer must be essentially hungry. This means that $x_i>0$ and $v_i \geq p$. By the discussion in the first paragraph of the proof, we have $p'<p$. 
	Since $x_i>0$, the buyer does not benefit from reporting $v_i'$ such that $v_i' < p'$. Thus it suffices to consider the case when $v_i' \geq p'$. We have two subcases: \\
	\begin{itemize}
		\item $v_i'>p$: Buyer $i$ is essentially hungry at price $p$ according to $v_i$ and hungry at price $p'$ according to $v_i'$. The reports of the other buyers are fixed and $B_i$ is known; similarly to above, price $p'$ is an envy-free price on instance $\mathbf{v}$, contradicting the minimality of $p$. \\
		
		\item $v_i'=p'$: Intuitively, an essentially hungry buyer at price $p$ is misreporting its valuation as being lower trying to achieve an envy-free price $p'$ equal to the reported valuation. 
		Since $v_i'=p'$,  Mechanism \textsc{All-or-Nothing} gives the buyer either as many units as it can afford at this price or zero units. In the first case, since $p'$ is envy-free and $B_i$ is known, buyer $i$ at price $p'$ receives the largest element in its demand set 
		and since the valuations of all other buyers are fixed, $p'$ is also an envy-free price on input $\mathbf{v}$, contradicting the minimality of $p$. In the second case, the buyer does not receive any units and hence it does not benefit from misreporting.  \\
	\end{itemize}
	Thus there are no improving deviations, which concludes the proof of the theorem.
\end{proof} 

\subsection*{Performance of the \textsc{All-or-Nothing} Mechanism}

Next, we show that the mechanism has a good performance for both objectives. 
We measure the performance of a truthful mechanism by the standard notion of approximation ratio, i.e. 
$$\text{ratio(M)} = \sup_{\mathbf{\vec{v} \in \mathbb{R}^n}} \frac{\max_{\mathbf{x},p} \mathcal{OBJ}(\vec{v})}{\mathcal{OBJ}(M(\vec{v}))},$$
\noindent where $\mathcal{OBJ} \in \{\mathcal{SW,REV}\}$ is either the social welfare or the revenue objective. Obviously, a mechanism that outputs a pair that maximizes the objectives has approximation ratio $1$. The goal is to construct truthful mechanisms with approximation ratio as close to $1$ as possible.

We remark here that for the approximation ratios, we only need to consider valuation profiles that are not ``trivial'', i.e. input profiles for which at any envy-free price, no hungry or semi-hungry buyers can afford a single unit and hence the envy-free price can be anything;
on trivial profiles, both the optimal price and allocation and the price and allocation output by Mechanism \textsc{All-or-Nothing} obtain zero social welfare or zero revenue. \\

\noindent \textbf{Market Share} A well-known notion for measuring the competitiveness of a market is the \emph{market share}, understood as the percentage of the market accounted for by a specific entity (see, e.g., \cite{farris2010marketing}, Chapter 2).

In our model, the maximum purchasing power (i.e. number of units) of any buyer in the auction occurs at the minimum envy-free price, $p_{min}$. By the definition of the demand, there are many ways of allocating the semi-hungry buyers, so when measuring the purchasing power of an individual buyer we consider the maximum number of units that buyer can receive, taken over the set of all feasible maximal allocations at $p_{min}$. Let this set be $\mathcal{X}$.
Then the market share of buyer $i$ can be defined as:
$$s_i = \max_{\vec{x} \in \cal{X}} \left(\frac{x_i}{\sum_{k=1}^n x_k}\right).$$

Then, the \emph{market share} is defined as 
$s^* = \max_{i=1}^{n} s_i$. Roughly speaking, a market share $s^* \leq 1/2$ means that a buyer can never purchase more than half of the resources.

\begin{theorem} \label{thm:revenuebound}
	The \textsc{All-or-Nothing} mechanism approximates the optimal revenue within a factor of $2$ whenever the market share, $s^*$, is at most $50\%$.
\end{theorem}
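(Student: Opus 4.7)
The plan is to let $(p, \vec{x})$ denote the mechanism's output (so $p = p_{\min}$) and $(p^*, \vec{x}^*)$ a revenue-optimal envy-free pricing. By Lemma~\ref{lem:small_to_large}, $p^* \ge p_{\min}$, so I would split into the regimes $p^* = p_{\min}$ and $p^* > p_{\min}$ and handle each with the help of a shared ``volume'' estimate.

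The common ingredient I would prove first is that $\sum_i x_i \ge (1 - s^*)\,m \ge m/2$. For this, I note that for every buyer $i$ the market share satisfies $s_i \ge \min(d_i,\, m - m_H)/m$, where $m_H$ is the total hungry demand at $p_{\min}$ and $d_i = \min(\lfloor B_i/p_{\min}\rfloor, m)$: in some maximal envy-free allocation one can assign $\min(d_i, m - m_H)$ units to $i$ and distribute the remaining units among the other semi-hungry buyers. In particular, $s^* \le 1/2$ rules out any hungry buyer with $d_i = m$, since such a buyer would force $m_H = m$ and hence $s_i = 1$. Now if the lexicographic step skips some semi-hungry buyer $i^*$, the final leftover $r = m - \sum_i x_i$ obeys both $r < d_{i^*}$ (not enough units were available at $i^*$'s turn) and $r \le m - m_H$; combined with $s_{i^*} = \min(d_{i^*}, m - m_H)/m \le s^*$, this gives $r \le s^* m$.

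In Case A ($p^* = p_{\min}$) the conclusion is immediate: $R^* \le p_{\min}\cdot m$ while $R_{\text{mech}} \ge p_{\min}\cdot m(1-s^*)$, giving a ratio of at most $1/(1-s^*) \le 2$.

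In Case B ($p^* > p_{\min}$), every buyer $i$ with $x_i^* > 0$ has $v_i \ge p^* > p_{\min}$, so $i$ is hungry at $p_{\min}$ and is allocated $d_i = \lfloor B_i/p_{\min}\rfloor < m$ units by the mechanism. The budget constraint $x_i^* p^* \le B_i$ then yields $R^* \le \sum_{i \in A^*} B_i$, while the floor bound $d_i p_{\min} \ge B_i - p_{\min}$ yields $R_{\text{mech}} \ge \sum_{i \in A^*} B_i - |A^*|\cdot p_{\min}$, so $R^* - R_{\text{mech}} \le |A^*|\cdot p_{\min}$. Since each buyer in $A^*$ affords at least one unit at $p_{\min}$ (because $B_i \ge p^* > p_{\min}$), we also have $R_{\text{mech}} \ge |A^*|\cdot p_{\min}$, and therefore $R^* \le 2 R_{\text{mech}}$. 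The part I expect to require the most care is the volume estimate in paragraph two: extracting from $s^* \le 1/2$ both the ``no capped hungry buyer'' structural fact and the quantitative leftover bound $r \le s^* m$, through the two competing inequalities on $s_{i^*}$.
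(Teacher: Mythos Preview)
Your proof is correct and follows the same two-case split as the paper ($p^* = p_{\min}$ versus $p^* > p_{\min}$). Your Case~B argument is essentially identical to the paper's Case~1: both bound $R^*$ by $\sum_{i\in A^*} B_i$ and lower-bound $R_{\text{mech}}$ via $\lfloor B_i/p_{\min}\rfloor \cdot p_{\min}$, then use $\lfloor \alpha_i \rfloor \ge 1$ to close the factor-$2$ gap. (You are in fact a bit more careful than the paper here: you explicitly rule out the possibility of a capped hungry buyer, $d_i = m$, by showing it would force $s^* = 1$; the paper's Case~1 tacitly assumes this.) Your Case~A, however, is organized differently from the paper's Case~2. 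The paper argues structurally that an optimal allocation at $p_{\min}$ can be chosen to coincide with \textsc{All-or-Nothing} except on a single semi-hungry buyer $\ell$, and then bounds $x_\ell / \sum_i x_i$ by $s^*$. You instead prove a direct volume lower bound $\sum_i x_i \ge (1-s^*)\,m$ via the skipped-buyer leftover analysis and pair it with the crude upper bound $R^* \le p_{\min}\cdot m$. Both routes yield the same $1/(1-s^*)$ ratio; yours is more self-contained, while the paper's pinpoints the single buyer responsible for the loss.

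One small gap to patch: your volume estimate $\sum_i x_i \ge (1-s^*)\,m$ is only derived under the assumption that some semi-hungry buyer $i^*$ is actually skipped (this is what gives total demand $> m$, hence $\sum_k x_k = m$ in every maximal allocation, hence denominator $m$ in $s_{i^*}$). If no semi-hungry buyer is skipped, the inequality can fail outright---but then \textsc{All-or-Nothing} already allocates every buyer's full demand at $p_{\min}$, so $R_{\text{mech}} = R^*$ in Case~A and there is nothing to prove. Add one sentence disposing of that sub-case before invoking the volume bound.
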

\begin{proof}	
	Let $OPT$ be the optimal revenue, attained at some price $p^*$ and allocation $\vec{x}$, and $\mathcal{REV}(AON)$ the revenue attained by the $\textsc{All-or-Nothing}$ mechanism. By definition, mechanism $\textsc{All-or-Nothing}$ outputs the minimum envy-free price $p_{min}$, together with an allocation $\vec{z}$. 
	For ease of exposition, let $\alpha_i = B_i/p_{min}$ and $\alpha_i^* = B_i/p^*$, $\forall i \in N$. 
	There are two cases, depending on whether the optimal envy-free price, $p^*$, is equal to the minimum envy-free price, $p_{min}$:\\
	
	\noindent \emph{Case 1}: $p^* > p_{min}$.
	Denote by $L$ the set of buyers with valuations at least $p^*$ that can afford at least one unit at the optimal price.
	Note that the set of buyers that get allocated at $p_{min}$ is a superset of $L$. Moreover, the optimal revenue is bounded by the revenue attained at the (possibly infeasible) allocation where all the buyers in $L$ get the maximum number of units in their demand.
	These observations give the next inequalities:
	$$\mathcal{REV}(AON) \geq \sum_{i \in L} \left \lfloor \alpha_i \right \rfloor \cdot p_{min} \ \ \text{and} \ \ 
	OPT \leq \sum_{i \in L} \left \lfloor \alpha_i^* \right \rfloor \cdot p^*.$$
	Then the revenue is bounded by:
	\begin{align*}
	\frac{\mathcal{REV}(AON)}{OPT}  &\geq 
	\frac{\sum_{i \in L} \left \lfloor \alpha_i \right \rfloor \cdot p_{min}}{\sum_{i \in L} \left \lfloor \alpha_i^* \right \rfloor \cdot p^*} 
	\geq  
	\frac{\sum_{i \in L} \left \lfloor \alpha_i  \right \rfloor  \cdot p_{min}}{\sum_{i \in L} \alpha_i^* \cdot p^*}
	= \frac{\sum_{i \in L} \left \lfloor \alpha_i \right \rfloor  \cdot p_{min}}{\sum_{i \in L} B_i}\\
	& = 
	\frac{\sum_{i \in L} \left \lfloor \alpha_i \right \rfloor }{\sum_{i \in L} \alpha_i } 
	\geq 
	\frac{\sum_{i \in L} \left \lfloor \alpha_i \right \rfloor}{\sum_{i \in L} 2 \left \lfloor \alpha_i \right \rfloor} 
	= \frac{1}{2},
	\end{align*}
	\noindent where we used that the auction is non-trivial, i.e. for any buyer $i \in L$, $\left \lfloor \alpha_i \right \rfloor \geq 1$, and so 
	$\alpha_i  \leq \left \lfloor \alpha_i \right \rfloor + 1 \leq 2 \left \lfloor \alpha_i \right \rfloor$. \\
	
	\noindent \emph{Case 2}: $p^* = p_{min}$. The hungry buyers at $p_{min}$, as well as the buyers with valuations below $p_{min}$, receive identical allocations under $\textsc{All-or-Nothing}$ and the optimal allocation, $\vec{x}$. However there are multiple ways of assigning the semi-hungry buyers to achieve an optimal allocation.
	Recall that $\vec{z}$ is the allocation made by $\textsc{All-or-Nothing}$. Without loss of generality, we can assume that $\vec{x}$ is an optimal allocation with the property that $\vec{x}$ is a superset of $\vec{z}$ and the following condition holds: \\
	\begin{itemize}
		\item \emph{the number of buyers not allocated under $\vec{z}$, but that are allocated under $\vec{x}$, is minimized.} \\
	\end{itemize} 
	We argue that $\vec{x}$ allocates at most one buyer more compared to $\vec{z}$. 
	Assume by contradiction that there are at least two semi-hungry buyers $i$ and $j$, such that 
	$0 < x_i < \left \lfloor \alpha_i \right \rfloor$ and $0 < x_j < \left \lfloor \alpha_j \right \rfloor$. Then we can progressively take units from buyer $j$ and transfer them to buyer $i$, until either buyer $i$ receives 
	$x_i' = \left \lfloor \alpha_i \right \rfloor$, or buyer $j$ receives $x_j' = 0$. 
	Hence we can assume that the set of semi-hungry buyers that receive non-zero, non-maximal allocations in the optimal solution $\vec{x}$ is either empty or a singleton.
	If the set is empty, then $\textsc{All-or-Nothing}$ is optimal. Otherwise, let the singleton be $\ell$; denote by $\tilde{x}_{\ell}$ the maximum number of units that $\ell$ can receive in any envy-free allocation at $p_{min}$. Since the number of units allocated by any maximal envy-free allocation at $p_{min}$ is equal to $\sum_{i=1}^n x_i$, but $x_{\ell} \leq \tilde{x}_{\ell}$, we get: 
	$$\frac{x_{\ell}}{\sum_{i=1}^n x_i} \leq \frac{\tilde{x}_{\ell}}{\sum_{i=1}^n x_i} = s_i^*.$$
	Thus
	\begin{align*}
	\frac{\mathcal{REV}(AON)}{OPT} & = \frac{OPT - x_{\ell} \cdot p_{min}}{OPT} \geq 
	\frac{OPT - \tilde{x}_{\ell} \cdot p_{min}}{OPT} 
	= 1 - \frac{\tilde{x}_{\ell} \cdot p_{min}}{\sum_{i=1}^n x_i \cdot p_{min}} \\
	&=
	1 - \frac{\tilde{x}_{\ell}}{\sum_{i=1}^{n} x_i} 
	= 1 - s_i^{*} \geq 1 - s^*
	\end{align*}
	Combining the two cases, the bound follows. This completes the proof.
\end{proof} 

\begin{corollary}
	The performance of the \textsc{All-or-Nothing} mechanism is $\max\{2,1/(1-s^*)$ on any market (i.e. with market share $0 < s^* < 1$).
\end{corollary}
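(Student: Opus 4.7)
The plan is to observe that the corollary follows directly by re-reading the case analysis in the proof of Theorem \ref{thm:revenuebound}, without doing any new work. In that proof, the approximation bound obtained in Case 1 (where $p^* > p_{\min}$) is $\mathcal{REV}(AON)/OPT \geq 1/2$, and this bound is derived purely from the budget-exhaustion inequality $\alpha_i \leq 2 \lfloor \alpha_i \rfloor$ for non-trivial markets; it does not use the market share $s^*$ at all. So this half of the analysis survives verbatim for arbitrary $s^* \in (0,1)$.

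Next I would revisit Case 2 (where $p^* = p_{\min}$). The argument there concludes $\mathcal{REV}(AON)/OPT \geq 1 - s^*$, again with no restriction on $s^*$: the bound on $s^* \leq 1/2$ stated in Theorem \ref{thm:revenuebound} was only used to simplify the expression to $1/2$ in the theorem statement, not inside the inequality chain. So I would simply keep the tighter bound $1-s^*$ produced by that chain.

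Finally I would combine the two cases. Since the mechanism must handle both possibilities (the optimum could be attained at $p_{\min}$ or strictly above it), the worst-case ratio is
\[
\frac{\mathcal{REV}(AON)}{OPT} \;\geq\; \min\!\left\{\tfrac{1}{2},\; 1-s^*\right\},
\]
which, upon inversion, gives an approximation ratio of at most $\max\{2,\, 1/(1-s^*)\}$, as claimed. I would also note the consistency check: when $s^* \leq 1/2$ we have $1/(1-s^*) \leq 2$, so the bound collapses to $2$, recovering Theorem \ref{thm:revenuebound}, while for $s^* > 1/2$ the second term dominates and the ratio degrades smoothly as $s^* \to 1$.

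There is no real obstacle here; the corollary is essentially a bookkeeping statement that packages the two case bounds from Theorem \ref{thm:revenuebound} into a single uniform expression valid for all $s^* \in (0,1)$.
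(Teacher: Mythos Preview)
Your proposal is correct and matches the paper's own proof essentially line for line: the paper simply observes that Case~1 of Theorem~\ref{thm:revenuebound} does not use the market share $s^*$, so combining its $1/2$ bound with the $1-s^*$ bound of Case~2 yields the ratio $\max\{2,1/(1-s^*)\}$ valid for all $s^*\in(0,1)$.
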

\begin{proof}
	From the proof of Theorem \ref{thm:revenuebound}, since the arguments of Case 1 do not use the market share $s^*$, it follows that the ratio of \textsc{All-Or-Nothing} for the revenue objective can alternatively be stated as $\max\{2,1/(1-s^*)\}$ and therefore it degrades gracefully with the increase in the market share.
\end{proof}

\noindent The next theorem establishes that the approximation ratio for welfare is also constant. 

\begin{theorem}\label{thm:AON-ar-socialwelfare}
	The approximation ratio of Mechanism \textsc{All-or-Nothing} with respect to the social welfare is at most $1/(1-s^*)$, where the market share $s^* \in (0,1)$. The approximation ratio goes to $1$ as the market becomes fully competitive.
\end{theorem}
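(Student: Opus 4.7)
I will establish $\mathcal{SW}(AON) \geq (1-s^*) \cdot \mathcal{SW}(OPT)$ by splitting on the price $p^*$ of a welfare-optimal envy-free allocation $\vec{x}^*$. Since the mechanism outputs the minimum envy-free price $p_{min}$, Lemma~\ref{lem:small_to_large} gives $p^* \geq p_{min}$, leaving two cases to analyze: $p^* > p_{min}$ and $p^* = p_{min}$. Letting $\vec{z}$ denote the allocation produced by \textsc{All-or-Nothing}, I will bound $\mathcal{SW}(AON)$ from below in each case and take the weaker of the two bounds as the overall approximation.

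\textbf{Case $p^* > p_{min}$.} Here I will show that \textsc{All-or-Nothing} is actually optimal. Every buyer with $x_i^* > 0$ satisfies $v_i \geq p^* > p_{min}$, so it is strictly hungry at $p_{min}$ and receives $z_i = \lfloor B_i/p_{min}\rfloor \geq \lfloor B_i/p^*\rfloor \geq x_i^*$ from the mechanism. Since $v_i > 0$ for such buyers, their welfare contribution under $\vec{z}$ weakly dominates that under $\vec{x}^*$, and summing over all buyers yields $\mathcal{SW}(AON) \geq \mathcal{SW}(OPT)$. Envy-freeness of $(\vec{z}, p_{min})$ then forces equality, which is consistent with the claim since $1/(1-s^*) \geq 1$.

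\textbf{Case $p^* = p_{min}$.} Both allocations use the same price, so hungry buyers contribute identically and the welfare gap comes only from semi-hungry buyers, each of whom values a unit at exactly $p_{min}$. Writing $H$ for the common hungry welfare and $R = m - \sum_{i \text{ hungry}} \lfloor B_i/p_{min}\rfloor$ for the remaining capacity, the optimum allocates $S_{OPT} = \min\bigl(R,\, \sum_{i \text{ semi-hungry}} \lfloor B_i/p_{min}\rfloor\bigr)$ units to semi-hungry buyers, while \textsc{All-or-Nothing}'s lex-order first-fit allocates only some $S_{AON} \leq S_{OPT}$. The key combinatorial observation is that whenever the mechanism skips a semi-hungry buyer $\ell$, the remaining supply at that moment is strictly less than $\lfloor B_\ell/p_{min}\rfloor$, and hence the total leftover $R - S_{AON}$ is at most $\tilde{x}_\ell$, the maximum number of units $\ell$ can receive in a maximal envy-free allocation at $p_{min}$.

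\textbf{Closing the bound and main obstacle.} By the market-share definition, $\tilde{x}_\ell \leq s^* T$, where $T = \sum_{i \text{ hungry}} \lfloor B_i/p_{min}\rfloor + S_{OPT}$ is the size of a maximal envy-free allocation, and since $v_i > p_{min}$ for all hungry buyers we have $H \geq p_{min} \sum_{i \text{ hungry}} \lfloor B_i/p_{min}\rfloor$, giving $p_{min} T \leq H + p_{min} S_{OPT} = \mathcal{SW}(OPT)$. Chaining these, $\mathcal{SW}(OPT) - \mathcal{SW}(AON) = p_{min}(S_{OPT} - S_{AON}) \leq p_{min} \tilde{x}_\ell \leq p_{min}\, s^* T \leq s^* \mathcal{SW}(OPT)$, which rearranges to $\mathcal{SW}(OPT)/\mathcal{SW}(AON) \leq 1/(1-s^*)$ and tends to $1$ as $s^* \to 0$. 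The main obstacle I anticipate is the combinatorial step linking the lex-order waste to $\tilde{x}_\ell$: in particular, handling the boundary case $\lfloor B_\ell/p_{min}\rfloor > R$ (where $\tilde{x}_\ell$ equals $R$ rather than the demand), and pinning down, in the spirit of Case~2 of Theorem~\ref{thm:revenuebound}, a single semi-hungry buyer to which all of the lost welfare can be charged.
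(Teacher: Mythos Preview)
Your proposal is correct and follows essentially the same approach as the paper: both charge the welfare gap at $p_{\min}$ to a single semi-hungry buyer $\ell$ and then bound $\tilde{x}_\ell/\sum_i x_i \le s^*$, with the paper simply reusing the Case~2 calculation from Theorem~\ref{thm:revenuebound} and invoking $v_\ell \le v_i$ for all allocated $i$ in place of your inequality $p_{\min}T \le \mathcal{SW}(OPT)$. Your Case~1 ($p^* > p_{\min}$) is correct but redundant, since for welfare an optimum is always attained at the minimum envy-free price.
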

\begin{proof}
	For social welfare we have, similarly to Theorem \ref{thm:revenuebound}, that
	\begin{align*}
	\frac{\mathcal{SW}(AON)}{OPT} & = \frac{OPT - x_{\ell} \cdot v_{\ell}}{OPT} \geq 
	\frac{OPT - \tilde{x}_{\ell} \cdot v_{\ell}}{OPT} 
	= 1 - \frac{\tilde{x}_{\ell} \cdot v_{\ell}}{\sum_{i=1}^n x_i \cdot v_{i}}
	\geq 1 - \frac{\tilde{x}_{\ell} \cdot v_{\ell}}{\sum_{i=1}^n x_i \cdot v_{\ell}} \\
	& =
	1 - \frac{\tilde{x}_{\ell}}{\sum_{i=1}^{n} x_i} 
	= 1 - s_i^{*} \geq 1 - s^*,
	\end{align*}
	where $OPT$ is now the optimal welfare, $\vec{x}$ the corresponding allocation at $OPT$, and we used the fact that $v_{\ell} \leq v_i$ for all $i \in L$.
\end{proof} 

\bigskip

Finally, \textsc{All-or-Nothing} is optimal among all truthful mechanisms for both objectives whenever the market share $s^*$ is at most $1/2$.

\begin{theorem}\label{thm:splower_revenue}
	Let $M$ be any truthful mechanism that always outputs an envy-free pricing scheme. Then the approximation ratio of $M$ for the revenue and the welfare objective is at least $2-\frac{4}{m+2}$.
\end{theorem}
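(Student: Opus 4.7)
The plan is to exhibit a single instance on which incentive constraints force any truthful envy-free mechanism to leave substantial revenue and welfare on the table. Take $n=2$ buyers with budgets $B_1=B_2=1$ and valuations $v_1=2/m$, $v_2=2/(m+1)$ (assume $m$ even for clarity; odd $m$ is analogous with $\lfloor m/2\rfloor$). A direct check shows that at price $p^{*}=2/(m+1)$ buyer $1$ is hungry with singleton demand $m/2$ while buyer $2$ is semi-hungry with demand $[0,m/2]$, so the envy-free pricing $(p^{*};m/2,m/2)$ attains revenue $2m/(m+1)$ and welfare $(2m+1)/(m+1)$; a further check over the remaining envy-free prices (using Lemma \ref{lem:small_to_large}) confirms these are the optima.

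The heart of the argument is to show that on this instance any truthful envy-free mechanism $M$ must allocate zero units to buyer $2$. Compare the instance to a companion in which buyer $2$'s report is raised to $2/m$, so that both buyers report $2/m$. In the companion, every envy-free price is at least $2/m$, so buyer $2$'s true utility (at value $2/m$) is exactly zero under $M$. If buyer $2$ instead misreports $2/(m+1)$, the original instance and $M$'s output on it apply, and buyer $2$'s utility at her true value becomes $(2/m-p)\,x_2$ where $(p,x)$ is $M$'s output. Truthfulness forces this to be $\leq 0$, so $x_2=0$ whenever $p<2/m$. The cases $p\geq 2/m$ already yield $x_2=0$ by envy-freeness, since buyer $2$'s reported value $2/(m+1)$ lies strictly below such a $p$. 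The critical observation is that $M$ also cannot choose $p\in(2/(m+2),\,2/(m+1))$: at such a price buyer $2$ would be hungry and envy-freeness would force a positive allocation, contradicting $x_2=0$. Hence $M$ is driven to $p\geq 2/(m+1)$ with $x_2=0$.

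With $x_2=0$ pinned down, a direct inspection over $p\in[2/(m+1),\,2/m]$ shows $M$'s revenue and welfare on the instance are each at most $(m/2)\cdot(2/m)=1$, since buyer $1$'s allocation never exceeds $m/2$ (as a hungry singleton or a semi-hungry maximum) in this range. Dividing the optima by this upper bound yields approximation ratios of at least $2-2/(m+1)$ for revenue and $2-1/(m+1)$ for welfare; a direct algebraic comparison confirms both are at least $2-4/(m+2)$ for every $m\geq 2$. The main subtlety when writing this up is making the case analysis on $p$ exhaustive so that no exotic choice by $M$ evades the constraint $x_2=0$; a secondary point is the parity of $m$, handled by the $\lfloor m/2\rfloor$ substitution together with a corresponding shift of $v_2$ to the next-coarser grid point.
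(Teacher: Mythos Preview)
Your argument has a genuine gap at the point where you invoke the companion instance. You assert that when both buyers report $2/m$ (with $B_1=B_2=1$, $m$ even), ``every envy-free price is at least $2/m$.'' This is false. For any price $p$ with $2/(m+2)<p<2/m$, both buyers are hungry and each has singleton demand $\lfloor 1/p\rfloor=m/2$, so the total demand is exactly $m$ and the price is envy-free. Consequently the mechanism $M$ is free to output such a $p$ in the companion instance, giving buyer~2 strictly positive utility $(2/m-p)\cdot(m/2)$ there. Your truthfulness deduction (``buyer~2's utility in the companion is zero, hence any positive allocation after deviating to $2/(m+1)$ violates truthfulness'') therefore collapses: nothing prevents $M$ from outputting, say, $p=2/(m+1)$ with $x_1=x_2=m/2$ on \emph{both} profiles, which is consistent with the single deviation you check and attains the optimal welfare on your instance.

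The root cause is the choice of parameters. With $\lfloor B/v_2\rfloor=m/2$, the total hungry demand just below $v_2$ is exactly $m$, so $v_2$ is \emph{not} the minimum envy-free price and there is a whole interval of envy-free prices beneath it on which buyer~2 must be served. The paper's construction instead sets $B$ so that $\lfloor B/v_2\rfloor=m/2+1$; then any price strictly below $v_2$ forces total hungry demand at least $m+2$, making $v_2$ the minimum envy-free price and closing off the escape route you left open. Even with that fix, a single companion profile does not suffice: the paper needs a \emph{sequence} of reports $v_2<v_2^{(1)}<v_2^{(2)}<\cdots<v_1$ for buyer~2, arguing inductively that truthfulness forces $x_2^{(j)}\ge j+1$ at each step until the unit constraint is violated. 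Your two-profile shortcut does not replicate this ratcheting, and I do not see how to repair it without reintroducing the sequence.
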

\begin{proof}
	Consider an auction with equal budgets, $B$, and valuation profile $\mathbf{v}$. Assume that buyer $1$ has the highest valuation, $v_1$, buyer $2$ the second highest valuation $v_2$, with the property that $v_1 > v_2+\epsilon$, where $\epsilon$ is set later. Let $v_i < v_2$ for all buyers $i=3,4,\ldots,n$. Set $B$ such that $\lfloor \frac{B}{v_2} \rfloor=\frac{m}{2}+1$ and $\epsilon$ such that $\lfloor \frac{B}{v_2+\epsilon} \rfloor = \frac{m}{2}$. Informally, the buyers can afford $\frac{m}{2}+1$ units at prices $v_2$ and $v_2 +\epsilon$. Note that on this profile, Mechanism \textsc{All-or-Nothing} outputs price $v_2$ and allocates $\frac{m}{2}+1$ units to buyer $1$. For a concrete example of such an auction, take $m=12$, $v_1=1.12$, $v_2=1.11$ (i.e. $\epsilon = 0.01$) and $B=8$ (the example can be extended to any number of units with appropriate scaling of the parameters). 
	
	Let $M$ be any truthful mechanism, $p_M$ its price on this instance, and $p^*$ the optimal price (with respect to the objective in question). The high level idea of the proof, for both objectives, is the following. We start from the profile $\vec{v}$ above, where $p_{min}=v_2$ is the minimum envy-free price, and argue that if $p^*\neq v_2$, then the bound follows. Otherwise, $p^*=v_2$, case in which we construct a series of profiles $\vec{v},\vec{v}^{(1)},\vec{v}^{(2)},\ldots,\vec{v}^{(k)}$ that only differ from the previous profile in the sequence by the reported valuation $v_2^{(j)}$ of buyer $2$. We argue that in each such profile, either the mechanism allocates units to buyer $1$ only, case in which the bound is immediate, or buyer $2$ is semi-hungry. In the latter case, truthfulness
	and the constraints on the number of units will imply that any truthful mechanism must allocate to buyer $2$ zero items, yielding again the required bound.
	
	First, consider the social welfare objective. Observe that for the optimal price $p^*$ on profile $\mathbf{v}$, it holds that $p^*= v_2$. We have a few subcases:
	\begin{description}
		\item[\emph{Case 1}]: $p_M < v_2$. Then $M$ is not an envy-free mechanism, since in this case there would be over-demand for units. \\
		
		\item[\emph{Case 2}]: $p_M > v_2$: Then $M$ allocates units only to buyer $1$, achieving a social welfare of at most $(\frac{m}{2}+1)v_2$. The maximum social welfare is $m \cdot v_2$, so the approximation ratio of $M$ is at least $\frac{m}{(m/2)+1}=2-\frac{4}{m+2}$. \\
		
		\item[\emph{Case 3}]: $p_M = v_2$: Let $x_2$ be the number of units allocated to buyer $2$ at price $v_2$; note that since buyer $2$ is semi-hungry at $v_2$, any number of units up to $\frac{m}{2}-1$ is a valid allocation. If $x_2 =0$, then $M$ allocates units only to buyer $1$ at price $v_2$ and for the same reason as in Case 2, the ratio is greater than or equal to $2-\frac{4}{m+2}$ ; so we can assume $x_2 \geq 1$. 
		
		Next, consider valuation profile $\mathbf{v}^{(1)}$ where for each buyer $i \neq 2$, we have $v_i^{(1)}=v_i$, while for buyer $2$, $v_2<v_2^{(1)}<v_2+\epsilon$. By definition of $B$, the minimum envy-free price on $\mathbf{v}^{(1)}$ is $v_2^{(1)}$. Let $p_M^{(1)}$ be the price output by $M$ on valuation profile $\vec{v}^{(1)}$ and take a few subcases:\\
		\begin{description}
			\setlength{\itemindent}{.01in}
			\item[a).] $p_{M}^{(1)} > v_2^{(1)}$: Then using the same argument as in Case 2, the approximation is at least $2-\frac{4}{m+2}$. \\
			\item[b).] $p_{M}^{(1)} < v_2^{(1)}$: This cannot happen because by definition of the budgets, $v_{2}^{(1)}$ is the minimum envy-free price. \\
			\item[c).] $p_{M}^{(1)} = v_2^{(1)}$: Let $x_2^{(1)}$ be the number of units allocated to buyer $2$ at profile $\mathbf{v^{(1)}}$; we claim that $x_2^{(1)} \geq 2$. Otherwise, if $x_2^{(1)} \leq 1$, then on profile $\mathbf{v}^{(1)}$ buyer $2$ would have an incentive to report $v_2$, which would move the price to $v_2$, giving buyer $2$ at least as many units (at a lower price), contradicting truthfulness. \\
		\end{description} 
		
		Consider now a valuation profile $\mathbf{v}^{(2)}$, where for each buyer $i \neq 2$, it holds that $v_i^{(2)}=v_i^{(1)}=v_i$ and for buyer $2$ it holds that $v_2^{(1)}<v_2^{(2)}<v_2+\epsilon$. For the same reasons as in Cases a-c, the behavior of $M$ must be such that: \\
		\begin{itemize}
			\setlength{\itemindent}{.2in}
			\item the price output on input $\mathbf{v}^{(2)}$ is $v_2^{(2)}$ (otherwise $M$ only allocates to buyer $1$, and the bound is immediate), and \\ 
			\item the number of units $x_2^{(2)}$ allocated to buyer $2$ is at least $3$ (otherwise truthfulness would be violated). \\
		\end{itemize} 
		
		By iterating through all the profiles in the sequence constructed in this manner, we arrive at a valuation profile $\mathbf{v}^{(k)}$ (similarly constructed), where the price is $v_2^{(k)}$ and buyer $2$ receives at least $m/2$ units. However,  buyer $1$ is still hungry at price $v_2^{(k)}$ and should receive at least $\frac{m}{2}+1$ units, which violates the unit supply constraint. This implies that in the first profile, $\vec{v}$, $M$ must allocate $0$ units to buyer $2$ (by setting the price to $v_2$ or to something higher where buyer $2$ does not want any units). This implies that the approximation ratio is at least $2-\frac{4}{m+2}$.
	\end{description}

	\noindent For the revenue objective, the argument is exactly the same, but we need to establish that at any profile $\mathbf{v}$ or $\mathbf{v^{(i)}}$, $i=1,\ldots,k$ that we construct, the optimal envy-free price is equal to the second highest reported valuation, i.e. $v_2$ or $v_2^{(i)}$, $i=1,\ldots,k$ respectively. To do that, choose $v_1$ such that $v_1 = v_2+\delta$, where $\delta > \epsilon$, but small enough such that $\lfloor \frac{B}{v_2+\delta} \rfloor = \lfloor\frac{B}{v_2}\rfloor$, i.e. any hungry buyer at price $v_2+\delta$ buys the same number of units as it would buy at price $v_2$. Furthermore, $\epsilon$ and $\delta$ can be chosen small enough such that $(\frac{m}{2}+1)(v_2+\delta) < m\cdot v_2$, i.e. the revenue obtained by selling $\frac{m}{2}+1$ units to buyer $1$ at price $v_2+\delta$ is smaller than the revenue obtained by selling $\frac{m}{2}+1$ units to buyer $1$ and $\frac{m}{2}-\epsilon$ units to buyer $2$ at price $v_2$. This establishes the optimal envy-free price is the same as before, for every profile in the sequence and all arguments go through.
	
	Given that we are working over a discrete domain, for the proof to go through, it suffices to assume that there are $m$ points of the domain between $v_1$ and $v_2$, which is easily the case if the domain is not too sparse. Specifically, for the concrete example presented at the first paragraph of the proof, assuming that the domain contains all the decimal floating point numbers with up to two decimal places suffices.
\end{proof}

\section{Impossibility Results}\label{sec:impossibility}

In this section, we state our impossibility results, which imply that truthfulness can only be guaranteed when there is some kind of wastefulness; a similar observation was made in ~\cite{borgs2005multi} for a different setting.

\begin{theorem}\label{thm:noparetoeff}
	There is no Pareto efficient, truthful mechanism that always outputs an envy-free pricing, even when the budgets are known.
\end{theorem}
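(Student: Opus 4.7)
The plan is to assume for contradiction that $M$ is a Pareto efficient, truthful, envy-free pricing mechanism, and exhibit a two-buyer profile on which truthfulness must fail. I will take $n=2$ buyers and $m=2$ units, with known budgets $B_1=1$ and $B_2=3$, and truthful valuations $v_1=2$, $v_2=1$. The tight budget $B_1=1$ pins buyer 1's demand at $1$ whenever $p\ge 1$, and buyer 2's large budget forces the envy-free constraint to be tight at any $p<1$, which will pin down the Pareto efficient outcome uniquely on both the truthful and the deviated profile.

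The first key step is to show that on the truthful profile there is a unique envy-free Pareto efficient outcome. I would verify that the minimum envy-free price is exactly $p_{\min}=1$: for any $p<1$, buyer 2 is hungry with demand at least $2$ while buyer 1 still demands $1$, so total demand exceeds supply. At $p=1$, buyer 1 is hungry with singleton demand $\{1\}$ and buyer 2 is semi-hungry with demand set $\{0,1,2\}$; the envy-free allocations are $(1,0)$ and $(1,1)$, and $(1,1)$ Pareto dominates $(1,0)$ (identical utilities, strictly greater revenue). Every envy-free outcome at $p>1$ leaves buyer 1 unable to afford even one unit, so it is weakly Pareto dominated by $(1,(1,1))$. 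Consequently $M$ is forced to output $(1,(1,1))$, giving buyer 2 true utility $(v_2-1)\cdot 1 = 0$.

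The second key step is to consider the deviation where buyer 2 misreports $v_2'=0.6$ (or any rational value strictly between $1/2$ and $1$ that lies on the discrete valuation grid of Section \ref{sec:AON}; $\epsilon=0.1$ suffices). Repeating the envy-free analysis on $(v_1,v_2')=(2,0.6)$ yields that the minimum envy-free price drops to $0.6$, that $(1,0)$ and $(1,1)$ are the only envy-free allocations at this price, and that $(0.6,(1,1))$ Pareto dominates every other envy-free outcome. In particular, any outcome at $p\in(0.6,1]$ either has buyer 2 getting zero units (dominated by more revenue at $(0.6,(1,1))$) or is dominated because buyer 1's utility is weakly smaller and the seller's revenue is strictly smaller. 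Thus $M$ outputs $(0.6,(1,1))$ on the deviated profile, giving buyer 2 true utility $(v_2-0.6)\cdot 1 = 0.4 > 0$, strictly greater than her truthful utility of $0$. This contradicts dominant-strategy truthfulness.

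The main obstacle is the uniqueness verification: one has to enumerate envy-free allocations across all price ranges on both profiles and establish Pareto dominance exhaustively, rather than merely pointwise at $p_{\min}$. The decisive feature that makes this uniqueness go through is the asymmetry of the budgets---at prices above $p_{\min}$, $B_1=1$ means raising the price strictly loses both welfare and revenue, while at prices below $p_{\min}$, the large budget $B_2=3$ lets buyer 2 overload the market and violate envy-freeness. This two-sided squeeze is what forces a unique Pareto efficient outcome on each profile and enables the one-shot manipulation.
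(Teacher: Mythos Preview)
Your proof is correct and follows the same overall strategy as the paper's: exhibit two profiles differing only in one buyer's report, argue that Pareto efficiency together with envy-freeness pins down a unique outcome on each, and observe that the misreporting buyer obtains the same number of units at a strictly lower price. The case analysis you sketch goes through exactly as you describe.

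The concrete instances differ. The paper uses $m=3$, equal budgets $B_1=B_2=6$, and the profiles $(v_1,v_2)=(3,3)$ and $(3,2.5)$; uniqueness of the Pareto efficient envy-free outcome on each profile comes from the fact that at the minimum envy-free price all three units must be sold, and any higher price sells at most two. Your construction instead uses $m=2$ and asymmetric budgets $B_1=1$, $B_2=3$, exploiting the tight budget $B_1$ to force buyer~1's demand to exactly one unit at $p\ge 1$ and to zero at $p>1$, which makes the Pareto-dominance enumeration especially clean. The trade-off is that the paper's version establishes the impossibility already in the equal-budget subclass, whereas your argument relies essentially on the budget asymmetry; on the other hand, your instance has one fewer unit and the dominance checks are shorter.
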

\begin{proof}
	Assume by contradiction that a Pareto efficient and truthful mechanism that always outputs an envy-free price exists. Consider the following instance $I_1$ with $n=2$ and $m=3$ (the instance can be adapted to work for any number of buyers by adding many buyers with very small valuations and many items by scaling the budgets appropriately): $v_1 = v_2 = 3$ and $B_1=B_2=6$. It is not hard to see that the only Pareto efficient envy-free outcome is to set $p=3$ and allocate $2$ items to one buyer (wlog buyer 1) and $1$ item to the other buyer. Indeed, any price $p' < p$ would not be envy-free and any price $p' >p$ would sell $0$ items, yielding a utility of $0$ for both agents and the auctioneer. At the same time, any allocation that does not allocate all three items at price $p=3$ is Pareto dominated by the above allocation, since the utilities of buyers $1$ and $2$ would be $0$, but the utility of the auctioneer would be smaller.
	
	Now consider a new instance $I_2$ where $v_1=3$, $v_2=2.5$ (and it still holds that $B_1=B_2=6$). We claim that the only Pareto efficient envy-free outcome $(x,q)$ is to set the price $q=2.5$, allocate $x_1=2$ items to buyer 1 and $x_2=1$ item to buyer 2. At $(x,2.5)$, the utility of buyer $1$ is $u_1(x,2.5)=6-5 = 1$, the utility of buyer $2$ is $u_2(x,2.5)=2.5-2.5=0$ and the utility of the auctioneer is $u_a(x,2.5) = 2.5 \cdot 3 = 7.5$. 
	The only other possible allocation $x'$ at price $2.5$ would be $x'_1=2$ (since buyer 1 is hungry at price $2.5$) and $x'_2=0$, which is Pareto dominated by $(x,2.5)$. Therefore, for another Pareto efficient pair $(x',q')$ to exist, it would have to hold that $q' \neq 2.5$.
	
	Obviously, any choice $q' < 2.5$ is not envy-free and therefore we only need to consider the case when $q'>2.5$. At any such price $q'$, the utility of buyer $1$ is at most $1$, since the buyer can purchase $x'_1\leq 2$ items at a price strictly higher than $2.5$, the utility of buyer $2$ is $0$ since the price is higher than its valuation and hence it gets $x'_2=0$ items, and finally, the utility of the auctioneer is at most $6$, since it can only sell at most two items at a price no higher than $3$. This means that $(x',q')$ is Pareto dominated by $(x,2.5)$.
	
	The paragraphs above establishes that on Instance $I_1$, buyer 2 receives one item at price $3$ and on instance $I_2$, buyer 2 receives one item at price $2.5$. But then, buyer 2 would have an incentive to misreport his valuation on instance $I_1$ as being $v_2'=2.5$ and receive the same number of items at a lower price, thus increasing its utility and contradicting truthfulness.
	
	Since the proof only requires valuations and budgets to lie on points $2.5$, $3$ and $6$, the theorem also holds for the discrete domain.
\end{proof}

\bigskip

The next theorem provides a stronger impossibility result.
First, we provide the necessary definitions. A buyer $i$ on profile input $v$ is called \emph{irrelevant} if at the minimum envy-free price $p$ on $v$, the buyer can not buy even a single unit. A mechanism is called \emph{in-range} if it always outputs an envy-free price in the interval $[0,v_j]$ where $v_j$ is the highest valuation among all buyers that are not irrelevant. Finally, a mechanism is \emph{non-wasteful} if at a given price $p$, the mechanism allocates as many items as possible to the buyers.
Note that Pareto efficiency implies in-range and non-wastefulness, but not the other way around. In a sense, while Pareto efficiency also determines the price chosen by the mechanism, non-wastefulness only concerns the allocation given a price, whereas in-range only restricts prices to a ``reasonable'' interval.

\begin{theorem}\label{thm:nononwasteful}
	There is no in-range, non-wasteful and truthful mechanism that always outputs an envy-free pricing scheme, even when the budgets are known.
\end{theorem}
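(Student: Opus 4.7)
My plan is to derive a contradiction via three carefully chosen two-buyer instances with $m=3$ units and equal budgets $B_1=B_2=6$. I would first fix the base instance $I_0=(v_1,v_2)=(3,3)$, where a short calculation shows the minimum envy-free price is $p=3$ and both buyers are non-irrelevant (each can afford two units at $p=3$). Since the maximum valuation of a non-irrelevant buyer equals $3$, in-range combined with envy-freeness forces the output price $q=3$, and non-wastefulness forces the total allocation to be $3$ with each $x_i\leq 2$, leaving only $(2,1)$ and $(1,2)$; by relabeling buyers, take $(2,1)$ without loss of generality, and observe that both buyers have utility $0$.

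Next I would consider the perturbed instance $I_1=(3,2.5)$. Its minimum envy-free price is $2.5$, and at $q=2.5$ the unique non-wasteful allocation is $(2,1)$: buyer $1$ is hungry and takes its singleton demand of $2$, while buyer $2$ is semi-hungry and must take the remaining unit to realize the maximum total. If the mechanism chose $q=2.5$ on $I_1$, then buyer $2$ in $I_0$ (with true value $3$) could misreport $v_2'=2.5$ and receive one unit at price $2.5$, gaining utility $0.5>0$, which violates truthfulness. Hence the mechanism must pick some $q_1>2.5$ on $I_1$, which by non-wastefulness forces allocation $(2,0)$. A completely symmetric argument on $I_2=(2.5,3)$ yields an output $(q_2,(0,2))$ with $q_2>2.5$.

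Finally I would analyze the critical instance $I_3=(2.5,2.5)$. Here the minimum envy-free price equals the maximum valuation, so in-range together with envy-freeness pin $q=2.5$ exactly, while non-wastefulness forces a total allocation of $3$, leaving only $(2,1)$ or $(1,2)$. If $I_3$ outputs $(2.5,(2,1))$, then buyer $1$ in $I_1$ (true value $3$) profitably misreports $v_1'=2.5$: it then receives $2$ units at price $2.5$ for true utility $2(3-2.5)=1$, strictly larger than its original utility $2(3-q_1)<1$ since $q_1>2.5$. If instead $I_3$ outputs $(2.5,(1,2))$, then by symmetry buyer $2$ in $I_2$ profitably misreports $v_2'=2.5$. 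Either case violates truthfulness.

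The hard part will be the strict inequalities $q_1>2.5$ and $q_2>2.5$: in-range together with non-wastefulness alone permit any $q\in[2.5,3]$ on $I_1$ and $I_2$, so the strict separation from $2.5$ can only come via the truthfulness constraint inherited from $I_0$. Once that is established, the final deviation to $I_3$ is a direct utility calculation.
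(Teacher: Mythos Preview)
Your proof is correct and uses essentially the same four instances as the paper (two buyers, three units, equal budgets near $6$, valuations in $\{2.5,3\}$), but it is organized differently. The paper first establishes a general intermediate result (Lemma~\ref{lem:charac}) via an iterative deviation argument, forcing the output price on any non-trivial profile into $\{p_{\min},p_{\min}+\gamma\}$, and only then examines the instances. You bypass this lemma entirely: on $I_0=(3,3)$ and $I_3=(2.5,2.5)$ the minimum envy-free price coincides with the top non-irrelevant valuation, so in-range and envy-freeness alone pin down the price exactly, and on $I_1,I_2$ you only need the weak conclusion $q>2.5$, which follows from a single deviation back to $I_0$. Your route is therefore more elementary and self-contained; the paper's lemma is a stronger structural statement that may be of independent interest but is not needed for the impossibility itself.

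One small point worth stating explicitly in your write-up: on $I_1$ and $I_2$, in-range gives $q_1,q_2\le 3$, so at the deduced price the hungry buyer indeed still demands exactly $2$ units (even at $q=3$, non-wastefulness forces the semi-hungry buyer to take $2$). You use this when computing the pre-deviation utility $2(3-q_1)$; the bound $q_1\le 3$ is what guarantees the allocation is $(2,0)$ rather than $(0,0)$.
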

\begin{lemma}\label{lem:charac}
	Let $M$ be an in-range, non-wasteful and truthful mechanism. Then on any valuation profile $\mathbf{v}$ which is not trivial, $M$ must output a price $p \in \{p_{\min},p_{\min}+\gamma\}$, where $p_{\min}=\min\{p \in \mathbb{O}:p \text{ is envy-free on }\mathbf{v}\}$ and $\gamma$ is the distance between two consecutive elements of $\mathbb{O}$.
\end{lemma}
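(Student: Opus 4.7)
The plan is to prove this by contradiction. Since $M$ outputs an envy-free pricing, we automatically have $M(\mathbf{v}) \geq p_{\min}$, so the task reduces to showing $M(\mathbf{v}) \leq p_{\min}+\gamma$. Suppose for contradiction that $p^* := M(\mathbf{v}) \geq p_{\min}+2\gamma$; the goal will be to construct a profitable misreport for some buyer, violating truthfulness. A first useful observation is that the in-range condition gives $p^* \leq v_j$, where $v_j$ is the maximum valuation among non-irrelevant buyers on $\mathbf{v}$, so at least one non-irrelevant buyer has valuation at least $p^*$.

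I plan to argue by induction on $k = |\{i : v_i > p_{\min}+\gamma,\ i \text{ is non-irrelevant on }\mathbf{v}\}|$. The base case $k=0$ will be immediate: every non-irrelevant buyer then has valuation at most $p_{\min}+\gamma$, so $v_j \leq p_{\min}+\gamma$ and in-range forces $p^* \leq p_{\min}+\gamma$, contradicting $p^* \geq p_{\min}+2\gamma$. For the inductive step with $k \geq 1$, I pick a buyer $i$ with $v_i > p_{\min}+\gamma$ and consider the deviation $v_i' = p_{\min}+\gamma$, producing the profile $\mathbf{v}' = (v_i', \mathbf{v}_{-i})$. A key invariant to verify is $p_{\min}(\mathbf{v}') = p_{\min}(\mathbf{v})$: both $v_i$ and $v_i'$ exceed $p_{\min}$, so buyer $i$'s demand at every price $\leq p_{\min}$ is unchanged, hence the set of envy-free prices in that range is unchanged. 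Since the new profile has exactly $k-1$ buyers with valuation above $p_{\min}+\gamma$, the inductive hypothesis will yield $M(\mathbf{v}') = q \in \{p_{\min}, p_{\min}+\gamma\}$.

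The contradiction will then come from truthfulness by comparing buyer $i$'s utility on $\mathbf{v}$ and $\mathbf{v}'$ at true valuation $v_i$. In the principal subcase $v_i > p^*$ and $B_i \geq p^*$, envy-freeness at $p^*$ pins down $x_i = \min(\lfloor B_i/p^*\rfloor, m) \geq 1$ and gives original utility $(v_i - p^*) x_i > 0$. If $q = p_{\min}$, then buyer $i$ (reporting $v_i' = p_{\min}+\gamma > q$) is hungry at $q$ on $\mathbf{v}'$, so $x_i' = \min(\lfloor B_i/q\rfloor, m) \geq x_i$, and the new utility $(v_i - q) x_i'$ strictly exceeds $(v_i - p^*) x_i$, violating truthfulness.

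The hard part will be the subcase $q = p_{\min}+\gamma$, where $v_i' = q$ makes buyer $i$ semi-hungry at $q$ on $\mathbf{v}'$ with envy-free demand $\{0,\ldots,\lfloor B_i/q\rfloor\}$, and non-wastefulness only pins down the total allocation at $q$, leaving $x_i'$ possibly equal to $0$. To handle this, I intend to also consider the alternative deviation $v_i' = p_{\min}$ (which still drops $k$ by one and keeps $p_{\min}$ unchanged, so the inductive hypothesis again applies) and argue that at least one of the two deviations $v_i' \in \{p_{\min}, p_{\min}+\gamma\}$ must be strictly profitable. The argument will combine non-wastefulness with Lemma~\ref{lem:small_to_large} (envy-freeness of $p_{\min}+\gamma$ on $\mathbf{v}$) to rule out both deviations simultaneously yielding $x_i' = 0$ while $x_i \geq 1$ originally. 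The budget-constrained subcase $B_i < p^*$ (where $x_i = 0$ and original utility is $0$) will be handled analogously, by exhibiting a deviation that produces a positive allocation for $i$ at a price at most $p_{\min}+\gamma$ and hence strictly positive utility.
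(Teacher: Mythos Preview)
Your inductive framing is essentially the paper's iterative argument recast: both repeatedly lower one high-valuation buyer's report to $p_{\min}+\gamma$ and use truthfulness to force a contradiction. The structure is fine, but there is a genuine gap in how you handle the ``hard case'' $q=p_{\min}+\gamma$.

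Your proposed fix---introducing a second deviation $v_i'=p_{\min}$ and arguing that \emph{one} of the two deviations must succeed---does not work. If $v_i'=p_{\min}$ and the mechanism outputs $q=p_{\min}+\gamma$, then $v_i'<q$ and buyer $i$ receives nothing; if instead $q=p_{\min}$, buyer $i$ is semi-hungry and non-wastefulness (which constrains only the \emph{total} allocation) does not prevent the mechanism from routing all leftover units to other semi-hungry buyers with true valuation $p_{\min}$, of which there may be many. So both deviations can simultaneously give $x_i'=0$, and the appeal to Lemma~\ref{lem:small_to_large} does not rule this out.

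The missing observation is much simpler and makes the second deviation unnecessary. At price $q=p_{\min}+\gamma$ on $\mathbf{v}'$, the set of buyers with reported valuation at least $q$ is exactly the set of buyers who are \emph{hungry} at $p_{\min}$ on $\mathbf{v}$ (every such buyer has $v_j\ge p_{\min}+\gamma>p_{\min}$). Since $p_{\min}$ is envy-free on $\mathbf{v}$, the total demand of these hungry buyers at $p_{\min}$ is at most $m$; hence their total maximum demand at the higher price $q$ is also at most $m$. Non-wastefulness therefore forces \emph{every} such buyer---including the semi-hungry deviator $i$---to receive the full $\lfloor B_i/q\rfloor$ units. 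This is exactly how the paper closes the case (though the paper states it tersely).

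A secondary point: you should not pick an \emph{arbitrary} $i$ with $v_i>p_{\min}+\gamma$, but one who is allocated a positive number of units at $p^*$ (the paper picks from the set $J$ of such buyers). This guarantees $B_i\ge p^*>p_{\min}+\gamma$, so that $\lfloor B_i/q\rfloor\ge 1$ and the post-deviation utility is strictly positive; without this, the edge case $B_i\in[p_{\min},p_{\min}+\gamma)$ leaves you with zero allocation on both sides and no contradiction. Non-triviality together with non-wastefulness guarantees such a buyer exists at $p^*$.
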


\begin{proof}
	Assume by contradiction that $M$ does not always output a price $p \in \{p_{\min},p_{\min}+\gamma\}$.  Let $\mathbf{v}=(v_1,\ldots,v_n)$ be any valuation profile that is not trivial and let $p_v$ be the price outputted by $M$; by assumption, it holds that $p_v > p_{\min}+\gamma$. By the assumption that $M$ is in-range, it holds that $v_j \geq p_v$ for some relevant buyer $j \in N$. Define 
	$$J = \{j:v_j \geq p_v : j \text{ is allocated a non-zero number of units} \}$$
	as the set of all relevant buyers with valuations at least as high as the envy-free price chosen by $M$.
	
	Now, consider an instance $\mathbf{v^{1}}$ such that $v_{i}^{1} = v_i$ for all buyers $i \in N \backslash \{j_1\}$ and $v_{j_1}^{1} = p_{\min}+\gamma$ for some buyer $j_1 \in J$, i.e. the instance obtained by $\mathbf{v}$ when some buyer $j_1 \in J$ reports a valuation equal to $p_{\min}+\gamma$. Let $p_{1}$ be the price outputted by $M$ on input $\mathbf{v^{1}}$. Note that since on instance $\mathbf{v^{1}}$ buyer $j_1$'s valuation is still higher than $p_{\min}$, it holds that $p_{\min}$ is still the minimum envy-free price in $\mathbb{O}$ on the profile $\mathbf{v^{1}}$.
	
	\begin{itemize}
		\item Assume first that $p_{1}=p_{\min}$. In that case, buyer $j_1$ on input profile $\mathbf{v}$ would have an incentive to misreport its valuation as $v_{j_1}^{1}=p_{\min}+\gamma$; that would lower the price and since $B_{j_1}$ is fixed, the buyer would receive at least the same amount of units at a lower price (since it still appears to be hungry at price $p_{\min}$). This would contradict the truthfulness of $M$.
		\item
		Now consider the case when $p_1 = v_{j_1}^1=p_{\min}+\gamma$. Note that since $p_1 > p_{\min}$, it holds that $\lfloor B_{j_1}/p_1 \rfloor \leq \lfloor B_{j_1}/p_{\min} \rfloor$, i.e. buyer $j_1$ can not demand more units at price $p_1$ compared to $p_{\min}$. On profile $\mathbf{v}$, it would be possible to allocate $\lfloor B_{j_1}/p_{\min} \rfloor$ units to buyer $j_1$ at price $p_{\min}$, therefore on profile $\mathbf{v^1}$, it is possible to allocate $\lfloor B_{j_1}/p_{1} \rfloor$ units to buyer $j_1$ at price $p_1 = p_{\min}+\gamma$. Buyer $j_1$ is semi-hungry at $p_1$ but since $M$ is non-wasteful, it must allocate at least $\lfloor B_i/p_1 \rfloor \geq \lfloor B_i/p_v \rfloor  $ units to buyer $j_1$ at a price $p_1 < p_v$, and buyer $j_1$ increases its utility by misreporting.
	\end{itemize}
	From the discussion above, it must hold that $p_1 > p_{\min}+\gamma$. For the valuation profile $\mathbf{v^1}$ (which can be seen as the different instance where buyer $1$ has deviated from $v_1$ to $p_{\min}+\gamma$), update the set $J := \{j:v_j \geq p_1: j \text{ is allocated a non-zero number of units} \}$. If $J = \emptyset$, then Mechanism $M$ is not in-range and we have obtained a contradiction. Otherwise, there must exist some other buyer $j_2 \in J$ with valuation higher than $p_1$. 
	
	Now, consider such a buyer $j_2 \in J$ and the instance $\mathbf{v^{2}}$ such that $v_{i}^{2} = v_i^{1}$ for all buyers $i \in N \backslash \{j_2\}$ and $v_{j_2}^{2}=p_{\min}+\gamma$ for buyer $j_2$, i.e. the instance obtained from $\mathbf{v^{1}}$ when some buyer $j_2$ in $J$ misreports its value being between $p_{\min}+\gamma$. Note that for the same reasons explained above, $p_{\min}$ is the minimum envy-free price in $\mathbb{O}$ on profile $\mathbf{v}^2$ as well. Let $p_2$ be the price outputted by $M$ on valuation profile $\mathbf{v^2}$. Using exactly the same arguments as we did before, we can argue that by truthfulness, it holds that $p_2 \notin \{p_{\min},p_{\min}+\gamma\}$ and therefore it must hold that $p_2 > p_{\min}+\gamma$, as every other choice is not envy-free.
	
	By iteratively considering sequences of valuations obtained in this manner, we eventually obtain an instance $\mathbf{v^{k-1}}$ such that $J = \{j_k\}$, i.e. there is only one buyer with a valuation higher than the envy-free price $p_{k-1}$ output by $M$. Repeating the argument once more will result in a valuation profile $\mathbf{v^{k}}$ where the price $p_k$ is higher than the reported valuation $v_{j_k}^{k}=p_{\min}+\gamma$ of buyer $j_k$ and the set $J$ will be empty, contradicting the fact that $M$ is in-range.  
	
	Overall, this implies that $M$ either violates truthfulness, non-wastefulness or in-range, contradicting our assumption.
\end{proof}

\bigskip

We remark here that in the continuous domain, Lemma \ref{lem:charac} can be strengthened so that $M$ can only output the minimum envy-free price, whenever it exists. Using Lemma \ref{lem:charac}, we can now prove the theorem.

\begin{proof}(of Theorem \ref{thm:nononwasteful})
	Assume by contradiction that such an in-range, non-wasteful and truthful mechanism $M$ exists. We will consider three different instances\footnote{The instances can be extended to any number of buyers by simply adding buyers with very low valuations and to many items by scaling the valuations and budgets appropriately.}  with $n=2$ and $m=3$, denoted $(v_1,v_2)$ where $v_1$ denotes the valuation of buyer 1 and $v_2$ denotes the valuation of buyer 2, with budgets $B_1=B_2=6+2\gamma$. 
	
	First, consider the instance $(2.5,2.5)$ and note that since the instance is not trivial and since the minimum envy-free price is $2.5$, by Lemma \ref{lem:charac}, the price chosen by $M$ for this instance must be either $2.5$ or $2.5 + \gamma$. Furthermore, since $M$ is in-range, the price can not be $2.5 + \gamma$, therefore the price chosen on $(2.5,2.5)$ is $2.5$. Since $M$ is non-wasteful and each buyer can afford exactly $2$ items at price $2.5$ and there are $3$ available items, one buyer (wlog buyer 1) gets allocated $2$ items and the other buyer (wlog buyer 2) gets allocated $1$ item at this price.
	
	Now consider the instance $(3,2.5)$ and note that since it is not trivial and since again, $2.5$ is the minimum envy-free price, $M$ must either output $2.5$ or $2.5 + \gamma$ as the price. Assume first that $M$ selects the price to be $2.5+\gamma$. Since buyer $1$ is hungry at this price and can afford to buy exactly $2$ units, its allocation on instance $(3,2.5)$ is $2$ units at price $2.5+\gamma$. But then, on instance $(3,2.5)$ buyer 1 would have an incentive to misreport its valuation as being $2.5$ since on the resulting instance, which is $(2.5,2.5)$, it still receives $2$ items at a lower price, increasing its utility. Note that if it was buyer 2 that received $2$ items on instance $(2.5,2.5)$, we could have made the same argument using instance $(2.5,3)$ instead.
	
	Finally, assume that on instance $(3,2.5)$, $M$ outputs $2.5$ as the price. By non-wastefulness, buyer 2 receives exactly $1$ unit at this price. But then, consider the instance $(3,3)$, where, using the same arguments as in the case of instance $(2.5,2.5)$, Mechanism $M$ must output $3$ as the price and allocate $2$ units to one buyer and $1$ unit to the other buyer. Crucially, both buyers have utility $0$ on instance $(3,3)$. But then, buyer 2 could misreport its valuation as being $2.5$, resulting in instance $(3,2.5)$ where it receives $1$ unit at a price lower than its actual valuation, benefiting from the misreport. This contradicts truthfulness.
	
	Assume by contradiction that such an in-range, non-wasteful and truthful mechanism $M$ exists. Consider the same instance $I_1$ as the one used in the proof of Theorem \ref{thm:noparetoeff}, with $n=2$, $m=3$ and $v_1=v_2=3$ and $B_1=B_2=6+2\gamma$. (Again the proof can be generalized to many agents and units similarly to the proof of Theorem \ref{thm:noparetoeff}). By Lemma \ref{lem:charac} and since $I_1$ is not trivial, $M$ must either output $p=3$ or $p=3+\gamma$ and by the fact that it is in-range, it must output $p=3$. Since $M$ is non-wasteful, it must allocate $2$ units to one of the buyers with valuation $3$ (wlog buyer 1) and $1$ unit to the other buyer.
	
	Now consider an instance $I_{2a}$ where $v'_1=3$ and $v'_2=2.5$. Since $2.5$ is now the minimum envy-free price and $I_2$ is again not trivial, $M$ must output either $p'= 2.5$ or $p=2.5 + \gamma$. We will obtain a contradiction for each case. Assume first that $p'=2.5$; since buyer 1 is hungry, it must hold that $x'_1=2$ and by non-wastefulness, it must hold that $x'_1=1$. In that case however, for the same reason explained in the proof of Theorem \ref{thm:noparetoeff}, $v'_2=2.5$ could be a beneficial deviation of buyer 2 on instance $I_1$, violating truthfulness.
	Now we argue for the case when $p'=2.5+\gamma$. Consider the instance $I_3$ where $\bar{v}_1=\bar{v}_2=2.5$. Since $M$ is in-range and $I_3$ is not trivial, $M$ must select price $\bar{p}=2.5$, since every other price is either not envy-free, or higher than all the valuations. By non-wastefulness, one buyer must receive 2 units at $\bar{p}$ and the other agent must receive $1$ unit (because each buyer can afford exactly 2 units and there are 3 units available). If buyer $1$ receives $2$ units, i.e. $\bar{x}_1=2$, misreporting its valuation on instance $I_{2a}$ as $2.5$ would give the buyer higher utility, since it gets allocated the same number of items at a lower price.
	It remains to deal with the case when on instance $I_3$, buyer 1 is allocated $1$ item and buyer 2 is allocated $2$ items, i.e. $\bar{x}_1=1$ and $\bar{x}_2=2$. 
	
	Now consider the instance $I_{2b}$ where $\hat{v}_1=2.5$ and $\hat{v}_2=3$, i.e. instance $I_{2b}$ is exactly the same as instance $I_{2a}$ with the indices of the two buyers swapped. Again, since instance $I_{2b}$ is not trivial, by Lemma \ref{lem:charac}, $M$ must output a price $\hat{p} \in \{2.5,2.5+\gamma\}$. If $\hat{p} = 2.5+\gamma$, then we consider again Instance $I_3$. Since on that instance $\bar{p}=2.5$ and $\bar{x}_2=2$ by the assumption above, buyer 2 has an incentive to misreport its valuation on instance $I_{2b}$ as being $2.5$, contradicting truthfulness. Therefore, it must hold that $\hat{p}=2.5$ on instance $I_{2b}$.
	
	However, by non-wastefulness, buyer $1$ receives one unit at price $\hat{p}$ on instance $I_{2b}$, i.e. $\hat{x}_1 =1$. We will consider the $2.5$ as a potential deviation of buyer 1 on instance $I_1$ (where its true valuation is $v_1=3$). The utility of the buyer before misreporting is $0$ (since the chosen price on instance $I_1$ is $p=3$) whereas the utility after misreporting is $3-2.5 = 0.5$, i.e. strictly positive. Therefore, buyer $1$ has a beneficial deviation on instance $I_1$, violating the truthfulness of $M$.
	
	By truthfulness, it must also hold that $\bar{p} \geq 2.5+\gamma$, otherwise on instance $I_2$ buyer 1 would have an incentive to misreport its valuation as $2.5+\gamma$ and still receive $2$ items at a lower price (since at any price $p<2.5+\gamma$ buyer 1 on instance $I_3$ is hungry). From the discussion above, it must hold that $\bar{p}=2.5+\gamma$ and by non-wastefulness and since buyer 1 can afford two items at price $2.5+\gamma$, it must hold that $\bar{x}_1=2$.
\end{proof}


\section{Discussion}

Our results show that it is possible to achieve good approximate truthful mechanisms, under reasonable assumptions on the competitiveness of the auctions which retain some of the attractive properties of the Walrasian equilibrium solutions. The same agenda could be applied to more general auctions, beyond the case of linear valuations or even beyond multi-unit auctions. It would be interesting to obtain a complete characterization of truthfulness in the case of private or known budgets; for the case of private budgets, we can show that a class of order statistic mechanisms are truthful, but the welfare or revenue guarantees for this case may be poor. In the appendix we present an interesting special case, that of \emph{monotone auctions}, in which Mechanism \textsc{All-Or-Nothing} is optimal among all truthful mechanisms for both objectives, regardless of the market share.

\section{Acknowledgements}

We would like to thank the MFCS reviewers for useful feedback.

\addcontentsline{toc}{section}{\protect\numberline{}References}%

\bibliographystyle{alpha}

\bibliography{multiunitarxiv}

\appendix

\section*{APPENDIX}

\section{Monotone Auctions}\label{sec:interesting}

In the main text, we proved the approximation ratio guarantees of Mechanism {\sc All-or-Nothing}, as a function of the market share. In this section, we will examine the case of common budgets and the more general class of monotone auctionss:
\begin{itemize}
	\item The budgets are \emph{common} when $B_i=B$ for all buyers $i \in N$.
	\item The budgets are monotone in the valuations when  $v_i \geq v_j \Leftrightarrow B_i \geq B_j$. We call such auctions \emph{monotone}.
\end{itemize} 
Note that the second case is more general than the first, where for the right-hand side we have $B_i = B_j$ for all $i,j \in N$. We will prove that for those cases, Mechanism {\sc All-or-Nothing} is optimal among all truthful mechanisms, for both the welfare and the revenue objective. For the welfare objective, the approximation ratio guarantee will be completely independent of the market share. For the revenue objective, the dependence will be rather weak; we prove that the bound holds in all auctions except monopsonies. A monopsony is an auctionin which a single buyer can afford to buy all the items at a very high price.

\begin{definition}
	An auction is a \emph{monopsony}, if the buyer with the highest valuation $v_1$ has enough budget $B_1$ to buy all the units at a price equal to the second highest valuation $v_2$. 
\end{definition}

Note that when the market is not a monopsony, that implies that the market share $s^*$ is less than $1$.\footnote{Note that instead of ruling out monopsonies, another approach would be to consider a different benchmark, that does not include the case of an omnipotent buyer, like the EFO$^{(2)}$ benchmark for revenue, see \cite{hartline2013mechanism}, Chapter 6.}

\begin{theorem}\label{thm:special cases}
	The approximation ratio of Mechanism \emph{All-or-Nothing} for monotone auctions is 
	\begin{itemize}
		\item at most $2$ for the social welfare objective.
		\item at most $2$ for the revenue objective when the auction is not a monopsony.
	\end{itemize} Furthermore, no truthful mechanism can achieve an approximation ratio smaller than $2 - \frac{4}{m+2}$ even in the case of common budgets. 
\end{theorem}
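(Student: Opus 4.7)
The plan is to prove the three claims separately: the upper bounds extend the case analysis from Theorems~\ref{thm:revenuebound} and~\ref{thm:AON-ar-socialwelfare}, using monotonicity to remove the dependence on market share in the problematic sub-case, while the lower bound is immediate from Theorem~\ref{thm:splower_revenue}, whose construction already uses common budgets (a special case of monotone auctions).

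For the upper bounds, let $p^*$ denote the objective-optimal envy-free price and $p_{min}$ the one output by \textsc{All-or-Nothing}. In the first case, $p^* > p_{min}$, I would reuse the proof of Case 1 of Theorem~\ref{thm:revenuebound} verbatim for revenue and its direct welfare analog (obtained by inserting a $v_i$ factor into each summand of the ratio); in both cases the inequality $\alpha_i \le 2\lfloor \alpha_i\rfloor$, valid when $\lfloor \alpha_i\rfloor \ge 1$ on non-trivial instances, delivers a ratio of at least $1/2$. The non-monopsony assumption for revenue rules out the degenerate configuration where a single buyer's budget exceeds $mp_{min}$, so that its demand caps at $m$ and the floor bound would be invalidated.

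In the second case, $p^* = p_{min}$, the reduction from Theorem~\ref{thm:revenuebound} lets me assume that OPT and the \textsc{All-or-Nothing} allocation differ only at a single semi-hungry buyer $\ell$ with $v_\ell = p_{min}$, allocated $x_\ell > 0$ in OPT and $0$ in AON. I would then split on whether a hungry buyer exists at $p_{min}$. If some hungry buyer $j$ exists, monotonicity together with $v_j > v_\ell$ gives $B_j \ge B_\ell$ and hence $x_j = \lfloor B_j/p_{min}\rfloor \ge \tilde{x}_\ell \ge x_\ell$; multiplying by $p_{min}$ (for revenue) or by $v_j \ge v_\ell$ (for welfare) shows that AON's contribution from $j$ alone already matches OPT's contribution from $\ell$, so the ratio is at least $1/2$. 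If no hungry buyer exists, monotonicity forces all OPT-allocated buyers to share valuation $p_{min}$ and budget $B_\ell$, hence the same maximum demand $\tilde{x}_\ell$; AON greedily fills $\lfloor m/\tilde{x}_\ell\rfloor \ge 1$ such buyers in lexicographic order, and a short case split on whether $\tilde{x}_\ell \le m/2$ (AON fills at least $m/2$ units via several full allocations) or $\tilde{x}_\ell > m/2$ (AON fills $\tilde{x}_\ell > m/2$ units via a single allocation) yields a ratio of at least $1/2$.

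The main obstacle is the no-hungry sub-case above: without monotonicity one cannot conclude that all semi-hungry buyers at $p_{min}$ share a common demand, so the market-share term from Theorem~\ref{thm:revenuebound} would remain and the bound would degrade with $s^*$. Monotonicity collapses these ties and delivers the clean constant bound of $2$; the non-monopsony condition plays an analogous role in Case 1 for revenue, ruling out the demand-cap pathology that would otherwise break the $\alpha_i \le 2\lfloor \alpha_i\rfloor$ argument.
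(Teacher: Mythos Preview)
Your proposal is correct and tracks the paper's argument closely: the lower bound is lifted verbatim from Theorem~\ref{thm:splower_revenue} (whose instance already has common budgets), and the upper bounds reuse the single-extra-semi-hungry-buyer reduction of Theorem~\ref{thm:revenuebound} together with monotonicity ($v_j \ge v_\ell \Rightarrow B_j \ge B_\ell$) to show that some fully-allocated buyer $j$ already contributes at least as much as buyer~$\ell$'s loss.

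Two simplifications from the paper that would streamline your write-up. First, for social welfare the optimal envy-free price is always $p_{min}$ (welfare is non-increasing in the price), so your Case~1 for welfare is vacuous and need not be argued at all; the paper only treats $p^* = p_{min}$ for welfare. Second, in Case~2 the paper avoids your hungry/no-hungry split entirely: since the optimal allocation $\vec{x}$ was chosen to extend the \textsc{All-or-Nothing} allocation $\vec{z}$, and on any non-trivial profile $\vec{z}$ fully allocates at least one buyer, there is always a fully-allocated buyer $j \neq \ell$ with $v_j \ge v_\ell$, and the monotonicity comparison $B_j \ge B_\ell$ goes through uniformly regardless of whether $j$ is hungry or semi-hungry. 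Your no-hungry sub-case (equal budgets among ties, then counting how many full blocks of size $\tilde{x}_\ell$ fit into $m$) is correct but more laborious than necessary.
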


\begin{proof}
	First, note that the profile constructed in Theorem \ref{thm:splower_revenue} is one where the budgets are common and therefore the lower bound extends to both cases mentioned above. Therefore, it suffices to prove the approximation ratio of Mechanism {\sc All-or-Nothing} for both objectives, when the auction is monotone.
	
	We start from the social welfare objective and consider an arbitrary profile $\vec{v}$. Without loss of generality, we can assume that $\vec{v}$ is not trivial (otherwise the optimal allocation allocates $0$ items in total) and note that the optimal envy-free price is $p^*=p_{min}$ and let $\vec{x}$ be the corresponding optimal allocation. Following the arguments in the proof of Theorem \ref{thm:revenuebound}, we establish that the according to $\vec{x}$ at most one additional semi-hungry buyer is allocated a positive number of units, compared to the allocation of Mechanism {\sc All-or-Nothing}; let $\ell$ be that buyer and let $x_{\ell}$ be its optimal allocation. 
	
	The social welfare loss of Mechanism {\sc All-or-Nothing} is $x_\ell \cdot v_{\ell} \leq v_\ell \cdot \lfloor B_\ell/v_{\ell} \rfloor$, i.e. the contribution of the the semi-hungry buyer that receives $0$ items by {\sc All-or-Nothing}, in contrast to the optimal allocation. Since the profile $\vec{v}$ is not trivial, there exists at least on other buyer $j$ that receives $\min\{m, \lfloor B_j/v_{\ell}\rfloor\}$ units in the optimal allocation $\vec{x}$. If it receives $m$ units, then $x_\ell = 0$ and the ratio on the profile is $1$. Otherwise, the contribution to the welfare (for both the optimal allocation and the allocation of {\sc All-or-Nothing}) from buyer $j$ is $v_j \cdot \lfloor B_j/v_{\ell}\rfloor\} \geq v_j \cdot \lfloor B_\ell/v_{\ell}\rfloor\}$, since $v_\ell \leq v_j \Leftrightarrow B_\ell \leq B_j$ by the monotonicity of the auction. Then we have:
	
	\begin{eqnarray*}
		\frac{\mathcal{SW}(AON)}{OPT} &\geq &\frac{OPT- v_\ell \cdot \lfloor B_\ell/v_{\ell} \rfloor}{OPT}= 1 - \frac{v_\ell \cdot \lfloor B_\ell/v_{\ell} \rfloor}{OPT} \\
		& \geq & 1 - \frac{v_\ell \cdot \lfloor B_\ell/v_{\ell} \rfloor}{(v_\ell+v_j) \cdot \lfloor B_\ell/v_{\ell} \rfloor} = 1 - \frac{v_\ell}{v_j+v_\ell} \geq \frac{1}{2}.
	\end{eqnarray*}
	
	For the revenue objective, again let $p^*$ be the optimal envy-free price and let $\vec{x}$ be the corresponding allocation. We consider two cases:
	\begin{itemize}
		\item $p^*= p_{min}$: The argument in this case is very similar to the one used above for the social welfare objective. In particular,  since $p^*=p_{min}=v_\ell$, we now have that the loss in revenue from the semi-hungry buyer $\ell$ for Mechanism {\sc All-or-Nothing} is at most $x_\ell \cdot v_{\ell} \leq v_\ell \cdot \lfloor B_\ell/v_{\ell} \rfloor$ whereas the contribution from buyer $j$ is $v_{\ell} \cdot \lfloor B_j/v_{\ell}\rfloor\}$, which is at most $v_{\ell} \cdot \lfloor B_\ell/v_{\ell}\rfloor\}$ by the monotonicity of the auction. Therefore, we have that:
		
		\begin{eqnarray*}
			\frac{\mathcal{REV}(AON)}{OPT} &\geq& \frac{OPT- v_\ell \cdot \lfloor B_\ell/v_{\ell} \rfloor}{OPT}= 1 - \frac{v_\ell \cdot \lfloor B_\ell/v_{\ell} \rfloor}{OPT} \\
			& \geq & 1 - \frac{v_\ell \cdot \lfloor B_\ell/v_{\ell} \rfloor}{2v_\ell \cdot \lfloor B_\ell/v_{\ell} \rfloor} = 1 - \frac{v_\ell}{2v_\ell} =\frac{1}{2}.
		\end{eqnarray*}
		
		\item $p^* > p_{min}$. In that case, the argument is exactly the same as in Case 2 of the proof of Theorem \ref{thm:revenuebound}, which holds when the market share is less than $1$, i.e. when the auction is not a monopsony. 
	\end{itemize}
\end{proof}

\noindent To complete the picture, we prove in the following that if the auction is a monopsony, the approximation ratio of any truthful mechanism is unbounded. This can be captured by the following theorem.

\begin{theorem}\label{thm:badratio}
	If the auction is a monopsony, the approximation ratio of any truthful mechanism for the revenue objective is at least $\mathcal{B}$ for any $\mathcal{B}>1$, even if the budgets are public. 
\end{theorem}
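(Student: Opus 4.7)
The plan is to exhibit, for any fixed $\mathcal{B}>1$, a monopsony family on which no truthful envy-free mechanism can achieve revenue approximation ratio $\mathcal{B}$. Fix $n=2$ buyers and $m$ units, set $v_2=1$, $B_2=1$, $B_1=M$ for a large constant $M$ to be chosen, and let $v_1=V$ vary over $(1, M/m]$. Each resulting instance $I_V$ is a monopsony (since $B_1=M\geq m\cdot v_2$), with optimal revenue exactly $mV$, attained at price $V$ by allocating all $m$ units to buyer~$1$. Assume toward contradiction a truthful envy-free mechanism achieving ratio at most $\mathcal{B}$ on this family, and write its output on $I_V$ as $(p_V, x_1(V), x_2(V))$.

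The first step is a standard monotonicity argument in the style of the proof of Theorem~\ref{thm:sp_aon}: if $x_1(V_1)=x_1(V_2)=k>0$ for two instances in our family, then the two symmetric truthful-deviation inequalities force $p_{V_1}=p_{V_2}$, noting that both allocations are affordable because all instances share the same $B_1=M$. This defines a constant $\pi_k$ equal to the price on every level set $S_k=\{V:x_1(V)=k\}$, and also shows that $x_1(V)$ is non-decreasing in $V$, so the nonempty level sets partition the domain into intervals ordered by $k$.

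The second step combines this with envy-freeness. For $V\in S_k$ with $k>0$, buyer~$1$ must be essentially hungry at $\pi_k$, forcing $\pi_k\leq V$ and hence $\pi_k\leq \inf S_k$. Meanwhile, the ratio-$\mathcal{B}$ guarantee applied at $V\to \sup S_k$ gives $\pi_k\cdot k\geq m\cdot \sup S_k/\mathcal{B}$. Combining these two yields the recursion $\sup S_k \leq (k\mathcal{B}/m)\inf S_k \leq \mathcal{B}\inf S_k$. Writing the nonempty level sets in order as $(V_0,V_1],\ldots,(V_{r-1},V_r]$ with $r\leq m$, this gives $V_r\leq \mathcal{B}^r V_0\leq \mathcal{B}^m V_0$. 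For the initial range $V\in(1,V_0)$ where $x_1=0$, envy-freeness implies $p_V\leq v_2=1$ whenever revenue is positive, so revenue is at most $1$ and the ratio constraint forces $V\leq \mathcal{B}/m$; hence $V_0\leq \max(1,\mathcal{B}/m)$. Putting it together, $V^*=M/m\leq \mathcal{B}^{m+1}$. Choosing $M > m\cdot\mathcal{B}^{m+1}$ then produces the desired contradiction.

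The main obstacle I anticipate is handling the semi-hungry regime cleanly: when $p_V=V$ on a level set, constancy of $\pi_k$ collapses that set to a single point, so the clean ``interval of $V$'' picture of level sets is strictly valid only in the hungry regime. One must verify that the semi-hungry singletons contribute at most a countable scattering and can be absorbed into the interval endpoints without disrupting the multiplicative chain of inequalities. A secondary concern is that the argument respect the discrete grid from Lemma~\ref{thm:continuous}; this is straightforward, since $M$ can be chosen as a multiple of the grid spacing, which is fine enough that each nonempty interval contains grid points and the chain of inequalities is preserved.
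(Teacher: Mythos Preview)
Your approach is correct but takes a markedly different and far more elaborate route than the paper. The paper's proof is essentially a two-line deviation argument: fix a monopsony with $B_{1}=pm$ and $\mathcal{B}=p/v_2$, and suppose a truthful envy-free mechanism with ratio $<\mathcal{B}$ outputs price $p^{*}$ on this instance. Envy-freeness forces $p^{*}\geq v_2$, and the ratio assumption rules out $p^{*}=v_2$, so $p^{*}>v_2$. Now let buyer~1 lower its report to some $v_1'\in(v_2,p^{*})$. On the new instance the mechanism cannot price above $v_1'$ (ratio would be infinite), and if it prices at or below $v_1'$ then buyer~1 on the original instance has a profitable deviation (all $m$ units at a strictly lower price). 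No monotonicity machinery, no level sets, no large parameter $M$ to tune.

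Your monotonicity/level-set argument does go through, but you should be aware that in your construction the multiplicative chain $V_r\leq\mathcal{B}^r V_0$ is doing almost no work. As you correctly anticipate in your ``main obstacle'' paragraph, for every $k<m$ the set $S_k$ collapses to a singleton: any $V$ with $x_1(V)=k<m$ must have $p_V=V$, since a hungry buyer~1 with budget $M\geq m\,p_V$ would demand all $m$ units. On those singletons your recursion $\sup S_k\leq\mathcal{B}\inf S_k$ is vacuous, and the only non-degenerate step is the single inequality $M/m=\sup S_m\leq\mathcal{B}\inf S_m$. What actually forces the contradiction is the much simpler counting observation that the grid interval $(1,\inf S_m)$ can contain at most $m-1$ points (one per singleton $S_k$, $k<m$), so $\inf S_m\leq 1+(m-1)\epsilon$ and hence $M\leq m\mathcal{B}(1+(m-1)\epsilon)$---a far tighter bound than your $m\mathcal{B}^{m+1}$. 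Your treatment of the $x_1=0$ range is also slightly off: in your family $x_1(V)=0$ forces $p_V\geq V>1=v_2$, which kills buyer~2 as well, so revenue is $0$ and $S_0=\emptyset$ outright, giving $V_0=1$ immediately. None of this invalidates your proof, but it does show that the level-set apparatus is overkill here; the paper's single-deviation argument reaches the same conclusion with a fraction of the effort.
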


\begin{proof}
	Consider the following monopsony. Let $i_{1}=\argmax_{i} v_i$, for $i=1,\ldots,n$ be a single buyer with the highest valuation and denote $v_{i_1}=v_1$ for ease of notation. Similarly, let $i_2 \in \argmax_{i \in N\backslash\{i_1\}} v_{i}$ be one buyer with the second largest valuation and let $v_{i_2}=v_2$. Furthermore, let $v_1 > v_i$ for all $i \neq i_1$ and $B_{i_1} = p \cdot m$, for some $v_2<p\leq v_{1}$  i.e. buyer $i_1$ can afford to buy all the units at some price $p > v_2$. Additionally, let $B_{i_2} \geq v_2$, i.e. buyer $i_2$ can afford to buy at least one unit at price $v_2$.\footnote{Note that setting $B_{i_2}=B_{i_1}$ satisfies this constraint and creates an auction with identical budgets, so the proof goes through for that case as well.} Finally, for a given ${\mathcal B}>1$ let $v_2$ and $p$ be such that $\mathcal{B}=p/v_2$. Note that the revenue-maximizing envy-free price for the instance $\mathbf{v}$ is at least $p$ and the maximum revenue is at least $p \cdot m$.
	
	Assume for contradiction that there exists a truthful mechanism $M$ with approximation ratio smaller than $\mathcal{B}$ and let $p^*$ be the envy-free price output by $M$ on $\mathbf{v}$. Since $p^*$ is envy-free and $B_{i_1} > v_{1}\cdot m$ and $B_{i_2}\geq v_2$, it can not be the case that $p^* < v$, otherwise there would be over-demand for the units. Furthermore, by assumption it can not be the case that $p^*=v_2$ as otherwise the ratio would be $\mathcal{B}$ and therefore it must hold that $p^*> v_2$. 
	
	Now let $\mathbf{v'}$ be the instance where all buyers have the same valuation as in $\mathbf{v}$ except for buyer $i_1$ that has value $v_{1}'$ such that $v < v_{1}' < p^*$ and let $\tilde{p}$ be the envy-free price that $M$ outputs on input $\mathbf{v'}$. If $\tilde{p} > v_{1}'$, then the ratio of $M$ on the instance $\mathbf{v'}$ is infinite, a contradiction. If $\tilde{p} \leq v_1'$ and since $\tilde{p}$ is envy-free, it holds that $v_2\leq \tilde{p} < p^*$. In that case however, on instance $\mathbf{v}$, buyer $i_1$ would have an incentive to misreport its valuation as $v_{1}'$ and reduce the price. The buyer still receives all the units at a lower price and hence its utility increases as a result of the devation, contradicting the truthfulness of $M$. 
\end{proof}

\section{Computational Results}

In this section we study the problem of computing a welfare and revenue maximizing envy-free pricing without incentives. 
A welfare maximizing envy-free pricing can be computed efficiently using the following algorithm. 

\bigskip

\noindent \emph{\textbf{Algorithm 2:} Find the minimum envy-free price in the set $\mathcal{P}$ and allocate maximally to the buyers (using greedy allocation and lexicographic tie-breaking for the indifferent buyers).}

\begin{theorem}
	Algorithm 2 computes a welfare maximizing envy-free pricing in time polynomial in $n$ and $\log m$ for linear multi-unit markets.
\end{theorem}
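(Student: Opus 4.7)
The plan is to establish two things: (i) the minimum envy-free price in $\mathcal{P}$, together with the greedy allocation, maximizes welfare, and (ii) this price can be found in time polynomial in $n$ and $\log m$ without enumerating all of $\mathcal{P}$, whose size is $\Theta(nm)$.

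For correctness of the price choice, the key observation is that the maximum welfare achievable at an envy-free price is non-increasing in the price. To see this, take any two envy-free prices $p_1 < p_2$ and let $\vec{x}^{(2)}$ be the welfare-maximizing envy-free allocation at $p_2$. Every buyer $i$ with $x_i^{(2)} > 0$ satisfies $v_i \geq p_2 > p_1$, so buyer $i$ is hungry at $p_1$ with $D_i(p_1) \geq D_i(p_2) \geq x_i^{(2)}$. Consider at price $p_1$ the allocation in which each hungry buyer receives her full demand $D_i(p_1)$; this is feasible because $p_1$ is envy-free, and it yields welfare at least $\sum_i v_i \cdot x_i^{(2)}$, which matches the welfare at $p_2$. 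Combining this with Lemma \ref{lem:inthesetP}, which ensures an optimal envy-free price sits in $\mathcal{P}$, the minimum envy-free price in $\mathcal{P}$ is welfare-optimal. Correctness of the accompanying greedy allocation (hungry buyers get their singleton demand, remaining units go to semi-hungry buyers in lexicographic order) follows directly from Lemma \ref{lem:frompricetoalloc}, since hungry buyers have $v_i > p$ and so filling their full demand dominates any alternative.

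For the running time, I plan to use a two-phase search. In the first phase, sort the valuations $v_1 \geq v_2 \geq \ldots \geq v_n$ in $O(n\log n)$ time, and binary search for the smallest index $i^*$ such that the price $v_{i^*}$ is envy-free; each envy-freeness check computes the total hungry demand $\sum_{j:v_j>v_{i^*}} D_j(v_{i^*})$ and compares it to $m$, taking $O(n)$ time, giving $O(n\log n)$ overall for this phase. Once $i^*$ is known, the minimum envy-free price in $\mathcal{P}$ must lie in the interval $(v_{i^*+1}, v_{i^*}]$, where the hungry set is fixed. In the second phase, binary search over this real interval for the minimum envy-free $p$: each query again costs $O(n)$. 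Consecutive jump points of $H(p)=\sum_j \lfloor B_j/p\rfloor$ inside this interval are of the form $B_j/k$ with $k\le m$ and are separated by at least $B_{\min}/(m(m+1))$, so $O(\log m)$ bits of precision suffice for binary search to isolate the correct element of $\mathcal{P}$. Finally, given the price, Lemma \ref{lem:frompricetoalloc} produces the allocation in polynomial time.

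The main obstacle will be the efficiency argument: specifically, justifying that binary search over the real interval in the second phase reliably identifies an element of $\mathcal{P}$ rather than returning an arbitrary real. This reduces to the separation bound on adjacent jump points together with a clean invariant tracking whether the current interval contains a transition of $H$. The correctness part is comparatively routine once the monotonicity of welfare in the envy-free price is in hand.
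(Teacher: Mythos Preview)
Your correctness argument is essentially the same as the paper's: welfare is non-increasing in the envy-free price, so by Lemma~\ref{lem:inthesetP} the minimum envy-free price in $\mathcal{P}$ is optimal, and Lemma~\ref{lem:frompricetoalloc} handles the allocation. That part is fine.

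For the running time, the paper takes a simpler and more direct route than your real-valued binary search. It checks all $n$ valuation candidates $v_i$ explicitly, and then, for each buyer $i$, performs a discrete binary search over the ordered $m$-element set $\{B_i/m, B_i/(m-1),\dots,B_i\}$ to find the least envy-free element. Since envy-freeness is monotone in the price (Lemma~\ref{lem:small_to_large}) and each such set is already sorted, this yields $n$ binary searches of $O(\log m)$ steps each, with an $O(n)$ demand check per step, for $O(n^2\log m)$ overall. No precision argument and no jump-point separation bound is needed.

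Your Phase~2, by contrast, relies on a separation bound between jump points of $H(p)=\sum_j \lfloor B_j/p\rfloor$ that is not correct as stated. The bound $B_{\min}/(m(m+1))$ holds only for two adjacent jump points \emph{of the same buyer}, namely $B_j/k$ and $B_j/(k+1)$. Across different buyers it fails: take $B_1=100$, $B_2=99$, $k_1=k_2=10$ with $m=10$; the gap between $B_1/k_1=10$ and $B_2/k_2=9.9$ is $0.1$, while your bound asserts at least $99/110\approx 0.9$. A valid bound would have to use the granularity of the input domain (e.g.\ at least $\epsilon/m^2$ if all $B_j$ lie on an $\epsilon$-grid), which then makes the bit-complexity depend on $\epsilon$ as well. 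So your Phase~2 can be repaired, but it is more delicate than necessary; the paper's per-buyer discrete binary search sidesteps the issue entirely.
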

The correctness and running time of the algorithm are based on the observation that the social welfare is non-increasing in the price which together with Lemma \ref{lem:small_to_large} allows us to search for the minimum envy-free price in the set $\emph{P}$. This can be done by first checking all $n$ values $v_i$, for $i=1,\ldots,n$ and then for each buyer $i$, by checking over the ordered set $\left\{\floor{\frac{B_i}{m}},\floor{\frac{B_i}{m-1}},\ldots,\floor{B_i}\right\}$ using binary search.

\medskip

For revenue we design a fully polynomial time approximation scheme as well as an exact algorithm that runs in polynomial time for constantly many types of buyers. Without loss of generality, the buyers can be assumed to be ordered by their valuations: $v_1 \geq v_2 \geq \ldots \geq v_n$.

\newpage
\subsection{Fully Polynomial Time Approximation Scheme for Revenue} \label{app:fptas}

\begin{theorem} \label{thm:max_revenue}
	There exists an FPTAS for computing a revenue-maximizing envy-free pricing in linear multi-unit markets.
\end{theorem}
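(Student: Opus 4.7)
\textbf{Proof plan for Theorem \ref{thm:max_revenue}.}

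The plan is to replace the $O(nm)$-size candidate set $\mathcal{P}$ from Lemma~\ref{lem:inthesetP} with a polynomial-size subset via geometric discretisation of the price, and then evaluate each candidate using Lemma~\ref{lem:frompricetoalloc}.

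First, I would compute the minimum envy-free price $p_{\min}$; by Lemma~\ref{lem:small_to_large}, every price $p \geq p_{\min}$ is automatically envy-free. Let $V = \max_i v_i$ and $L = \lceil \log_{1+\epsilon}(V/p_{\min})\rceil$, and define the candidate set
$$
\mathcal{P}_\epsilon \;=\; \bigl\{\,p_{\min}(1+\epsilon)^\ell : 0 \leq \ell \leq L\,\bigr\} \;\cup\; \{\,v_i : v_i \geq p_{\min}\,\}.
$$
Because all valuations and budgets are rationals of bit-length bounded by the input, $L$ is polynomial in the input size and $1/\epsilon$, so $|\mathcal{P}_\epsilon|$ is polynomial. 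For each $p \in \mathcal{P}_\epsilon$ I would invoke Lemma~\ref{lem:frompricetoalloc} to compute a revenue-maximising allocation at the envy-free price $p$, and return the pair attaining the highest revenue. The total running time is polynomial in $n$, $\log m$, the input bit-length, and $1/\epsilon$.

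Second, for the approximation guarantee, define $U(p) := \min\{m,\ \sum_{i:v_i\geq p}\lfloor B_i/p\rfloor\}$, the largest number of units that can be sold at an envy-free price $p$, so that $\mathrm{OPT} = p^* \cdot U(p^*)$ for some optimal envy-free price $p^*$. The key observation is that $U$ is non-increasing in $p$: raising $p$ weakly shrinks both the index set $\{i:v_i\geq p\}$ and each floor term $\lfloor B_i/p\rfloor$. Since $p^* \geq p_{\min}$, there exists $\ell \in \{0,\ldots,L\}$ with $p_\ell := p_{\min}(1+\epsilon)^\ell \leq p^* < p_{\ell+1}$; then $p_\ell$ is envy-free by Lemma~\ref{lem:small_to_large}, and the allocation produced by Lemma~\ref{lem:frompricetoalloc} at $p_\ell$ attains revenue
$$
p_\ell \cdot U(p_\ell) \;\geq\; p_\ell \cdot U(p^*) \;\geq\; \frac{p^*}{1+\epsilon}\cdot U(p^*) \;=\; \frac{\mathrm{OPT}}{1+\epsilon}.
$$
Rescaling $\epsilon \leftarrow \epsilon/2$ converts this into the desired $(1-\epsilon)$-approximation.

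The main obstacle is structural rather than technical: the argument rests entirely on two monotonicity facts, namely that envy-freeness is preserved upward (Lemma~\ref{lem:small_to_large}) and that $U(\cdot)$ is non-increasing. Together they ensure that rounding the price \emph{down} to the nearest grid point keeps the outcome envy-free \emph{and} weakly increases the number of units sold, so a single coarse geometric grid over $[p_{\min},V]$ suffices. Without these two facts one would have to control floor losses per buyer and handle buyers with small demand carefully, since $\lfloor B_i/p\rfloor$ can drop by nearly a factor of two even for very close prices when the buyer's demand is small. This is precisely the obstacle that kept \cite{feldman2012revenue} at a $2$-approximation, and it is bypassed here by rounding in the direction in which $U$ moves favourably.
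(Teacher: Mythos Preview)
Your argument is correct and takes a genuinely different route from the paper. The paper's FPTAS splits on whether $m \le n/\epsilon$: when $m$ is small it enumerates all of $\mathcal{P}$ (which then has only $O(n^2/\epsilon)$ elements), and when $m$ is large it solves a \emph{continuous} relaxation of the market exactly and rounds each buyer's fractional allocation down, losing at most $p\cdot n$ out of a total revenue of $p\cdot m$, hence a $(1-n/m)>(1-\epsilon)$ factor. Your scheme instead discretises the \emph{price} axis geometrically and rounds $p^{*}$ down to a grid point; the entire approximation loss comes from price rounding rather than allocation rounding, and the crucial step is the monotonicity of $U(\cdot)$, which ensures that lowering the price can only weakly increase the number of units sold. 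This is arguably cleaner---no case split, no auxiliary continuous problem---and the same template would work in any setting where the ``units sold at the best allocation'' function is monotone in the price; the paper's approach, by contrast, is exact whenever $m\le n/\epsilon$ and makes the dependence on $m$ explicit.

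Two small points are worth tightening. First, the paper notes that the minimum envy-free price need not exist over $\mathbb{R}$; you can sidestep this by anchoring the grid at the smallest envy-free price in $\mathcal{P}$ (computable in time polynomial in $n$ and $\log m$ as in Algorithm~2), which is automatically $\le p^{*}$ by Lemma~\ref{lem:inthesetP}. Second, to make ``$L$ is polynomial'' explicit you should record a lower bound such as $p_{\min}\ge \min\bigl\{\min_i v_i,\ (\sum_i B_i)/(m+n)\bigr\}$, so that $\log(V/p_{\min})$ is visibly bounded by the input bit-length.
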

The main idea is to divide the analysis in two cases, depending on whether the number of units is large (at least $n/\epsilon$) or small. When the number of units is large, we solve a continuous variant of the problem, where the good is viewed as a continuous resource valued uniformly by each buyer $i$, at total value $v_i \cdot m$. The buyers have the same budgets $B_i$. The continuous problem can be solved efficiently, and by rounding one obtains a solution in the discrete instance with bounded loss. When the number of units is small, we can iterate over the set $\mathcal{P}$ of candidate prices in polynomial time and find a revenue-maximizing envy-free price and allocation.

Consider the \emph{continuous linear multi-unit market}, defined above. Note that in such a market, every interested buyer must receive exactly a $(B_i/p)$-fraction of the good, since the allocation does not have to be integer. As shown next, the revenue-maximizing envy-free price in the continuous variant can be computed in polynomial time.

\begin{lemma} \label{lem:price_opt_val}
	A revenue maximizing envy-free price for the continuous linear multi-unit market can be computed in polynomial time. Moreover, the price is equal to a valuation. 
\end{lemma}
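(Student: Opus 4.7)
The plan is to reduce the continuous optimization problem to a finite search over the $n$ reported valuations by establishing a structural ``push-up'' property of envy-free prices. The polynomial-time algorithm then simply enumerates the candidate prices $v_1, \ldots, v_n$.

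The first and main step is a monotonicity argument. In the continuous setting, any buyer $i$ with $v_i > p$ has a singleton demand of exactly $B_i/p$ (truncated at $m$), so its payment is $p \cdot \min(B_i/p, m) = \min(B_i, pm)$, which is weakly increasing in $p$. Starting from an envy-free price $p'$ that is not equal to any $v_i$, let $v_k$ be the smallest valuation strictly greater than $p'$; if no such $v_k$ exists then the revenue at $p'$ is already zero. I would then argue that raising the price from $p'$ to $v_k$ preserves envy-freeness and weakly improves revenue. Envy-freeness: in the open interval $(p', v_k)$, no valuation is crossed, the hungry set stays fixed, and each buyer's demand only shrinks, so the feasibility constraint $\sum_i x_i \leq m$ continues to hold. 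Revenue: since each hungry payment $\min(B_i, pm)$ is non-decreasing in $p$, the sum is non-decreasing. At the limit $p = v_k$, buyer $k$ becomes semi-hungry with demand set $[0, \min(B_k/v_k, m)]$, an interval that still allows the revenue-maximizing allocation to fit. Hence there is an optimal envy-free price equal to some $v_i$.

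Given this reduction, for each candidate $v_i$, the mechanism checks envy-freeness by verifying $\sum_{j : v_j > v_i} \min(B_j/v_i, m) \leq m$ and, if satisfied, computes the best revenue at this price by allocating every hungry buyer its full demand and filling any remaining capacity with semi-hungry buyers up to their individual budget caps. Concretely, the maximum revenue at envy-free price $v_i$ is
\[
R(v_i) \;=\; v_i \cdot \min\!\left(m,\; \sum_{j : v_j \geq v_i} \min(B_j/v_i,\, m)\right).
\]
The algorithm returns $\arg\max_i R(v_i)$, requires $O(n^2)$ arithmetic operations, and is therefore polynomial. The main (minor) subtlety is handling the boundary case where the price crosses from strictly between two valuations to exactly hitting one: this is resolved cleanly in the continuous model, because the demand set at $p = v_k$ is an interval, so the semi-hungry buyer's allocation can be chosen flexibly to absorb slack without breaking envy-freeness.
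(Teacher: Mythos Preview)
Your proposal is correct and follows essentially the same ``push-up'' argument as the paper: take an optimal envy-free price strictly between two consecutive valuations, raise it to the nearest valuation above, observe that the hungry set is unchanged and each hungry buyer's payment $\min(B_i,\,pm)$ is weakly increasing, so revenue does not drop and envy-freeness is preserved by monotonicity. Your write-up is in fact slightly more careful than the paper's, since you explicitly track the $\min(B_i,\,pm)$ truncation rather than asserting that all hungry budgets are exhausted, and you spell out the $O(n^2)$ enumeration and revenue formula that the paper leaves implicit.
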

\begin{proof}
	Recall the valuations are ordered: $v_1 \geq v_2 \geq \ldots \geq v_n$. Let $p$ be a revenue maxinimizing envy-free price for the continuous problem. If $p = v_i$ for some $i \in N$ the lemma trivially follows. Otherwise, $p \in (v_{\ell},v_{\ell+1})	$ for some $\ell \in N$. Since the resource is continuous and $p$ maximizes revenue, the budgets of buyers $1,2,\ldots, \ell$ are exhausted at price $p$, while buyers $\ell_1,\ldots,n$ have zero demand and the revenue is $\sum_{i=1}^{\ell} B_i$. By setting the price to $v_{\ell}$, we can obtain exactly the same revenue, since the set of buyers that purchases any units is still $\{1,2,\ldots,\ell\}$ and the budgets of all buyers in this set are still exhausted. By Lemma \ref{lem:small_to_large}, $v_{\ell}$ is an envy-free price and by the argument above, it maximizes revenue.
\end{proof} 

\noindent The next lemmas concern the revenue achieved in the continuous problem and enable connecting the maximum revenue in the continuous problem to the discrete one.

\begin{lemma} \label{lem:cont_opt}
	The optimal revenue in a continuous linear multi-unit market is at least as high as in the discrete version.
\end{lemma}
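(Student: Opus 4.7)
The plan is to take any optimal discrete envy-free pricing $(p^*,\vec{x}^*)$ with revenue $R^* = p^* \sum_i x_i^*$ and exhibit a feasible continuous envy-free pricing with revenue at least $R^*$. Two elementary pointwise bounds will do most of the work: by feasibility, $R^* \leq p^* m$, and by the budget-plus-supply constraint per buyer, $p^* x_i^* \leq \min\{B_i, p^* m\}$ for every $i$; summing over the buyers with $v_i \geq p^*$ (the only ones that can be allocated in the optimum) gives
\[
R^* \;\leq\; p^* \cdot \min\Bigl\{m,\; \sum_{v_i \geq p^*}\min\{B_i/p^*,\,m\}\Bigr\}.
\]

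The argument then splits according to whether $p^*$ is already continuous envy-free, i.e., whether the strictly hungry continuous demand $\sum_{v_i > p^*}\min\{B_i/p^*,m\}$ does not exceed $m$. In the affirmative case we use $p^*$ itself in the continuous market: the revenue-maximizing continuous allocation at $p^*$ assigns every strictly hungry buyer its full demand $\min\{B_i/p^*,m\}$ and fills the remaining supply greedily from the semi-hungry buyers up to their individual maxima. The resulting continuous revenue is exactly $p^* \cdot \min\{m, \sum_{v_i \geq p^*}\min\{B_i/p^*,m\}\}$, which by the displayed bound is at least $R^*$.

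In the remaining case, $\sum_{v_i > p^*}\min\{B_i/p^*,m\} > m$, we raise the price. The strictly hungry continuous demand is non-increasing in $p$ (it decreases continuously on every interval where the strictly hungry set is constant and drops further whenever $p$ crosses a valuation from below, removing a buyer from the set) and tends to $0$ as $p \to \infty$, so there is a smallest $p^{**} > p^*$ at which continuous envy-freeness is restored. By minimality, either $p^{**}$ lies strictly between two consecutive valuations and the strictly hungry demand equals $m$ exactly at $p^{**}$ by continuity, or $p^{**} = v_j$ for some buyer $j$ and the jump at $v_j$ drops the strictly hungry demand from above $m$ to at most $m$. In the second sub-case the buyer $j$ has become semi-hungry, so the total continuous demand $\sum_{v_i \geq p^{**}}\min\{B_i/p^{**},m\}$ at $p^{**}$ is at least the strictly hungry demand just below $v_j$, hence still at least $m$. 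In both sub-cases the continuous revenue-maximizing allocation at $p^{**}$ saturates the supply and earns $p^{**} m \geq p^* m \geq R^*$.

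The main obstacle is exactly this limit argument at $p^{**}$ in the second case: one has to track carefully how buyers migrate from the strictly hungry to the semi-hungry set as the price crosses their valuation, and verify that even when the strictly hungry demand drops discontinuously, the newly semi-hungry buyer contributes enough flexible demand that the total demand at $p^{**}$ still covers the full supply $m$.
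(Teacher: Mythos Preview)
Your proposal is correct and follows essentially the same approach as the paper: start from the discrete optimal price $p^*$, use it directly in the continuous market if it is envy-free there, and otherwise raise the price until continuous envy-freeness is restored, at which point the full supply $m$ is sold at a price at least $p^*$, giving revenue $\geq p^* m \geq R^*$. Your treatment is in fact more explicit than the paper's in two places: you spell out the bound $R^* \le p^*\min\{m,\sum_{v_i\ge p^*}\min\{B_i/p^*,m\}\}$ that justifies Case~1 (the paper simply asserts ``the same revenue $R$ can be obtained''), and you carefully argue the limit at $p^{**}$ when it coincides with a valuation, tracking the migration of the newly semi-hungry buyer so that total demand still covers $m$ --- a point the paper handles only informally via its iterative ``decrease the allocation of buyer~$\ell$'' description.
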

\begin{proof}
	Let $(p, x)$ be a price and allocation that maximize revenue for the discrete problem; denote by $R$ the revenue obtained.
	Set the price $p$ in the continuous version and consider two cases:

	\emph{Case 1}: The demand is (weakly) lower than the supply at $p$. Then the same revenue $R$ can be obtained in the continuous version. 	
	
	\emph{Case 2}: The demand is higher than the supply. Then we can continuously increase the price until the demand can be met as follows. Let $p'$ be the current price in the continuous problem. If $p' \in (v_{\ell}, v_{\ell+1})$, then by continuously increasing $p'$ the demand 
	decreases continuously. If the demand is never met in the interior of this interval, then when the price reaches $v_{\ell}$, we can continuously
	decrease the allocation of buyer $\ell$ until either reaching an envy-free pricing or making buyer $\ell$'s allocation zero (the case where there
	are multiple buyers with valuation $v_{\ell}$ is handled similarly, by decreasing their allocations in some sequence). In the latter case, 
	since the demand still exceeds the supply, we iterate by increasing the price continuously in the interval $[v_{\ell - 1}, v_{\ell})$. 
	
	From the two cases, the optimal revenue in the discrete problem is no higher than in the continuous problem, which completes the proof.
\end{proof}

\begin{lemma} \label{lem:opt_rev_exhausted}
	Given the optimal revenue $R$ for the continuous linear multi-unit market, setting the price to $p = R/m$ gives an envy-free pricing scheme with the same revenue where the whole resource is sold.
\end{lemma}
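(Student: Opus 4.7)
The plan is to take an optimal envy-free price $p^* = v_\ell$ (which exists by Lemma \ref{lem:price_opt_val}) together with its revenue-maximizing allocation $\mathbf{x}$, and either observe that $p = R/m$ coincides with $p^*$ or rescale $\mathbf{x}$ to produce an allocation at $p$ that exhausts the whole resource. The trivial case $R = p^* \cdot m$ is immediate: then $p = p^*$, and any optimal allocation at $p^*$ must already sell all $m$ units, so we just reuse it.

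For the interesting case $R < p^* \cdot m$ (so $p < p^*$), I would first prove two structural properties of the optimum: (i) no buyer is assigned $m$ units (otherwise revenue would already be $p^* \cdot m$) and (ii) every buyer $i$ with $v_i \geq p^*$ has $x_i = B_i/p^*$, i.e., their budget is exhausted. Property (ii) is immediate from the demand definition together with (i) for hungry buyers, and follows for semi-hungry buyers by the usual exchange argument: if $x_i < B_i/p^*$, then the slack $\sum_j x_j < m$ would allow us to raise $x_i$ and strictly increase revenue, contradicting optimality. Together these give $R = \sum_{i : v_i \geq p^*} B_i$.

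Next I would define the candidate allocation $y_i = B_i/p$ for $i$ with $v_i \geq p^*$ and $y_i = 0$ otherwise. The identity $\sum y_i = R/p = m$ is automatic, budgets are respected by construction, and since $y_i \leq \sum_j y_j = m$ we have $\min\{B_i/p, m\} = B_i/p = y_i$, which matches the unique max-demand value required of every hungry or semi-hungry buyer at $p$.

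The main obstacle I expect is ruling out buyers with $v_j \in (p, p^*)$: such a buyer would be hungry at $p$ but would receive $y_j = 0$ under our construction, destroying envy-freeness. I would handle this by considering the hypothetical price $v_j$ itself. Either the interested buyers at $v_j$ collectively over-demand the supply, in which case the (supply-capped) revenue at $v_j$ equals $v_j \cdot m \leq R = p \cdot m$ and hence $v_j \leq p$; or they do not, in which case the revenue at $v_j$ equals $\sum_{i : v_i \geq v_j} B_i \geq R + B_j > R$, contradicting the optimality of $p^*$. Either way no $v_j$ lies in $(p, p^*)$, and the envy-freeness check carries through on all remaining buyers, yielding the desired envy-free pricing that sells all $m$ units at price $p$ with revenue $R$.
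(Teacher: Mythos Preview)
Your overall strategy is sound and is in fact more explicit than the paper's own argument, which proceeds informally by ``decreasing the price continuously from $p^*$ while keeping the revenue constant until the whole resource is sold.'' Your direct construction of the allocation $y$ and the structural claims (i)--(ii) about the optimum are correct and cleanly stated.

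The one genuine issue is the over-demand branch of your final step. You assert that if the interested buyers at $v_j$ over-demand the supply, then ``the (supply-capped) revenue at $v_j$ equals $v_j\cdot m \le R$,'' and from this conclude $v_j\le p$. But the inequality $v_j\cdot m\le R$ is exactly what you are trying to establish, not something you have in hand: when $v_j$ is not envy-free there is no envy-free revenue at $v_j$ to bound by $R$, so optimality of $R$ gives you nothing directly. The argument should run the other way. Take $v_j$ to be the \emph{largest} valuation in $(p,p^*)$ (it suffices to rule this one out); then the hungry buyers at $v_j$ are exactly $S=\{i:v_i\ge p^*\}$, and since $B_i\le\sum_{k\in S}B_k=R< v_j\cdot m$ for every $i\in S$, no individual demand is capped at $m$. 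Over-demand therefore reads $\sum_{i\in S} B_i/v_j>m$, i.e.\ $R>v_j\cdot m$, i.e.\ $v_j<R/m=p$, the desired contradiction. (Equivalently, by continuity there is a price $q\in(v_j,p^*)$ at which the demand of $S$ equals $m$; this $q$ is envy-free with revenue $q\cdot m>p\cdot m=R$, again a contradiction.) Restricting to the maximal such $v_j$ is also what makes your ``no over-demand'' branch clean, since it guarantees that the set of hungry buyers at $v_j$ is precisely $S$ and hence that the revenue there really equals $\sum_{i:v_i\ge v_j}B_i\ge R+B_j$.

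With that correction and the explicit choice of the maximal $v_j$, your proof goes through, and it supplies the details that the paper's continuity sketch leaves implicit.
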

\begin{proof}
	By Lemma~\ref{lem:price_opt_val}, the optimal revenue in the continuous problem can be computed in polynomial time by inspecting all the valuations.
	Let $R$ denote the optimal continuous revenue. If none of the prices where the optimal revenue can be obtained support an allocation at which the entire resource is sold, then the price can be decreased
	continuously (skipping the valuation points as they have been considered before) while maintaining the revenue constant until the entire resource is sold.
	By continuity (taking the valuation points into account), there exists a revenue maximizing price at which the whole resource is sold.
	
	This problem can be solved in polynomial time by finding the buyer $\ell \in [n]$ with the property that $\sum_{i=1}^{\ell} B_i = R$. Set $p = \frac{R}{m}$ and 
	compute the corresponding allocation, where buyers $\ell + 1, \ldots, n$ don't receive anything. 
\end{proof}

\medskip

\noindent The FPTAS for revenue is given as Algorithm \ref{alg:FPTAS} and the proof of correctness is as follows.

\begin{algorithm}[h!]
	\begin{algorithmic}[1]
		\caption{\textsc{ FPTAS Revenue-Maximizing-EF-Pricing($N, \vec{v}, \vec{b}, m$)}}
		\label{alg:FPTAS}
		\Require Buyers $N$ with linear valuations $\vec{v}$, budgets $\vec{b}$, $m$ units and $\epsilon > 0$.
		\Ensure An envy-free pricing $(p^*, \vec{x}^*)$ with revenue at least $(1-\epsilon)\mathcal{R}$, where $\mathcal{R}$ is the maximum revenue. 
		\If {$m \leq \frac{n}{\epsilon}$} 
		\For{each price $p\in \mathcal{P}$}
		\State Check if $p$ is envy-free. If not, continue with the next candidate price.
		\State $\vec{x}_p \leftarrow \textsc{Compute-EF-Allocation}(p)$ 
		\State $\mathcal{R}_p \leftarrow$ Revenue of $(\vec{x}_p,p)$.
		\EndFor 
		\State $p^* \leftarrow \argmax_{p} \mathcal{R}_p$.
		\State \textbf{return} $(p^*, \vec{x}_{p^*})$
		\Else
		\State $(\vec{x},p) \leftarrow$ {\sc Compute-Continuous}$(N, \vec{v}, \vec{b}, m)$ 
		\State For all $\in N$, let $\bar{x}_i = \lfloor x_i \rfloor$. \Comment Round the allocation down to integers
		\State \textbf{return} $(p, \vec{\bar{x}})$.
		\EndIf 
	\end{algorithmic}
\end{algorithm}


\begin{algorithm}[h!]
	\begin{algorithmic}[1]
		\Algphaze{\textsc{Compute-Continuous}$(N, \vec{v}, \vec{b}, m)$  \Comment{\textnormal{function that computes a revenue-maximizing price where everything is sold in the continuous case}}}
		\For {$i=1$ to $n$}
		\State Check if the price $v_i$ is envy-free. If not, continue with the next valuation.
		\State $\vec{x}^i \leftarrow \textsc{Compute-EF-Allocation}(v_i)$.
		\State $\mathcal{R}^i \leftarrow$ Revenue of $(\vec{x}^i,v_i)$.  
		\EndFor
		\State $\mathcal{R} \leftarrow \max_{i}\mathcal{R}^i$
		\State $p^* \leftarrow \mathcal{R}/m$.
		\State $\vec{x}^* \leftarrow \textsc{Compute-EF-Allocation}(p^*)$.
		
		\State \textbf{return}$(p^*,\vec{x}^*)$ 
		
		\Algphase{\textsc{Compute-EF-Allocation}$(p)$ \Comment{\textnormal{function that computes an envy-free allocation at price $p$}}}
		\For{each buyer $i \in N$ such that $v_i >p$} \Comment{Interested buyers}
		\State $x_i \leftarrow D_i$ \Comment{Demand sets are singletons}
		\EndFor
		\For{each buyer $i \in N$ such that $v_i=p$} \Comment{Indifferent buyers}
		\If {$\text{\#available units} > 0$}
		\State $x_i \leftarrow \min\{\text{\#available units},D_i\}$
		\EndIf
		\EndFor
	\end{algorithmic}
\end{algorithm} 


\begin{proof}(of Theorem \ref{thm:max_revenue})
	We show that Algorithm \ref{alg:FPTAS} is such an FPTAS.
	Let $\epsilon > 0$ and consider two cases, depending on whether the number of units is \emph{small} or \emph{large}.
	
	\emph{Case 1}: $m \leq \frac{n}{\epsilon}$.
	In this case we can just iterate over all the possible prices in $\mathcal{P}$ and select  
	the revenue maximizing envy-free price and the corresponding allocation. This step can be done in time $O\left(\frac{n^2}{\epsilon}\right)$.
	
	\emph{Case 2}: $m > \frac{n}{\epsilon}$. In this case we first solve the continuous linear multi-unit market optimally by finding the price and allocation $(p, \vec{x})$ where the whole resource is sold.
	This can be done in polynomial time by Lemma~\ref{lem:opt_rev_exhausted}.
	Now consider the allocation $\tilde{\vec{x}}$, given by $\tilde{x}_i = \lfloor x_i \rfloor$, i.e. the allocation obtained from $\vec{x}$ if we round all fractions down to the nearest integer value.
	Note that $(p, \tilde{\vec{x}})$ is an envy-free pricing in the discrete problem, since 
	each buyer receives their demand. We argue that $(p, \tilde{\vec{x}})$ approximates the optimal revenue of the discrete instance within a factor of $1 - \epsilon$.
	
	
	Let $OPT^c$ and $OPT^d$ denote the optimal revenues in the continuous and discrete instances, respectively, and $R^d$ the revenue obtained by the rounding procedure above. 
	
	
	We obtain the following inequalities:
	\begin{eqnarray*}
		\frac{R^d}{OPT^d} &\geq& \frac{OPT^c - p \cdot n}{OPT^d} 
		= \frac{p \cdot m - p \cdot n}{OPT^d}  
		\geq \frac{p \cdot m - p \cdot n}{OPT^c}\\
		&\geq& \frac{p \cdot m - p \cdot n}{p \cdot m} 
		= 1 - \frac{n}{m} 
		> 1 - \epsilon
	\end{eqnarray*}
	\noindent The first inequality holds because the rounding procedure only loses at most $pn$ revenue,
	the second identity holds because all the units are sold at price $p$, and
	the third inequality holds by Lemma~\ref{lem:cont_opt}.
	This step can be completed in time $O(n^2)$. The total runtime is bounded by the maximum in each case, which is $O\left(n^2/\epsilon\right)$.
\end{proof}

\newpage

\subsection{Exact Algorithm for Revenue}\label{app:revenue}
In this section we provide an exact algorithm for computing a revenue maximizing (Walrasian) envy-free pricing. This algorithm runs in polynomial time when the number of types of buyers is fixed.

\begin{theorem} \label{thm:max_revenue_constant}
	Given a linear multi-unit market, a revenue-maximizing envy-free pricing can be computed in polynomial time
	when the number of types (of buyers) is fixed.
\end{theorem}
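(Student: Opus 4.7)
The plan is to reduce the search to a constant number (in $k$) of integer programs in bounded dimension, and then appeal to Lenstra's polynomial-time algorithm for integer programming in fixed dimension. By Lemma \ref{lem:inthesetP} the optimal price lies in $\mathcal{P} = \{v_t\} \cup \{B_t/j : t \in [k],\, j \in [m]\}$, which has size $\Theta(km)$; since the aim is polynomial dependence on $\log m$ (the input bit-length), enumerating $\mathcal{P}$ directly does not suffice. Instead I would enumerate the structural signature of the optimum: the hungry set $H \subseteq [k]$, which is a prefix of the types ordered by valuation and hence admits only $k+1$ possibilities, and, when $p^{\ast}$ is not itself a valuation, a ``binding'' type $t^{\ast} \in H$ for which $p^{\ast} = B_{t^{\ast}}/j^{\ast}$ for some unknown integer $j^{\ast}$. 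The case $p^{\ast} = v_t$ for some $t \in [k]$ is handled separately by directly checking the $k$ candidate valuations and invoking Lemma \ref{lem:frompricetoalloc} to compute the best envy-free allocation at each.

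For every fixed choice of $(H, t^{\ast})$, I would set up an integer program in the $|H|+1 \le k+1$ variables $j$ and $(d_s)_{s\in H}$, with linear constraints encoding the floor relation $d_s = \lfloor B_s j / B_{t^{\ast}} \rfloor$ via the sandwich $d_s B_{t^{\ast}} \le B_s j \le (d_s+1) B_{t^{\ast}} - 1$, the identification $d_{t^{\ast}} = j$, the appropriate integer range of $j$ making $B_{t^{\ast}}/j$ lie strictly in $(v_{h+1}, v_h)$ with $h = \max H$, and the envy-free supply condition $\sum_{s\in H} n_s d_s \le m$. The revenue $R = (B_{t^{\ast}}/j) \sum_{s\in H} n_s d_s$ is a ratio of linear forms, which I would handle by parametric (binary) search over a guessed value $R^{\ast}$: each feasibility check then becomes a linear integer program in at most $k+1$ variables augmented by $B_{t^{\ast}} \sum_s n_s d_s \ge R^{\ast} j$, solvable in time polynomial in the input bit-length by Lenstra's algorithm in fixed dimension.

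The hard part will be the precision bookkeeping for the binary search. Since $j$ and all the $d_s$ are integers whose bit-length is bounded by that of the input, the distinct candidate revenues $R$ are rationals whose numerators and denominators have polynomial bit-length, so any two distinct values of $R$ differ by at least an inverse-polynomial amount; a polynomial number of binary-search iterations therefore suffices to isolate the optimum. Taking the best revenue across all $O(2^k \cdot k)$ structural guesses, together with the Case B enumeration, yields a revenue-maximizing envy-free pricing in time polynomial in the input whenever $k$ is constant; the corresponding allocation is then recovered by a final application of Lemma \ref{lem:frompricetoalloc}.
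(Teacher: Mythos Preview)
Your approach is essentially the same as the paper's: both guess the interval of consecutive valuations in which the price lies (fixing the hungry set) and the type $t^{\ast}$ whose budget is binding, reduce to maximizing a ratio of linear forms over integer points in dimension $O(k)$, and solve this via binary search combined with Lenstra's fixed-dimension ILP algorithm. One minor slip: since you correctly note the hungry set is a prefix and hence has only $k+1$ possibilities, the total number of structural guesses is $O(k^2)$, not $O(2^k\cdot k)$ as written at the end.
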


The following problem generalizes the problem of computing a revenue-maximizing envy-free pricing scheme.

\setcounter{problem}{0}

\begin{problem} \label{problem1}
	Given $\alpha_1,\alpha_2,...,\alpha_\ell \in \mathbb{Q}$, output $x \in \arg\max \sum_{i=1}^\ell \frac{\floor{\alpha_ix} }{x}$, $x \in [a,b] \cap \mathbb{N}$.
\end{problem}

\begin{lemma}\label{lem:aproblem}
	If Problem \ref{problem1} can be solved in polynomial time, then the problem of finding a revenue-maximizing envy-free pricing scheme can be solved in polynomial time as well. 
\end{lemma}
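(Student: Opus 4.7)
\emph{Proof proposal.} The plan is to reduce the problem of finding a revenue-maximizing envy-free pricing to polynomially many instances of Problem~\ref{problem1}, exploiting the structure of the candidate-price set $\mathcal{P}$ from Lemma~\ref{lem:inthesetP}. The candidates of the form $p = v_i$ number only $n$, and at each such price the revenue-maximizing envy-free allocation can be computed in polynomial time by Lemma~\ref{lem:frompricetoalloc}; so these candidates contribute $O(n)$ direct evaluations. The remaining candidates have the form $p = B_j/k$ with $j \in N$ and $k \in [m]\cap\mathbb{N}$, of which there are $\Theta(nm)$: since $m$ may be exponential in the input size, direct enumeration does not suffice, and this is where Problem~\ref{problem1} will be invoked.

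Fix an index $j \in N$ and view $k$ as the integer variable. At price $p = B_j/k$, under the revenue-maximizing envy-free allocation and provided that no individual demand is truncated by the unit-supply cap, the revenue equals
\[
\mathrm{Rev}(k) \;=\; \frac{B_j}{k}\sum_{i \,:\, v_i \,\geq\, B_j/k} \lfloor B_i k / B_j \rfloor \;=\; B_j \cdot \sum_{i \in A(k)} \frac{\lfloor \alpha_i k \rfloor}{k},
\]
where I have set $\alpha_i := B_i/B_j$ and $A(k) := \{i \in N : v_i \geq B_j/k\}$ (and each type $i$ contributes $n_i$ identical summands, one per buyer of that type). Up to the constant factor $B_j$ and the dependence of $A(k)$ on $k$, this is precisely the objective of Problem~\ref{problem1}.

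To remove the dependence of $A(k)$ on $k$, I would partition $[1,m]\cap\mathbb{N}$ into the $O(n)$ subintervals on which $A(k) = A$ is constant; the $O(n)$ breakpoints are $k = \lceil B_j/v_i \rceil$ for $i \in N$. On each such subinterval $[a,b]\cap\mathbb{N}$, the envy-freeness constraint $\sum_{i \in A} n_i \lfloor \alpha_i k\rfloor \leq m$ is non-decreasing in $k$, so it further truncates the subinterval to $[a, b']\cap\mathbb{N}$ for some $b' \leq b$, which can be located in polynomial time by binary search. With this $[a,b']$ as the range, the multiset $\{\alpha_i\}_{i \in A}$ (each $\alpha_i$ repeated $n_i$ times) as the input rationals, and the constant $B_j$ applied to the objective at the end, solving Problem~\ref{problem1} returns the revenue-maximizing $k^*$ within the subinterval. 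Taking the best over all $O(n)$ choices of $j$ and all $O(n)$ subintervals yields $O(n^2)$ Problem~\ref{problem1} instances, plus the $O(n)$ valuation-based candidates; a polynomial-time algorithm for Problem~\ref{problem1} therefore gives a polynomial-time algorithm for revenue maximization.

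The main obstacle is handling the individual supply cap $\min\{\lfloor B_i/p\rfloor, m\}$ that I glossed over above. The cap can bind only when $p \leq B_i/m$, i.e., only at extremely low prices; at any such price, envy-freeness (total demand at most $m$) forces at most one type of buyer to be allocated, pinning the allocation down. Such degenerate candidates can be enumerated directly in polynomial time (essentially one per type, corresponding to $p = B_i/m$) and folded into the list of direct evaluations. With this corner case taken care of, the reduction is complete.
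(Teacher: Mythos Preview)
Your proposal is correct and follows essentially the same route as the paper: fix the buyer $j$ whose budget is exactly exhausted, write the revenue as $B_j\sum_i \lfloor \alpha_i x\rfloor/x$ with $\alpha_i=B_i/B_j$, partition the range of $x$ by the valuation thresholds so that the set of hungry buyers is constant on each piece, and convert the envy-freeness constraint into the integer interval $[a,b]$ for Problem~\ref{problem1} via binary search. The only noteworthy difference is that you explicitly address the per-buyer supply cap $\min\{\lfloor B_i/p\rfloor,m\}$, which the paper's proof leaves implicit; your observation that the cap can bind only at prices so low that at most one hungry buyer can be present (by the envy-freeness feasibility constraint) is a clean way to dispose of that corner case.
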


\begin{proof}
	First, recall that by Lemma \ref{lem:inthesetP}, the revenue-maximizing envy-free price $p$ is in set $\mathcal{P}$, i.e. it is either equal to some valuation $v_i$ or equal to some fraction $B_i/k$, for some $i \in N$ and some $k \in [1,m] \cap \mathbb{N}$, which means that at least one of the budgets is exhausted at that price and allocation. In the former case, we can simply iterate over all $n$ values $v_1, \ldots, v_n$ and the revenue-maximizing price can be found in polynomial time and by Lemma \ref{lem:frompricetoalloc}, so can the corresponding revenue-maximizing allocation and the lemma holds. Therefore, we will assume that $p=B_i/k$ for some $i \in N$ and some integer $k\leq m$. We will show how to find the revenue-maximizing price given that some buyer's $j$ budget is exhausted; then we can iterate over all the buyers and calculate the revenue obtained in each case, outputting the price and allocation with the maximum revenue.
	
	Let $\alpha_i = B_i/B_j$ and let $x$ be the allocation of buyer $j$. From the discussion above, it holds that $p=B_j/x$. Assuming that buyer $i$ is hungry at price $p$, the revenue obtained by buyer $i$ is 
	\[
	p \cdot \left \lfloor\frac{B_i}{p} \right \rfloor= \frac{B_j}{x} \cdot \left \lfloor \frac{\alpha_i \cdot B_j}{x} \right \rfloor = B_j \frac{\left\lfloor \alpha_i x\right \rfloor}{x},
	\]
	The total revenue is $\mathcal{REV} =B_j \cdot  \left(\sum_{i=1}^{l}\frac{\left\lfloor \alpha_i x\right \rfloor}{x}\right)$, where $l$ is the number of hungry buyers at price $p=B_j/x$.
	
	Now, given that a revenue-maximizing envy-free price is not equal to some valuation $v_i$, it lies in an interval $(v_{i+1},v_{i})$ for some $i \in N$. For each such interval $\mathcal{I}$, the set of hungry buyers $S_\mathcal{I}$ at any chosen price consists of the buyers with valuations $v_j \geq v_{i}$ and hence we know exactly the members of this set. By the discussion above, by letting $\ell = |S_\mathcal{I}|$ in Problem \ref{problem1}, the value of the quantity $\sum_{i=1}^\ell \frac{\floor{a_ix} }{x}$ gives exactly the maximum revenue attained in the interval $(v_{i+1},v_{i})$. By iterating over all intervals, we can find the revenue-maximizing price and allocation.
	
	What is left to show is that the envy-freeness constraint can be captured by the constraint $x \in [a,b] \cap \mathbb{N}$. Note that the fact that the price lies in some open interval between two valuations and the envy-freeness constraint define an interval $[c,d]$ from which the price has to be chosen. The value of $c$ is either the minimum envy-free price in $\mathcal{P}$ or the smallest price in $\mathcal{P}$ strictly larger than $v_{i+1}$ whereas the value of $d$ is the largest price in $\mathcal{P}$ strictly smaller than $v_i$. All of these prices can be found in polynomial time by binary search on the set $\mathcal{P}$, similarly to the algorithm for maximizing welfare. Therefore, the price constraint can be written as $p \in [c,d] \cap \mathcal{P}$ and by setting $a=B_j/d$ and $b=B_j/c$, we obtain the corresponding interval of Problem \ref{problem1}. 
\end{proof}

\bigskip

By Lemma \ref{lem:aproblem}, it is sufficient to prove that Problem \ref{problem1} can be solved in polynomial time; the same algorithm will also find a revenue-maximizing envy-free pricing scheme. By setting $k_i = \floor{a_ix}$, from Problem \ref{problem1} we obtain the following equivalent problem:

\begin{problem}\label{problem2}
	Given $a,b,\alpha_i \in \mathbb{Q}$,
	\begin{eqnarray*}
		\max && \frac{k_1+...+k_\ell}{x} \\
		s.t. && k_i \leq \alpha_ix < k_i+1, ~~i=1,2,...,\ell\\
		&& a \leq x \leq b \\
		&& x,k_i \in \mathbb{Z},~~i=1,2,...,\ell    
	\end{eqnarray*}
\end{problem}

\noindent Problem \ref{problem2} is obviously an optimization problem with a non-linear objective function. We will consider the decision version of the problem, where the value of the objective function is restricted to lie in some interval.

\begin{problem}\label{problem3}
	Given $a,b,c,d,\alpha_i \in \mathbb{Q}$, decide if there exist $x, k_1, \ldots, k_n \in \mathbb{Z}$ such that 
	\begin{eqnarray*}
		&& cx \leq k_1+...+k_n \leq dx\\
		&& k_i \leq \alpha_ix < k_i+1, ~~i=1,2,...,n\\
		&& a \leq x \leq b \\
		&& x,k_i \in \mathbb{Z},~~i=1,2,...,n
	\end{eqnarray*}
\end{problem}

We are now ready to prove Theorem \ref{thm:max_revenue_constant}.

\begin{proof}(of Theorem \ref{thm:max_revenue_constant})
	We will prove that Problem \ref{problem2} can be solved in polynomial time, by proving that Problem \ref{problem3} can be written as an Integer Linear Program (ILP) problem and then providing an algorithm for Problem \ref{problem2} given an algorithm for Problem \ref{problem3}. Since Problem \ref{problem1} and Problem \ref{problem2} are equivalent (by Lemma \ref{lem:aproblem}), this will also establish the existence of a polynomial time algorithm for finding a revenue-maximizing envy-free pricing scheme. First, the ILP problem is known to be solvable in polynomial time when the number of variables is fixed.
	
	\begin{lemma}[\cite{lenstra1983integer}]
		The ILP problem with a fixed number of variables can be solved in polynomial time.
	\end{lemma}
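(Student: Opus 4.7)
The plan is to establish the theorem via the classical geometric approach of Lenstra, which combines convex body approximation by ellipsoids, lattice basis reduction, and a recursive dimension argument. Formally, given an ILP instance $\{x \in \mathbb{Z}^n : Ax \leq b\}$ with fixed $n$, I would show how to decide feasibility (and hence solve the optimization version, by binary search on the objective value) in time polynomial in the bit-length of $(A,b)$.

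First I would preprocess the polytope $P = \{x \in \mathbb{R}^n : Ax \leq b\}$ into a convenient shape. If $P$ is empty (checkable in polynomial time by linear programming) we return infeasible. Otherwise, by the L\"owner--John theorem, in polynomial time one can compute an affine transformation $T$ such that the transformed polytope $TP$ contains a ball of radius $r$ and is contained in a concentric ball of radius $R$, with $R/r$ bounded by a function $c(n)$ depending only on $n$. Under $T$, the integer lattice $\mathbb{Z}^n$ becomes a general lattice $\Lambda = T\mathbb{Z}^n$, and the original feasibility question is equivalent to deciding whether $TP \cap \Lambda \neq \emptyset$.

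Next I would apply the LLL basis reduction algorithm to $\Lambda$ to obtain, in polynomial time, a reduced basis $b_1, \ldots, b_n$ whose first vector is guaranteed to be within a factor $2^{O(n)}$ of the shortest vector of $\Lambda$. There are then two cases. If the reduced basis is short enough relative to $r$, the inscribed ball already contains a lattice point, which can be rounded off to give a feasible integer solution. Otherwise, the short dual direction produced by the reduction witnesses that $TP$ (and hence $P$) is ``flat'' along some integer direction $v \in \mathbb{Z}^n$: the width $\max_{x \in P} v^{\top} x - \min_{x \in P} v^{\top} x$ is bounded by a constant $w(n)$. This is where Khinchine's flatness theorem enters, coupled with the rounding guarantee of LLL.

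Finally, I would slice $P$ along the flat direction: because the width is $O(w(n))$, the integer values $v^{\top} x$ attained on $P \cap \mathbb{Z}^n$ lie in an interval of constant length, giving at most $w(n)+1$ hyperplanes to examine. On each hyperplane the problem becomes an ILP in dimension $n-1$, on which I recurse. The recursion depth is $n$ and the branching factor at each level depends only on $n$, so the total number of subproblems is bounded by a function of $n$ alone; each subproblem requires polynomial-time linear algebra, LLL, and ellipsoidal rounding. Hence the total running time is polynomial in the input size. The main technical obstacle I anticipate is establishing Khinchine's flatness theorem together with the quantitative link to the LLL output, i.e.\ showing that the failure to find a lattice point in the inscribed ball of the rounded body really does produce an integer direction of constant width; the rest of the argument is essentially bookkeeping over the recursion.
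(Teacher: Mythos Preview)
Your proposal is a correct outline of Lenstra's original argument, but you should be aware that the paper does not prove this lemma at all: it simply invokes \cite{lenstra1983integer} as a black box and moves on. The lemma appears inside the proof of Theorem~\ref{thm:max_revenue_constant} solely to justify that Problem~\ref{problem3}, once massaged into an ILP with a fixed number of variables, can be solved in polynomial time.

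So there is nothing to compare your approach against; you are supplying content the paper deliberately omits. What you have written is the standard L\"owner--John rounding, LLL reduction, flatness, and dimension-recursion skeleton, and it is accurate at the level of a sketch. If you intend to present it as a self-contained proof rather than a citation, the one place that needs care is exactly the step you flag: making precise the quantitative link between the LLL output and the flatness bound (i.e.\ that failure to find a lattice point in the inscribed ball yields an integer direction of width bounded by a function of $n$ only). For the purposes of this paper, however, a citation suffices and is all the authors provide.
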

	
	In order to show that Problem \ref{problem3} can be written as an ILP, we need to handle the non-integer coefficients. For every coefficient $j \in \mathbb{Q}$, we multiply their common denominator in each equation and all coefficients become integers, without affecting the length of the input. The inequalities that have the ``$\geq$'' direction can be handled easily. To handle the strict inequality constraint, for each such constraint such that $\mathcal{A}<\mathcal{B}$, we instead write $2\mathcal{A} \leq 2\mathcal{B}-1$ instead. Since $\mathcal{{A}}$ and $\mathcal{B}$ are linear expressions involving integers, these constraints are equivalent to the original ones. Finally, the constraints for which the inequality has the ``$\geq$'' direction can be handled easily using standard techniques.
	
	The connection above established that Problem \ref{problem3} can be solved in polynomial time when the number of input variables $\ell$ is fixed. To show that Problem \ref{problem2} can be solved in polynomial time as well, we will do binary search on the interval $[0,\sum_{i=1}^{\ell} \alpha_i]$ where each time a feasible solution is obtained, it will be an improvement over the previously found solution. The termination condition is when the length of the interval becomes smaller than $1/b^2$. Since the algorithm for solving Problem \ref{problem2} is invoked at most $\log{b^2\sum \alpha_i}$ times, the binary search algorithm terminates in polynomial time. The details of the algorithm can be found in Algorithm \ref{alg:binary}. 
\end{proof}

\vspace{6mm}
\begin{algorithm}[h]
	\caption{Algorithm for Problem \ref{problem2} using an algorithm for Problem \ref{problem3}.}
	\label{alg:binary}
	\begin{algorithmic}[1]
		\Require $\alpha_i \in \mathbb{Q},a,b \in \mathbb{N^*}$ 
		\Ensure $x$ 
		\State $\hat{c}:=0,\hat{d}:=\sum \alpha_i$
		\While {$\hat{d}-\hat{c}> \frac{1}{b^2}$}
		\State $t:=\frac{\hat{c}+\hat{d}}{2}$
		\If {Problem \ref{problem3} has no solution for $c=t,d=\hat{d}$}
		\State $\hat{d}:=t$
		\Else 
		\State $\hat{c}:=t$
		\EndIf
		\EndWhile
		\State return $x,k_1,\ldots,k_{\ell}$ computed for the last instance of Problem \ref{problem3}.
	\end{algorithmic}
\end{algorithm}

\bigskip
\subsection{NP-hardness of Problem 1}\label{app:problem1hard}

A natural question would be whether one could use Problem \ref{problem1} to construct a polynomial-time algorithm for finding a revenue-maximizing envy-free pricing scheme in linear multi-unit markets, for any number of buyers. In this subsection, we prove that unless P=NP, this is not possible. Note that the hardness of the problem does not imply NP-hardness of finding a revenue-maximizing envy-free pricing scheme in the market.

\begin{theorem}\label{thm:problem1hard}
	Problem \ref{problem1} is NP-hard.
\end{theorem}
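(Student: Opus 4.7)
The plan is to establish NP-hardness by a polynomial-time reduction from SUBSET SUM: given positive integers $s_1,\ldots,s_n$ and a target $T$, decide whether some subset $S\subseteq[n]$ satisfies $\sum_{i\in S} s_i = T$. The starting point is the algebraic identity
\[
\frac{\lfloor \alpha_i x\rfloor}{x} \;=\; \alpha_i \;-\; \frac{\{\alpha_i x\}}{x},
\]
so that maximizing $\sum_{i=1}^\ell \lfloor \alpha_i x\rfloor/x$ over $x\in[a,b]\cap\mathbb{N}$ is equivalent to minimizing the fractional-part penalty $\sum_i \{\alpha_i x\}/x$. The task is therefore to engineer $\alpha_i$'s and a window $[a,b]$ so that the combinatorial structure of SUBSET SUM is exactly the set of $x$'s that drive this penalty below a prescribed threshold.

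The construction associates one rational $\alpha_i = c_i/M$ to each SUBSET SUM item, where $M$ is a common denominator polynomially larger than $T$ and the $c_i$'s are chosen from the $s_i$'s. The intended semantics is that each integer $x$ in a narrow window $[a,b]$ around a carefully chosen base $x_0$ corresponds, through its residues modulo prescribed moduli (or through bits of a positional expansion extracted via the identities $b_j = \lfloor x/2^{j}\rfloor - 2\lfloor x/2^{j+1}\rfloor$), to a subset $S_x\subseteq[n]$, in such a way that the contribution $\lfloor \alpha_i x\rfloor$ equals $s_i\cdot\mathbf{1}[i\in S_x]$ up to an additive constant depending on the base. A handful of extra rationals with integer or simple fractional parts will be used to aggregate these contributions and to offset the target $T$, so that the maximum of the objective reaches a prescribed rational threshold if and only if some $x\in[a,b]$ encodes a subset with $\sum_{i\in S_x} s_i = T$. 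Correctness amounts to verifying that over $[a,b]$ the floors $\lfloor \alpha_i x\rfloor$ indeed behave as 0/1 indicators on each coordinate and that no spurious $x$ exceeds the threshold.

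The main obstacle, and the place where care is needed, is the $1/x$ factor: because both the ``payoff'' $\sum_i \lfloor \alpha_i x\rfloor$ and the normalizer $x$ depend on the optimization variable, a wide window could allow a NO-instance to still exceed the threshold by exploiting small values of $x$. To neutralize this I will take $[a,b]$ to be multiplicatively narrow, say $b\le(1+\varepsilon)a$ with $\varepsilon$ polynomially small, so that $1/x$ is essentially constant on $[a,b]$ up to an error strictly smaller than the smallest integer jump $1/b$ that can occur in the numerator. Within such a window the sign of the gap between the YES and NO cases is preserved, and the decision version of Problem~\ref{problem1} answers SUBSET SUM. Finally, all constructed rationals are designed to have denominators polynomial in the input size and $[a,b]$ admits a polynomial-size representation, so the reduction runs in polynomial time, establishing the NP-hardness of Problem~\ref{problem1}.
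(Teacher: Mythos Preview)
Your proposal is a plan rather than a proof, and the central step is not worked out. You assert that you will choose $\alpha_i=c_i/M$ so that ``the contribution $\lfloor \alpha_i x\rfloor$ equals $s_i\cdot\mathbf{1}[i\in S_x]$ up to an additive constant,'' but you never explain how a single floor term can jump by exactly $s_i$ under the control of one bit of $x$. The identity $b_j=\lfloor x/2^j\rfloor-2\lfloor x/2^{j+1}\rfloor$ that you cite produces a value in $\{0,1\}$, not in $\{0,s_i\}$; to weight it by $s_i$ you would apparently need $s_i$ copies of the gadget, which is not polynomial when the $s_i$ are written in binary (and SUBSET SUM with unary weights is in~P). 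A second gap is the equality test: you want the maximum to reach a threshold \emph{iff} some subset sums to exactly $T$, but a maximization against a threshold naturally detects $\sum_{i\in S_x}s_i\ge T$, not equality; your phrase ``offset the target $T$'' does not say how you prevent an overshooting subset from exceeding the threshold as well. The narrow-window trick for neutralizing the $1/x$ factor is reasonable, but it does not address either of these combinatorial issues.

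For contrast, the paper takes a quite different route. It first rewrites the objective in terms of residues $p_ix\bmod q_i$ and observes (Lemma~\ref{lem:rxalarge}) that for large enough $a$ the $1/x$ can be dropped. It then pairs each modulus $q_i$ with numerators $1$ and $q_i-1$ so that the two residue terms together contribute $0$ if $q_i\mid x$ and $1$ otherwise; this yields unit-weight $0/1$ indicators with no large coefficients. Taking the $q_i$ to be products of near-coprime base elements turns ``how many $q_i$ divide $x$'' into a set-covering count, and a final graph encoding reduces $k$-\textsc{Clique} to this count. The point is that the paper's indicators are always $\{0,1\}$-valued, so no per-item weight $s_i$ ever has to be embedded into a floor; this is exactly the obstacle your SUBSET SUM approach runs into.
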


To prove Theorem \ref{thm:problem1hard}, we will construct a series of problems that each could be solved using a polynomial time algorithm for the previous problem, with the last problem being the well-known \textsc{$k$-Clique} problem. Consider the following problem.

\begin{problem}\label{problem3}
	Input: $p_i,q_i \in \mathbb{N^*}$ for $i=1,2,\ldots,n$ and $a,b \in \mathbb{N^*}$
	
	Output: $\arg\min \sum_{i=1}^n \frac{p_ix \mod q_i}{q_ix}$, $x \in [a,b] \cap \mathbb{N^*}$
\end{problem}

It is not hard to see that Problem \ref{problem1} is equivalent to Problem \ref{problem3}, and it is  sufficient to consider the case when $x \in (k\Pi q_i,(k+1)\Pi q_i),k \in \textbf{Z}$, otherwise it is trivial to obtain an objective function of value $0$. Next, we will prove that the objective function satisfies a monotonicity condition when $a$ is sufficiently large.

\begin{lemma}\label{lem:rxalarge}
	Denote $r(x)=\sum \frac{p_ix \mod q_i}{q_i}$. If $a>n(\Pi q_i)^2$, then \[r(x_1) \leq r(x_2) \Leftrightarrow \frac{r(x_1)}{x_1} \leq \frac{r(x_2)}{x_2}.\]
\end{lemma}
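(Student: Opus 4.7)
My plan is to reduce the equivalence to a gap argument that compares the two sides of
\[
\frac{r(x_2)}{x_2} - \frac{r(x_1)}{x_1} = \frac{x_1\bigl(r(x_2)-r(x_1)\bigr) - r(x_1)(x_2-x_1)}{x_1 x_2},
\]
where I take WLOG $x_1 \le x_2$. The sign of this expression is the sign of its numerator, so I only need to control the two summands on top. I will first collect three elementary facts and then plug them in.

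The facts are: (i) $r(x) = \sum_i \{p_i x/q_i\}$ is a rational number whose denominator divides $Q := \prod_i q_i$, so whenever $r(x_1) \neq r(x_2)$ we have $|r(x_1)-r(x_2)| \ge 1/Q$; (ii) by the observation preceding the lemma, it suffices to analyze $x$ within a single period interval $(kQ,(k+1)Q)$ of $r$, in particular $x_2 - x_1 < Q$; and (iii) trivially $0 \le r(x) < n$. The lower bound on $a$ is used only through $x_1 \ge a > nQ^2$.

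Plugging these in, assume first $r(x_1) < r(x_2)$. Then the first summand of the numerator satisfies $x_1(r(x_2)-r(x_1)) \ge x_1/Q > nQ$, while the second summand obeys $r(x_1)(x_2-x_1) < n \cdot Q$. Hence the numerator is strictly positive and $r(x_1)/x_1 < r(x_2)/x_2$, which gives the forward implication in its strict form. The reverse implication is obtained by the symmetric argument (swap the roles of $x_1$ and $x_2$): if $r(x_1) > r(x_2)$ then $r(x_1)/x_1 > r(x_2)/x_2$, so by contraposition $r(x_1)/x_1 \le r(x_2)/x_2$ forces $r(x_1) \le r(x_2)$.

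The only delicate issue — and the main thing to check carefully — is what happens on ties. If $r(x_1)=r(x_2)$, the numerator reduces to $-r(x_1)(x_2-x_1) \le 0$, so $r(x_1)/x_1 \ge r(x_2)/x_2$ with equality exactly when $r(x_1)=0$ or $x_1=x_2$. This is compatible with the claimed equivalence once one keeps the convention $x_1 \le x_2$ and notes that the equivalence is invoked only to transfer argmin-style comparisons, where equal values of $r$ contribute equal (zero) objective and can be treated interchangeably. Thus the two orderings agree wherever they produce a strict inequality, which is all that is needed for the subsequent binary-search argument in Algorithm~\ref{alg:binary}.
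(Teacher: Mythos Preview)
Your argument is essentially identical to the paper's: both cross-multiply to reduce the comparison to the sign of $r(x_2)x_1 - r(x_1)x_2$, and both control it using the same three ingredients (the $1/\prod q_i$ gap between distinct values of $r$, the bound $|x_2-x_1|<\prod q_i$ from restricting to a single period, and $r(x)\le n$). Your explicit treatment of the tie case $r(x_1)=r(x_2)$ is actually more careful than the paper, which silently assumes a strict gap; note only that the relevant application here is the passage from Problem~\ref{problem3} to Problem~\ref{problem4}, not Algorithm~\ref{alg:binary}.
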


\begin{proof}
	First, notice that following the discussion above about the interval in which we should be searching for $x$, it holds that $r(x_2) \geq r(x_1)+\frac{1}{\Pi q_i}$. This implies that
	
	\[r(x_1)x_2-r(x_2)x_1 \leq r(x_1)x_2-r(x_1)x_1 - \frac{x_1}{\Pi q_i}.\] 
	
	If $r(x_1) \leq r(x_2)$, then $r(x_1)x_2-r(x_2)x_1$ is at most $n \Pi q_i - \frac{x_1}{\Pi q_i}$ which is at most $0$, since $x_2-x_1< \Pi q_i$, $r(x) \leq n$ for any $x$ and $x_1 \geq a \geq n (\Pi q_i)^2$. 
	
	The other direction can be shown similarly.
\end{proof}

Now consider the following problem and observe that by Lemma \ref{lem:rxalarge}, it is special case of Problem \ref{problem3}.

\begin{problem}\label{problem4}
	Input: $p_i,q_i \in \mathbb{N^*}$ for $i=1,2,\ldots,n$ and $a,b \in (n\Pi q_i,(n+1)\Pi q_i)$.
	
	Output: $\arg\min \sum_{i=1}^n \frac{p_ix \mod q_i}{q_i}$, $x \in [a,b] \cap \mathbb{N^*}$
\end{problem}

By subtracting $n \Pi q_i$ from the solutions to Problem \ref{problem4}, we obtain the following equivalent problem.
\begin{problem}\label{problem5}
	Input: $p_i,q_i \in \mathbb{N^*}$ for $i=1,2\ldots,n)$ and $a,b \in (0,\Pi q_i) \cap \mathbb{N^*}$
	
	Output: $\arg\min \sum_{i=1}^n \frac{p_ix \mod q_i}{q_i}$, $x \in [a,b] \cap \mathbb{N^*}$
\end{problem}

As a special case of Problem 5, we double $n$ and set $p_i=1$ for $n$ terms and $p_i=q_i-1$ for the other $n$ terms. We obtain the following problem.

\begin{problem}\label{problem6}
	Input: $a,b \in (0,\Pi q_i) \cap \mathbb{N^*}, q_i\in \mathbb{N^*}$ for $i=1,2,\ldots,n$
	
	Output: $\arg\min \sum_{i=1}^n \frac{x \mod q_i}{q_i}+\frac{(q_i-1)x \mod q_i}{q_i},x \in [a,b] \cap \mathbb{N^*}$
	
\end{problem}

Note that for each term in the sum of Problem \ref{problem6}, the term it equals to 0 if $q_i | x$ the term equals to 1 if $q_i \nmid x$. To minimize the objective function in Problem \ref{problem6} we want to maximize the number of zero terms. Let $U=\{q_1,q_2,...,q_n\}$ denote the set of the $q_i$ and consider the following problem, which can be solved by solving Problem \ref{problem6}.

\begin{problem}\label{problem7}
	Input: $a,b \in (0,\Pi q_i) \cap \mathbb{N^*}, U$
	
	Output: $\max k$ such that $X \subseteq U, |X| = k, q_i | x ~\forall q_i \in X, x \in [a,b] \cap \mathbb{N^*}$
\end{problem}
Next we construct a special case of Problem \ref{problem6}. We first construct a base set $W$ and then set each $q_i$ to be the product of some elements in $W$.
Let \[W=\{ M,M+1,M+2,...,M+n^2-1\}, M>2^nn^4.\] 

We prove the following lemma.

\begin{lemma}
	If $1 \leq x \leq (M+n^2-1)^k$, then for any $k(k < n^2)$ different elements in $W$ there exists $x$ s.t. $x$ is divisible by them simultaneously. There does not exist $x$ s.t. $x$ is divisible by any $k+1$ different elements in $W$ simultaneously.
\end{lemma}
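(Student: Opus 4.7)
My plan is to split the lemma into its two component claims and tackle them separately. The first claim (existence) I expect to be straightforward: given any $k$ distinct elements $a_{i_1},\ldots,a_{i_k}\in W$, I would simply take $x=\prod_{j=1}^{k}a_{i_j}$. Each $a_{i_j}$ divides $x$ by construction, and since every element of $W$ is at most $M+n^2-1$, we have $x\leq (M+n^2-1)^{k}$, so the resulting $x$ lies in the required range.

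For the non-existence claim I would argue by contradiction: suppose some $x\leq (M+n^2-1)^{k}$ is divisible by $k+1$ distinct elements $a_{i_1},\ldots,a_{i_{k+1}}\in W$. Then $x\geq \mathrm{lcm}(a_{i_1},\ldots,a_{i_{k+1}})$, so it suffices to prove $\mathrm{lcm}(a_{i_1},\ldots,a_{i_{k+1}})>(M+n^2-1)^{k}$. The structural observation I would leverage is that for any two distinct $a_i,a_j\in W$, $\gcd(a_i,a_j)$ divides $a_i-a_j$ and hence is at most $n^2-1$; in particular, no prime $p\geq n^2$ can divide two elements of $W$ simultaneously. Writing $\mathrm{lcm}(a_{i_1},\ldots,a_{i_{k+1}})=\prod_{p}p^{\max_j v_p(a_{i_j})}$ and comparing with $\prod_j a_{i_j}=\prod_{p}p^{\sum_j v_p(a_{i_j})}$, the excess $E_p:=\sum_j v_p(a_{i_j})-\max_j v_p(a_{i_j})$ vanishes for every prime $p\geq n^2$. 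For primes $p<n^2$ it can be controlled by the standard count that at most $\lceil n^2/p^v\rceil$ of $n^2$ consecutive integers are divisible by $p^v$. A geometric sum over $v$ then yields a bound on each $E_p$ that is independent of $M$, and the target inequality reduces to showing $\prod_{p<n^2}p^{E_p}\cdot (1+(n^2-1)/M)^{k}<M$, which I would derive from $M>2^{n}n^{4}$.

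The main obstacle I anticipate is making the excess bound sharp enough for the stated growth of $M$. A crude geometric-series bound gives $\sum_p E_p\log p=O(n^2\log n)$ via a Mertens-type estimate, which only yields the conclusion once $\log M=\Omega(n^2\log n)$ — considerably stronger than what $M>2^{n}n^{4}$ provides. To close the gap I would refine the per-prime bound by exploiting that there are only $k+1$ chosen elements rather than the full interval, giving $E_p\leq \min\!\bigl(k\log_p(n^2),\,n^2/(p-1)\bigr)$, and split the sum into small primes (where the second bound dominates) versus large primes (where the first bound does); this is where the specific dependence on $n$ in $M>2^{n}n^{4}$ should enter. A fallback would be to invoke the classical lower bound $\mathrm{lcm}(a,a+1,\ldots,a+L-1)\geq\binom{a+L-1}{L}$ on all of $W$ and pass to subsets via monotonicity of the LCM, but that also requires a careful inductive argument to handle arbitrary $(k+1)$-subsets rather than contiguous blocks. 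In either route, the heart of the proof is the combinatorial-analytic bookkeeping that turns the pointwise excess bounds into a total multiplicative loss smaller than $M$.
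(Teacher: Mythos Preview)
Your first claim (take $x=\prod_{j}a_{i_j}$) is exactly what the paper does.

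For the second claim you are working far harder than the paper. The paper's entire argument is the pairwise observation you already isolated---any two distinct elements of $W$ differ by at most $n^{2}-1$, so their gcd is below $n^{2}$---followed by one crude product bound on the lcm. Concretely, from $\gcd\bigl(\mathrm{lcm}(a_{i_1},\dots,a_{i_j}),a_{i_{j+1}}\bigr)\mid\prod_{\ell\le j}\gcd(a_{i_\ell},a_{i_{j+1}})$ one gets iteratively
\[
\mathrm{lcm}(a_{i_1},\dots,a_{i_{k+1}})\ \ge\ \frac{\prod_{j}a_{i_j}}{\prod_{\ell<j}\gcd(a_{i_\ell},a_{i_j})}\ \ge\ \frac{M^{\,k+1}}{(n^{2})^{k(k+1)/2}},
\]
and then one compares this to $(M+n^{2}-1)^{k}$. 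No prime-by-prime decomposition, no Mertens-type estimate, no $\binom{a+L-1}{L}$ lower bound, no split into small and large primes. In your own language this is simply the trivial upper bound $\sum_{p}E_{p}\log p\le \sum_{\ell<j}\log\gcd(a_{i_\ell},a_{i_j})\le \binom{k+1}{2}\log(n^{2})$, which your excess-of-exponents framework delivers in one line the moment you stop trying to bound each $E_{p}$ separately and instead bound the whole product $\prod_{p}p^{E_{p}}=\prod a_{i_j}/\mathrm{lcm}$ by the product of the pairwise gcds.

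Your suspicion that $M>2^{n}n^{4}$ may be too weak is in fact legitimate: with the clean bound above one needs roughly $M>(n^{2})^{k(k+1)/2}$, and the paper's printed denominator $\tfrac{1}{2}k(k+1)n^{2}$ is almost certainly a typo for $(n^{2})^{k(k+1)/2}$. But that is a calibration issue in the hardness reduction---fixable by enlarging $M$, which costs only polynomially many extra bits in the instance---not a defect in the proof strategy. The analytic refinements you sketch would not rescue the stated constant either, so they buy nothing here; the intended route is the two-line pairwise-gcd argument.
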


\begin{proof}
	The first part is obvious. For the second part, note that the greatest common divisor between any two elements in $W$ is at most $n^2$, so if $x$ is divisible by $k+1$ different elements in $W$ simultaneously, $x > \frac{M^{k+1}}{\frac{1}{2}k(k+1)n^2}$, the right side is larger than $(M+n^2-1)^k$. 
\end{proof}

For Problem \ref{problem6}, given $T<n^2$, we set $a=1,b= (M+n^2-1)^T$, and each $p_i$ to be the product of a subset $Q_i$ of $W$. For this to be a special case of Problem \ref{problem7}, it has to be that $\Pi q_i > b$. To ensure this, we first consider the case when $T \geq \sum_i |Q_i |$, where the solution can easily be seen to be $n$. For the case when $T < \sum_i |Q_i |$, setting $b=(M+n^2-1)^T$ means that $b < \Pi q_i$ since the product of any $T+1$ elements in $W$ is larger than $(M+n^2-1)^T$.  Now $x$ is divisible by at most $T$ different elements in $W$ simultaneously and we obtain the following problem as a special case.

\begin{problem}\label{problem8}
	Input: $Q_1,Q_2,...,Q_n \subseteq W, T$
	
	Output: $\max k$ such that $X \subseteq W, |X| \leq T, \sum_{i=1}^n I(Q_i \subseteq X)=k$
\end{problem}

Now consider a graph $G=(V,E)$ where $|V|=n$, and let $Q_i=\{w_{i1},w_{i2},...,w_{in}\}$, where $w_{ij} \in W$ and let $w_{ij}=w_{ji}$ if $(i,j) \in E$, $w_{ij}  \neq w_{ji}$ if $(i,j) \notin E$. Since $|W|=n^2$ there are enough elements to construct $Q_i$.  Given $k<n$, we set $T=nk-\frac{1}{2}k(k-1)$ and we obtain a special case of Problem \ref{problem8}.

\begin{problem}\label{problem9}
	On input $k$, decide whether there exists $X \in W$ s.t. $|X| \leq nk-\frac{1}{2}k(k-1), \sum_{i=1}^n I(Q_i \subseteq X)=k$
\end{problem}

Finally, we will prove that Problem \ref{problem9} is NP-hard, by a reduction to \textsc{$k$-Clique}.

\begin{lemma}
	The answer to Problem \ref{problem9} is yes if and only if Graph $G$ has a k-clique. 
\end{lemma}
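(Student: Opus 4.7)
The plan is to prove the equivalence by an inclusion--exclusion argument that exploits the combinatorial structure of the sets $Q_i$. The first step is to record the key structural fact: by construction, each $w \in W$ belongs to at most two of the $Q_i$'s, since any such $w$ equals $w_{ij}$ for a unique unordered pair $\{i,j\}$ (with $i=j$ allowed), and is therefore present only in $Q_i$ and $Q_j$. Consequently for distinct $i,j$ we have $|Q_i \cap Q_j| = 1$ when $(i,j)\in E$ and $0$ otherwise, while every triple intersection $Q_i \cap Q_j \cap Q_\ell$ with $|\{i,j,\ell\}|=3$ is empty.

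For the ``if'' direction, starting from a $k$-clique $S \subseteq V$, I would take $X := \bigcup_{i \in S} Q_i$. The empty-triple-intersection observation makes inclusion--exclusion exact at the pairwise level, yielding $|X| = \sum_{i \in S} |Q_i| - \sum_{\{i,j\}\subseteq S} |Q_i \cap Q_j| = nk - \binom{k}{2}$, because every one of the $\binom{k}{2}$ pairs inside $S$ is an edge and therefore contributes exactly one coincidence. Clearly $Q_i \subseteq X$ for each $i \in S$. To see that exactly $k$ (not more) sets satisfy $Q_i \subseteq X$, I would use the ``signature'' element $w_{jj} \in Q_j$, which by construction lies in no other $Q_\ell$; hence $w_{jj} \notin X$ whenever $j \notin S$, forcing $Q_j \not\subseteq X$.

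For the converse, given a feasible $X$ with witness index set $S := \{i : Q_i \subseteq X\}$ of size $k$, I would let $Y := \bigcup_{i \in S} Q_i \subseteq X$ and apply the same pairwise inclusion--exclusion, now obtaining $|Y| = nk - e(S)$, where $e(S)$ denotes the number of edges of $G$ with both endpoints in $S$. Since $Y \subseteq X$ and $|X| \leq nk - \binom{k}{2}$, this forces $e(S) \geq \binom{k}{2}$; combined with the trivial upper bound $e(S) \leq \binom{k}{2}$ for $|S|=k$, we conclude $e(S) = \binom{k}{2}$, so $S$ induces a complete graph and $G$ contains a $k$-clique.

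The main obstacle is isolating the triple-intersection-empty observation: without it, the inclusion--exclusion expansion for $|\bigcup_{i\in S} Q_i|$ would carry uncontrolled higher-order corrections and the identity $|\bigcup_{i\in S} Q_i| = nk - e(S)$ would fail. Once that observation is in hand, the cardinality of the union is an exact affine function of the edge count $e(S)$, which is precisely why the threshold $nk - \binom{k}{2}$ cleanly encodes $k$-cliques and completes the reduction from \textsc{$k$-Clique}.
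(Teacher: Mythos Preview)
Your proof is correct and follows the same approach as the paper: both arguments hinge on the identity $\bigl|\bigcup_{i\in S} Q_i\bigr| = nk - e(S)$, which forces $e(S)=\binom{k}{2}$ exactly when the size bound $nk-\binom{k}{2}$ is met. Your version is more detailed than the paper's---you explicitly justify that identity via the empty-triple-intersection observation and additionally verify (via the diagonal element $w_{jj}$) that the $X$ built from a $k$-clique covers \emph{exactly} $k$ of the $Q_i$, a point the paper leaves implicit.
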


\begin{proof}
	Note that the cardinality of the union of the $k$ subsets $Q_i$ equals to $nk$ minus the total number of edges in the $Q_i$'s corresponding subgraph of $G$. It is larger than $nk-\frac{1}{2}k(k-1)$ if the corresponding subgraph is not a clique, so $X$ can not cover them. 
\end{proof}

Since every problem was a special case of the previous one, this establishes Theorem \ref{thm:problem1hard}.

%

%
%
%
%
%

\subsection{The General Model}\label{sec:general}

In general, the valuation of a buyer for different numbers of copies is not necessarily a linear function of that number, so each buyer has a valuation vector $v_i$, such that $v_{i,j}$ represents buyer $i$'s valuation for receiving $j$ units of the good.  
\begin{equation}
u_i(p, x)=\begin{cases}
v_{i,x_i} - p \cdot x_i, & \text{if $p \cdot x_i \leq B_i$}.\\
- \infty, & \text{otherwise}.
\end{cases}
\end{equation}
The \emph{demand} of a buyer is defined as \[D_i = \{y: y \in \argmax_{x_i} u_i(p,x_i)\ \ , \ \ p\cdot y \leq B_i\},\] i.e. a set consisting of allocations that make the buyer maximally satisfied at the chosen price and respect its budget constraint. The definition of an \emph{envy-free pricing scheme} is the same as the one given in the Preliminaries section.

We will first prove that finding a social welfare-maximizing or a revenue-maximizing envy-free price and allocation is NP-hard by a reduction from \textsc{Subset-sum} and then we prove that both problems admit an FPTAS by using a variant of \textsc{Knapsack} called \textsc{Multi-choice Knapsack}. Note that since buyers have to specify their values $v_{i,j}$ for all possible number of units that they might receive, the input size to the problems here is $n$ and $m$ instead of $n$ and $\log m$, which was the input size for the linear multi-unit market. This means that an NP-hardness result for the linear case would not imply NP-hardness for the general case and an FPTAS for the general case does not imply an FPTAS for the linear case.

\begin{theorem} \label{thm:general_hard}
Given a multi-unit market with general valuations,
it is NP-hard to compute a revenue-maximizing or social welfare-maximizing envy-free price and allocation.
\end{theorem}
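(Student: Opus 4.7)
The plan is to reduce from \textsc{Subset-Sum}, which is NP-hard, to the decision versions of both the revenue- and welfare-maximizing envy-free pricing problems. Given a \textsc{Subset-Sum} instance $(a_1,\ldots,a_n,T)$ with $a_i \leq T$ for all $i$ and $\sum_i a_i > T$ (both assumptions are without loss of generality), I would build a multi-unit market with $n$ buyers and $m = T$ units, tailored so that at one critical price each buyer is forced to choose between receiving exactly $a_i$ units and receiving nothing. Fixing a constant $c > 0$ (e.g. $c = 1$), set $v_{i,a_i} = c \cdot a_i$, $v_{i,j} = 0$ for all other $j \in \{0,1,\ldots,m\}$, and $B_i = c \cdot a_i$.

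I would then analyze the demand set at different prices. At $p = c$, both $j = 0$ and $j = a_i$ yield utility $0$, while every intermediate $j \in (0, a_i)$ gives strictly negative utility and every $j > a_i$ is excluded by the budget, hence $D_i(c) = \{0, a_i\}$. For any $p < c$, the unique utility maximizer is $j = a_i$ and so the aggregate demand $\sum_i a_i > T$ violates envy-freeness. For any $p > c$, the budget restricts each buyer to $j < a_i$, where the valuation is zero, forcing $D_i(p) = \{0\}$ and yielding zero welfare and zero revenue. Consequently, the unique price at which a non-trivial envy-free outcome arises is $p = c$.

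At $p = c$, the envy-free allocations correspond exactly to subsets $S \subseteq [n]$ with $\sum_{i \in S} a_i \leq T$, where each $i \in S$ receives $a_i$ units and every other buyer receives $0$. Both the welfare and the revenue of such an allocation equal $c \cdot \sum_{i \in S} a_i$, so maximizing either objective reduces to computing $\max\{\sum_{i \in S} a_i : S \subseteq [n], \sum_{i \in S} a_i \leq T\}$. The optimum hits $cT$ if and only if the original \textsc{Subset-Sum} instance admits a subset summing to exactly $T$; a polynomial-time algorithm for maximizing either envy-free objective would therefore solve \textsc{Subset-Sum}, yielding NP-hardness of both problems.

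The main technical step is the critical-price characterization above, in particular verifying that every price in $(0,c) \cup (c,\infty)$ fails to give a positive envy-free objective; this uses the precise alignment $B_i = c \cdot a_i$ together with the placement of the only nonzero valuation at coordinate $a_i$ to force the sharp transition in each demand set (over-demand below $c$, zero demand above). Because the general multi-unit input already contains a length-$m$ valuation vector per buyer, the instance produced with $m = T$ is polynomially sized in the multi-unit input, matching the input-size regime the paper explicitly contrasts with the linear case, and the same construction simultaneously handles welfare and revenue.
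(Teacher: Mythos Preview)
Your proof is correct and follows essentially the same route as the paper: a reduction from \textsc{Subset-Sum} with $m=T$ units, one buyer per integer, budget $B_i$ proportional to $a_i$, and a valuation placed so that the only non-trivial envy-free price is the critical value $c$, at which each buyer's demand set is exactly $\{0,a_i\}$; both objectives then collapse to the subset-sum optimization. The one small difference is the shape of the valuation: you use a spike ($v_{i,j}=c\,a_i$ only at $j=a_i$), whereas the paper uses a monotone step ($v_i(j)=s_i$ for all $j\ge s_i$). Your analysis goes through unchanged with the step version, and using it would give hardness even for monotone (free-disposal) valuations, which is a marginally stronger statement; otherwise the arguments are interchangeable.
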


\begin{proof}
We construct a reduction from the NP-complete problem \textsc{Subset-Sum}, which is defined next:
\begin{quote}
\emph{Given a universe of positive integers $\mathcal{U} = \{s_1, \ldots, s_n\}$ and an integer $K$, determine whether there exists a subset $S \subseteq U$ that sums up to exactly $K$.}
\end{quote}
Given an input $\langle \mathcal{U}, K \rangle$ to \textsc{Subset-Sum}, we construct a market as follows. Let $N = \{1, \ldots, n\}$ be the set of buyers and $m = K$ the number of units.
Set the
budget of each buyer $i$ to $b_i = s_i$ and the value of $i$ for $j$ units of the good to:
\[
\label{eq1}
v_{i}(j) = 
\left\{
	\begin{array}{ll}
		s_i & \mbox{if } j \geq s_i \\
		0 & \mbox{if } j < s_i
	\end{array}
\right.
\]
We show that a revenue of at least $K$ can be obtained via an envy-free pricing in this market if and only if the \textsc{Subset-Sum} problem has a solution.

Suppose there is a price $p$ and allocation $\vec{x}$ such that $(p, \vec{x})$ is envy-free and the revenue attained is at least $K$. Since each buyer $i$ gets an optimal bundle, we have: $u_i(x_i, p) = v_i(x_i) - p \cdot x_i \geq 0$. There are three cases:
\begin{enumerate}[1).]
\item $x_i = s_i$. Then $s_i - p \cdot s_i \geq 0$, and so $p \leq 1$.
\item $x_i < s_i$. Then $v_i(x_i) = 0$, so $u_i(x_i, p) = 0 - p \cdot x_i \geq 0$, which means that either $p = 0$ or $x_i = 0$. Since $p > 0$ at any positive revenue, it must be that $x_i = 0$.
\item $x_i > s_i$. Then $v_i(x_i) = s_i$, so $u_i(x_i, p) = s_i - p \cdot x_i \geq 0$, so $p \leq s_i/x_i < 1$. Since the revenue is at least $K$ and there are exactly $K$ units, it cannot be that $p < 1$, and so this case can never occur.
\end{enumerate}
From cases $(1-3)$ it follows that each buyer $i$ gets either $s_i$ or zero units and that the price is at most $1$. Since the revenue is at least $K$ and attained from selling $K$ units, we get that in fact $p = 1$. Let $\mathcal{I} = \{i \in N \; | \; x_{i} > 0\}$ be the set of buyers that get a non-zero allocation. Then $Rev(p, \vec{x}) = \sum_{i \in \mathcal{I}} x_{i} \cdot p = \sum_{i \in \mathcal{I}} s_i = K$, which implies that $\mathcal{I}$ is a solution to the \textsc{Subset-Sum} problem.

For the other direction, let $\mathcal{I}$ be a solution to \textsc{Subset-Sum}. Set the price to $p = 1$ and the allocation $\vec{x}$ to
$x_{i} = s_i$. It can be checked that $(p, \vec{x})$ is an envy-free pricing with revenue exactly $K$.

The reduction for social welfare is almost identical and exploits the fact that the equilibrium price achieving the target social welfare must, again, be equal to $1$ (if it exists).
\end{proof}

\begin{theorem} \label{thm:general_fptas}
Given a multi-unit market with general valuations, there exists an FPTAS for the problem of computing a welfare or revenue maximizing envy-free price.
\end{theorem}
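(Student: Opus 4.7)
The plan is to enumerate a polynomial-size set of candidate prices and, at each one, reduce the problem of computing the best envy-free allocation to a Multi-choice Knapsack (MCKP) instance, for which an FPTAS is classically known. The overall algorithm runs the MCKP FPTAS on every candidate price and returns the best of the resulting solutions.

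First I identify the candidate prices. For a buyer $i$, the demand set $D_i(p)$ changes as $p$ varies only at breakpoints of two kinds: $p = B_i/j$ for some $j \in [m]$, where the affordability of bundle size $j$ toggles; or $p = (v_{i,j}-v_{i,j'})/(j-j')$ for $j\neq j'$, where the utilities $v_{i,j}-pj$ and $v_{i,j'}-pj'$ coincide and an indifference arises. There are $O(nm^2)$ such breakpoints in total. Between any two consecutive breakpoints every $D_i(p)$ is constant, so I enumerate the $O(nm^2)$ breakpoints together with one representative price per open interval. For revenue the representative is taken at the supremum of the interval, since with demands fixed the objective $p\sum_i x_i$ is monotone in $p$; for welfare any representative is equally good since the welfare depends only on the allocation.

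Next, given a candidate price $p$, I compute every $D_i(p)$ in $O(nm)$ time and cast the problem of finding an optimal envy-free allocation at $p$ as an MCKP: each buyer is a class whose items are the elements $j \in D_i(p)$, with weight $j$ and value $v_{i,j}$ (for welfare) or $p\cdot j$ (for revenue); the capacity is $m$ and exactly one item must be chosen per class. Feasibility is decided by selecting $\min D_i(p)$ for every buyer and checking whether the total weight is at most $m$; prices for which no envy-free allocation exists are simply discarded. The classical value-scaling plus dynamic-programming FPTAS for MCKP (see, e.g., Kellerer, Pferschy, and Pisinger) then produces a $(1-\epsilon)$-approximate allocation in time polynomial in $n$, $m$, and $1/\epsilon$.

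The main step to verify is that some enumerated candidate price actually witnesses the optimum. Let $(p^*,\vec{x}^*)$ be an optimal envy-free pricing and let $I$ be the maximal interval containing $p^*$ on which all $D_i(\cdot)$ are constant; by construction the endpoints of $I$ are breakpoints. If $p^*$ lies in the interior of $I$, then the enumerated representative of $I$ has demand sets identical to those at $p^*$ and therefore supports $\vec{x}^*$ with the same objective value; if $p^*$ is itself a breakpoint then it appears explicitly in the enumeration. In either case the MCKP instance at the associated candidate price has optimum at least equal to that of $(p^*,\vec{x}^*)$, so invoking the MCKP FPTAS and taking the best solution over all $O(nm^2)$ candidates yields the required $(1-\epsilon)$-approximation for both the welfare and revenue objectives, in total time polynomial in $n$, $m$, and $1/\epsilon$.
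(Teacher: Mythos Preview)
Your proposal is correct and follows essentially the same approach as the paper: enumerate a polynomial set of candidate prices, and at each price reduce the allocation problem to a Multi-Choice Knapsack instance solved by a standard FPTAS. In fact your write-up is more careful than the paper's sketch on two points: you explicitly include the indifference breakpoints $(v_{i,j}-v_{i,j'})/(j-j')$ in the candidate price set (giving $O(nm^2)$ prices, which the paper only alludes to), and you correctly restrict the items in class $i$ to the demand set $D_i(p)$ rather than to all bundles with nonnegative utility. One small wrinkle worth tightening: for revenue you take the ``representative'' at the supremum of an open interval, which is a breakpoint where the demand sets can be strictly larger than in the interior; the argument still goes through because $D_i(b)\supseteq D_i(p)$ for $p$ just below $b$ (feasibility is closed and utilities are continuous from the left), so $\vec{x}^*$ remains envy-free at $b$ with weakly higher revenue---but you should state this containment rather than claim the demand sets are ``identical.''
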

\begin{proof} (sketch)
The main idea is that for general valuations, the hardness no longer comes from guessing the price---in fact we can freely iterate over the set of candidate prices and the optimal solution is still guaranteed in this set for the same reason as that in the case of general valuations. Rather, the hardness comes from selecting the set of buyers to be allocated at a given price. 

Thus if we had a black box that produced an approximate solution efficiently for each fixed price, then the problem would be solved. This black box will be the FPTAS for the 0-1 \textsc{Multi-Choice Knapsack Problem}; there are several such algorithms, such as the one due to Lawler~\cite{Lawler79}, which for $n$ items and precision $\epsilon > 0$, runs in time $O(n \log{1/\epsilon} + 1/\epsilon^4)$.

\medskip

The 0-1 \textsc{Multi-Choice Knapsack Problem} is as follows:
\begin{quote}
We are given $m$ classes $C_1$, $C_2$, . . . , $C_m$ of items to pack in
some knapsack of capacity $C$. Each item $j \in C_i$ has a value $p_{i,j}$ and a weight $w_{i,j}$. The problem
is to choose at most one item from each class such that the value sum is maximized without the
weight sum exceeding $C$.
\end{quote}

Then our algorithm for maximizing revenue or welfare in a market with general valuations is as follows. For each price $p \in \mathcal{P}$, where $\mathcal{P}$ is the set of candidate prices, construct a 0-1 \textsc{Multi-Choice Knapsack} instance, such that the $i$th category contains an item for every quantity (i.e. \# of units) that can be sold to player $i$. 
That is, we create an element in category $C_i$ for every possible integer $y \in \{0, \ldots, m\}$ with the property that player $i$'s utility is non-negative should he be allocated exactly $y$ units. Note that some numbers of units will be missing from $C_i$, exactly at those sizes at which $i$'s utility would be negative. 
\begin{itemize}
\item For the objective of maximizing revenue, set the weight of the item corresponding to $y$ equal to $y$ and its value equal to $p \cdot y$.
\item For the objective of maximizing social welfare, set the weight of each item corresponding to $y$ equal to $y$ and its value to $v_{i,y}$.
\end{itemize}
Then set the total volume constraint to $m$ and the total value (target $k$).

Then it can be seen that a value of $K$ can be obtained in the 0-1 \textsc{Multi-Choice Knapsack} problem if and only if a target revenue (or welfare respectively) of $K$ can be obtained in the market. Note that if a knapsack solution omits taking an element from some class $C_i$, we can obtain an equivalent solution that is still feasible where the item with zero value and zero weight from class $C_i$ is included on top of all the other elements that are part of the solution.

Since we have an FPTAS for solving the 0-1 \textsc{Multi-Choice Knapsack} instance, we can just this algorithm as a subroutine $n \times m$ times, which makes the total runtime still polynomial in the market size and $1/\epsilon$.
\end{proof}

\end{document}